\renewcommand{\le}{\leqslant}
\renewcommand{\ge}{\geqslant}
\newcommand{\eps}{\varepsilon}
\newcommand{\emp}{\emptyset}
\newcommand{\Sig}{\Sigma}
\newcommand{\sig}{\sigma}
\newcommand{\noin}{\noindent}
\newcommand{\ur}{uniquely reachable}
\newcommand{\bi}{\begin{itemize}}
\newcommand{\ei}{\end{itemize}}
\newcommand{\be}{\begin{enumerate}}
\newcommand{\ee}{\end{enumerate}}
\newcommand{\bd}{\begin{description}}
\newcommand{\ed}{\end{description}}
\newcommand{\bq}{\begin{quote}}
\newcommand{\eq}{\end{quote}}
\newcommand{\ie}{\mbox{\it i.e.}}
\newcommand{\tid}{\mbox{{\bf 1}}}
\newcommand{\cD}{{\mathcal D}}
\newcommand{\cW}{{\mathcal W}}
\newcommand{\one}{{\mathbf 1}}
\newcommand{\Lra}{{\hspace{.1cm}\Leftrightarrow\hspace{.1cm}}}
\newcommand{\raL}{{\mathbin{\sim_L}}}
\newcommand{\lraL}{{\mathbin{\approx_L}}}
\def\shu{\mathbin{\mathchoice
{\rule{.3pt}{1ex}\rule{.3em}{.3pt}\rule{.3pt}{1ex}
\rule{.3em}{.3pt}\rule{.3pt}{1ex}}
{\rule{.3pt}{1ex}\rule{.3em}{.3pt}\rule{.3pt}{1ex}
\rule{.3em}{.3pt}\rule{.3pt}{1ex}}
{\rule{.2pt}{.7ex}\rule{.2em}{.2pt}\rule{.2pt}{.7ex}
\rule{.2em}{.2pt}\rule{.2pt}{.7ex}}
{\rule{.3pt}{1ex}\rule{.3em}{.3pt}\rule{.3pt}{1ex}
\rule{.3em}{.3pt}\rule{.3pt}{1ex}}\mkern2mu}}
\title{Syntactic complexity of regular ideals}
\author{Janusz~Brzozowski\inst{1} \and Marek Szyku{\l}a\inst{2} and Yuli Ye\inst{3}}
\titlerunning{Syntactic Complexity of Regular Ideals}
\authorrunning{J. Brzozowski and M. Szyku{\l}a and Y. Ye}   
\institute{David R. Cheriton School of Computer Science, University of Waterloo, \\
Waterloo, ON, Canada N2L 3G1\\
\{{\tt brzozo@uwaterloo.ca}\}
\and
Institute of Computer Science, University of Wroc{\l}aw,\\
Joliot-Curie 15, PL-50-383 Wroc{\l}aw, Poland\\
\{{\tt msz@cs.uni.wroc.pl}\}
\and
Department of Computer Science, University of Toronto,\\
Toronto, ON,  Canada M5S 3G4\\
\{ { \tt yuli.ye@gmail.com} \}\\
Yuli Ye's present address: \\ Wish.com, San Francisco CA, 94111, USA
}
\begin{document}
\maketitle
\begin{abstract}
The state complexity of a regular language is the number of states in a minimal deterministic finite automaton accepting the language.
The syntactic complexity of a regular language is the cardinality of its syntactic semigroup.
The syntactic complexity of a subclass of regular languages is the worst-case syntactic complexity taken as a function of the state complexity $n$ of languages in that class.
We prove that $n^{n-1}$, $n^{n-1}+n-1$, and $n^{n-2}+(n-2)2^{n-2}+1$ are tight upper bounds on the syntactic complexities of right ideals and prefix-closed languages, left ideals and suffix-closed languages, and two-sided ideals and factor-closed languages, respectively.
Moreover, we show that the transition semigroups meeting the upper bounds for all three types of ideals are unique, and the numbers of generators (4, 5, and 6, respectively) cannot be reduced.
\medskip

\noin
{\bf Keywords:}
factor-closed, left ideal, prefix-closed, regular language, right ideal, suffix-closed, syntactic complexity, transition semigroup, two-sided ideal, upper bound
\end{abstract}

\section{Introduction}

Formal definitions of the concepts introduced in this section are given in Section~\ref{sec:prelim}.
For now we assume that the reader is familiar with basic properties of regular languages and finite automata as covered in~\cite{Per90,Yu97}, for example.

There are two fundamental congruence relations in the theory of regular languages: the Nerode (right) congruence~\cite{Ner58}, and the Myhill congruence~\cite{Myh57}. In both cases, a language is regular if and only if it is a union of congruence classes of a congruence of finite index.
The Nerode congruence leads to the definitions of left quotients of a language and the minimal deterministic finite automaton (DFA) recognizing the language, and the Myhill congruence, to the definitions of the syntactic semigroup of the language.

The \emph{state complexity} of a language is the number of states in a minimal DFA recognizing the language. This concept has been studied extensively; for surveys and references see~\cite{Brz10,Yu01}.
The \emph{syntactic complexity} of a regular language is the cardinality of its \emph{syntactic semigroup}, which is isomorphic to the \emph{transition semigroup} of a minimal DFA recognizing the language, where the transition semigroup is the semigroup of transformations of the set of states of the DFA 
induced by non-empty words.

Syntactic complexity does not refine state complexity, for there exist languages with the same syntactic complexity but different state complexities. However, it ofter helps to distinguish among languages with the same state complexity.
For example, the DFAs in Fig.~\ref{fig:3automata} all have the same alphabet, are all minimal, and all have state complexity three. However, the syntactic complexity of $\cD_1$ is 3, that of $\cD_2$ is 9, and that of $\cD_3$ is 27. 

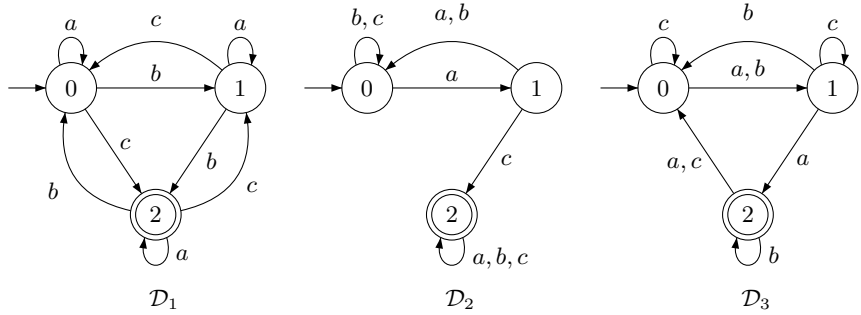
\begin{figure}[hbt]
\unitlength 8pt\footnotesize
\begin{center}\begin{picture}(38,14)(0,-6)
\gasset{Nh=2.4,Nw=2.4,Nmr=1.2,ELdist=0.35,loopdiam=1.2}

\node[Nframe=n](D1)(6.4,-6){$\cD_1 $}

\node(01)(2,4){$0$}\imark(01)
\node(11)(10,4){$1$}
\node(21)(6,-2){$2$}\rmark(21)

\drawedge(01,11){$b$}
\drawedge(01,21){$c$}
\drawedge(11,21){$b$}

\drawedge[curvedepth=-2.2,ELdist=-1.2](11,01){$c$}
\drawedge[curvedepth=-2.2,ELdist=-1.2](21,11){$c$}
\drawedge[curvedepth= 2.2,ELdist=.8](21,01){$b$}

\drawloop(01){$a$}
\drawloop(11){$a$}

\drawloop[loopangle=270,ELpos=25](21){$a$}

\node[Nframe=n](D2)(20.4,-6){$\cD_2$}
\node(02)(16,4){$0$}\imark(02)
\node(12)(24,4){$1$}
\node(22)(20,-2){$2$}\rmark(22)

\drawedge(02,12){$a$}
\drawedge(12,22){$c$}

\drawedge[curvedepth=-2.2,ELdist=-1.8](12,02){$a,b$}

\drawloop(02){$b,c$}
\drawloop[loopangle=270,ELpos=25](22){$a,b,c$}

\node[Nframe=n](D3)(34.4,-6){$\cD_3 $}
\node(03)(30,4){$0$}\imark(03)
\node(13)(38,4){$1$}
\node(23)(34,-2){$2$}\rmark(23)

\drawedge(03,13){$a,b$}
\drawedge(13,23){$a$}
\drawedge(23,03){$a,c$}

\drawedge[curvedepth=-2.2,ELdist=-1.8](13,03){$b$}

\drawloop(03){$c$}
\drawloop(13){$c$}
\drawloop[loopangle=270,ELpos=25](23){$b$}
\end{picture}\end{center}
\caption{DFAs with various syntactic complexities.}\label{fig:3automata}
\end{figure}

The problem we study in this paper is the following: Given a language belonging to a subclass of the class of regular languages -- for example, the subclass of finite languages or prefix-free languages (prefix-codes) -- what is the maximal size of the syntactic semigroup of that language? 
Equivalently, given a minimal DFA of a language in the subclass, what is the maximal size of the transition semigroup of the DFA?
A secondary problem is to find a minimal set of generators for the maximal semigroup.

Syntactic complexity has been studied in several subclasses of regular languages other than ideals: 
prefix-, suffix-, bifix-, and factor-free languages~\cite{BLY12,BrSz15};
star-free languages~\cite{BLL12,BrSz14b};
$R$- and $J$-trivial languages~\cite{BrLi15}; finite/cofinite and reverse definite languages~\cite{BLL12}.
This problem can be quite challenging, depending on the subclass; in the present case it is easy for right ideals but much more difficult for left- and two-sided ideals (defined below).

As syntactic complexity bounds the maximal size of the transition semigroup, it provides a natural bound on the time and space complexities of algorithms dealing with transition semigroups. For example, a simple algorithm determining whether the language of a given DFA is star-free \cite{McSe71} -- meaning that it can be generated from finite languages by using only Boolean operations and product (concatenation), but not star; equivalently, its syntactic semigroup is group-free, that is, has no non-trivial subgroups -- requires the enumeration of all transformations and checking whether they do not contain non-trivial cycles.

Maximal transition semigroups also play an important role in the study of \emph{most complex} languages~\cite{Brz13} belonging to a given subclass.
These are languages that meet all the upper bounds on the state complexities of Boolean operations, product, star, and reversal, have maximal syntactic semigroups and most complex atoms~\cite{BrTa14}.

In contrast to the \emph{syntactic monoid} of the language, the syntactic semigroup may or may not contain the neutral element (the identity transformation). The presence of letters acting as identity is often important in the case of state complexity of binary operations.
Moreover, the syntactic semigroup is more suitable to characterize some classes of languages, which have a description in terms of semigroups.
For example, in the class of (co)finite languages all transformations must admit a certain linear order of the states \cite{CCSY01}, and the identity transformation cannot be present; the latter condition would not be distinguished by the syntactic monoid.

In this paper we study the syntactic complexities of \emph{right ideals} (satisfying the equation $L=L\Sigma^*$), \emph{left ideals} (satisfying $L=\Sigma^* L$), and 
\emph{two-sided ideals} 
(satisfying $L=\Sigma^* L \Sigma^*$).
Ideals are fundamental objects in semigroup theory. 
They appear in the theoretical computer science literature 
in 1965~\cite{PaPe65} 
and continue to be of interest. 
Ideal languages are special cases of \emph{convex languages} (see e.g.~\cite{BSX10}), and they are complements 
of prefix-, suffix-, factor-, and subword-closed languages.
Besides being of theoretical interest, ideals also play a role in algorithms for pattern matching. 
For this application, a \emph{text} is represented by a word $w$ over some alphabet $\Sig$. 
A \emph{pattern} is language $L$ over $\Sig$.
An occurrence of a pattern represented by $L$ in text $w$ is a triple $(u,x,v)$ such that $w=uxv$ and $x$ is in~$L$.
Searching text $w$ for words in $L$ is equivalent to looking for prefixes of $w$ that belong to the language $\Sig^*L$, which is the left ideal generated by $L$, or looking for factors of $w$ that belong to $\Sig^*L \Sig^*$.

The state complexity of operations on the classes of ideal languages was studied by Brzozowski, Jir\'askov\'a and Li~\cite{BJL13}.
The same problem for the classes of 
prefix-, suffix-, factor-, and subword-closed languages was studied by Han and K.~Salomaa~\cite{HaSa09}, Han, K.~Salomaa, and Wood~\cite{HSW09}, and Brzozowski, Jir\'askov\'a and Zou~\cite{BJZ14}.
We refer the reader to these papers for a discussion of past work on this topic and additional references.

The set of all $n^n$ transformations of a set $Q_n$ of $n$ elements is a monoid under composition of transformations, with identity as the unit element. 
In 1970, Maslov~\cite{Mas70} dealt with the generators of the semigroup of all transformations in the setting of finite automata. 
Holzer and K\"onig~\cite{HoKo04}, and independently Krawetz, Lawrence, and Shallit~\cite{KLS05b} studied the syntactic complexity of unary and binary regular languages.
Recently, syntactic complexity has been studied in several subclasses of regular languages other than ideals: 
prefix-, suffix-, bifix-, and factor-free languages~\cite{BLY12,BrSz15};
star-free languages~\cite{BLL12,BrSz14b};
$R$- and $J$-trivial languages~\cite{BrLi15}.

We define our terminology and notation in Section~\ref{sec:prelim}, and give some basic properties of syntactic complexity in Section~\ref{sec:basic}. 
The syntactic complexities of right, left, and two-sided ideals are treated in Sections~\ref{sec:right}--\ref{sec:2sided}, and Section~\ref{sec:conclusions} concludes the paper.
As mentioned above, closed languages are complements of ideal languages.
Since syntactic complexity is preserved under complementation, our proofs are for ideals only.
The syntactic complexity of all-sided ideals remains open.

In the proof for the upper bounds for left and two-sided ideals we use the method of injective function, which is generally applicable for other subclasses of regular languages (see \cite{BrSz15} for suffix-free and~\cite{SzWi16} for bifix-free languages).
The proofs presented here are the first that apply this method to syntactic complexity.

A part of the results in this paper previously appeared in conference proceedings:
In 2011 in~\cite{BrYe11} syntactic complexity of right ideals was established and lower bounds for the classes of left and two-sided ideals were presented.
In 2014 in~\cite{BrSz14a} incomplete proofs of the upper bounds for syntactic complexity of left and two-sided ideals were presented.

\section{Preliminaries}\label{sec:prelim}

If $\Sig$ is a non-empty finite alphabet, then $\Sig^*$ is the free monoid generated by $\Sig$, and $\Sig^+$ is the free semigroup generated by $\Sig$.  A \emph{word} is any element of $\Sig^*$, and the empty word is $\eps$. The length of a word $w\in \Sig^*$ is $|w|$.
A \emph{language} over $\Sig$ is any subset of $\Sig^*$. 

If $w=uxv$ for some $u,v,x\in\Sigma^*$, then $u$ is a {\em prefix\/} of $w$, $v$ is a {\em suffix\/} of $w$, and $x$ is a {\em factor\/} of $w$.
A prefix or suffix of $w$ is also a factor of $w$.
If $w=u_1v_1u_2v_2\cdots u_kv_ku_{k+1}$, where the $u_i$ and $v_i$ are in $\Sig^*$, then
$v_1v_2\cdots v_k$ is a \emph{subword} of $w$.
A~language $L$ is {\it prefix-closed\/} if $w\in L$ implies that
every prefix of $w$ is also in~$L$. 
In an analogous way, we define \emph{suffix-closed}, \emph{factor-closed}, and \emph{subword-closed}. 
We refer to all four types as \emph{closed languages}.

The \emph{shuffle $u\shu v$ of two words} $u,v\in\Sig^*$ is defined as follows:
$$ u\shu v= \{u_1v_1\cdots u_kv_k \mid u=u_1\cdots u_k,  v=v_1\cdots v_k,
u_1,\ldots,u_k,v_1,\ldots, v_k\in \Sig^*\}.$$
The \emph{shuffle of two languages} $K$ and $L$ is defined by
$$K\shu L=\bigcup_{u\in K, v\in L} u\shu v.$$

A~language $L\subseteq\Sig^*$ is a \emph{right ideal} (respectively, \emph{left ideal}, \emph{two-sided ideal}, \emph{all-sided ideal}) if it is non-empty and satisfies $L=L\Sig^*$ (respectively, $L=\Sig^*L$, $L=\Sig^*L\Sig^*$,
$L=\Sig^*\shu L$).
We refer to all four of these types as \emph{ideal languages} or simply \emph{ideals}.

A \emph{transformation} of a set $Q_n$ of $n$ elements is a mapping of $Q_n$ \emph{into} itself, whereas a \emph{permutation}
of $Q_n$ is a mapping of $Q_n$ \emph{onto} itself.
In this paper we consider only transformations of finite sets, and we assume
without loss of generality that $Q_n=\{0,1,\ldots, n-1\}$.
An arbitrary transformation has the form
$$
t=\left( \begin{array}{ccccc}
0 & 1 &   \cdots &  n-2 & n-1 \\
q_0 & q_1 &   \cdots &  q_{n-2} & q_{n-1}
\end{array} \right ),
$$
where $q_k \in Q_n$ for $0\le k\le n-1$.
The image of element $q$ under transformation $t$ is denoted by $qt$.
The \emph{identity} transformation $\tid$ maps each element to itself.
For $k\ge 2$, a transformation (permutation) $s$ of a set $P=\{p_0,p_1,\ldots,p_{k-1}\} \subseteq Q_n $ is a \emph{$k$-cycle}
if $p_0s=p_1, p_1s=p_2,\ldots,p_{k-2}s=p_{k-1},p_{k-1}s=p_0$.
If a transformation $t$ on $Q_n$ acts on $P \subseteq Q_n$ like a $k$-cycle then $t$ is said to \emph{have a $k$-cycle}.
A $k$-cycle is denoted by $(p_0,p_1,\ldots,p_{k-1})$ when it is viewed as a transformation of $P$.
If $t$ is a transformation of $Q_n$, has a $k$-cycle $(p_0,p_1,\ldots,p_{k-1})$ of $P$, and acts as identity on $Q_n\setminus P$, then we denote $t$ also by $(p_0,p_1,\ldots,p_{k-1})$.
A~2-cycle $(p_0,p_1)$ is called a \emph{transposition}.
A transformation is \emph{constant} if it maps all states to a single state $q$; it is denoted by $(Q\to q)$.
A transformation that maps a single state $p$ to $q$ and keeps $Q \setminus \{p\}$ unchanged is denoted by $(p \to q)$.
If $w$ is a word of $\Sig^*$, the fact that $w$ induces transformation $t$ is denoted by 
$w\colon t$.
A~transformation mapping $p$ to $q_p$ for $p=0, \dots, n-1$ is sometimes denoted by
$[q_0, \dots,q_{n-1}]$.

The following facts are well-known \cite{Pic38,Sie35}:

\begin{proposition}\label{prop:piccard}
The complete transformation monoid $T_n$ of size $n^n$ can be generated by any cyclic
permutation of $n$ elements together with a transposition and a singular (non-invertible) transformation $r=(n-1 \to 0)$ of rank (image size) $n-1$. In particular, $T_n$ can be generated by $(0,1,\ldots,n-1)$, $(0,1)$ and $(n-1 \to 0)$. Moreover, $T_n$ cannot be generated by fewer than three generators for $n \ge 3$.
\end{proposition}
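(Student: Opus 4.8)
The statement to prove is Proposition~\ref{prop:piccard}, a classical result usually attributed to Piccard and Sierpi\'nski. Here is how I would organize a proof.

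\medskip

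\textbf{Plan.} The proof splits into two parts: (i) the three listed transformations generate all of $T_n$, and (ii) no two transformations suffice when $n \ge 3$. For part (i), the key observation is that the symmetric group $S_n$ is generated by a single $n$-cycle together with a transposition of two adjacent elements in the cycle --- this is a standard fact, proved by conjugating the transposition $(0,1)$ by powers of the $n$-cycle $c = (0,1,\ldots,n-1)$ to obtain all adjacent transpositions $(i,i+1)$, and then noting that adjacent transpositions generate $S_n$. So from the first two generators we already have every permutation. It then remains to show that $S_n$ together with one non-invertible map $r = (n-1 \to 0)$ of rank $n-1$ generates all of $T_n$. I would argue by downward induction on rank: every transformation of rank $n$ is a permutation, hence already available; and given that all transformations of rank $\ge k+1$ are available, an arbitrary transformation $t$ of rank $k \le n-1$ can be written as $t = p_1 \, r \, p_2 \, t'$ (or more cleanly, factor $t$ through one application of a rank-$(n-1)$ ``merge'' map conjugated into the right position by permutations). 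Concretely: since $t$ has rank $\le n-1$, there exist two states $i \ne j$ with $it = jt$; pick a permutation $\pi$ with $(n-1)\pi = i$, $0\pi = j$ reversed appropriately, so that $\pi^{-1} r \pi$ is the map merging $i$ into $j$ and fixing everything else; then $t$ factors as $t = (\pi^{-1} r \pi) \, s$ where $s$ has the same image as $t$ but one fewer ``collision,'' so $s$ can be taken of strictly larger rank --- actually of rank $k+1$ --- and we invoke the induction hypothesis. Care is needed to set this up so that the rank genuinely increases; the standard trick is to split off the collisions one at a time.

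\medskip

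\textbf{The main obstacle.} The delicate step is the inductive factorization $t = (\text{rank-}(n-1)\text{ merge}) \circ (\text{higher-rank map})$: one must check that the ``residual'' map $s$ really can be chosen of rank exactly $k+1$ rather than possibly smaller, and that the permutations used to position the merge map are genuinely available (they are, since $S_n \subseteq \langle c, (0,1)\rangle$). A clean way to handle this is: if $t$ has rank $k \le n-1$, choose a state $q_0$ that is \emph{not} in the image of $t$, and a state $p_0$ whose preimage $p_0 t^{-1}$ has size $\ge 2$; pick $p_1 \in p_0 t^{-1}$ with $p_1 \ne p_0$ if possible (otherwise just any element of the preimage). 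Define $s$ to agree with $t$ everywhere except $s(p_1) = q_0$; then $\mathrm{rank}(s) = k+1$ and $t = m \circ s$ where $m = (q_0 \to p_0)$ is a rank-$(n-1)$ merge, and $m = \pi^{-1} r \pi$ for a suitable permutation $\pi$. By induction $s \in \langle c, (0,1), r\rangle$, hence so is $t$. The base case rank $= n$ is the symmetric-group generation already established.

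\medskip

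\textbf{Part (ii): minimality.} For the lower bound of three generators when $n \ge 3$, suppose $T_n = \langle a, b \rangle$. If both $a$ and $b$ are permutations, the generated set is contained in $S_n \subsetneq T_n$, contradiction; so at least one, say $b$, is non-invertible, i.e.\ $\mathrm{rank}(b) \le n-1$. Then every word ending in $b$ has rank $\le n-1$, and every word consisting only of powers of $a$ is a permutation of $a$; to get a non-permutation one must use $b$. Now observe that the set of \emph{permutations} in $\langle a,b\rangle$ --- more precisely, the subgroup generated by whichever of $a,b$ are permutations --- must be all of $S_n$ (since $S_n \subseteq T_n = \langle a,b\rangle$ and a product of generators is a permutation only if\ldots this needs a short argument). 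If $a$ is a permutation and $b$ is not, then $\langle a \rangle$ is cyclic, so $\langle a\rangle \subsetneq S_n$ for $n \ge 3$ (as $S_n$ is non-cyclic for $n\ge 3$); but the permutations obtainable as products involving $b$ have rank $\le n-1 < n$, so they are never permutations --- hence every permutation in $T_n$ lies in $\langle a\rangle$, forcing $S_n = \langle a\rangle$, a contradiction. The remaining case, both $a,b$ non-invertible, is immediate since then no word is the identity permutation (every nonempty product has rank $\le n-1$), yet $\tid \in T_n$. This settles minimality. I expect the rank-induction in part (i) to be the only step requiring genuine care; everything else is bookkeeping with permutations.
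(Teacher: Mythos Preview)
The paper does not prove this proposition; it is stated as a well-known result with references to Piccard and Sierpi\'nski, so there is no ``paper's own proof'' to compare against. Your argument is the standard one and is essentially correct.

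Two small points are worth tightening. First, in your rank-induction factorization you write $t = m \circ s$ with $m = (q_0 \to p_0)$ and $s$ obtained from $t$ by redirecting $p_1$ to $q_0$; with the right-action convention used throughout the paper this should read $t = s\,m$ (apply $s$ first, then $m$): for $q \ne p_1$ one has $(qs)m = (qt)m = qt$ since $qt \ne q_0$, and $(p_1 s)m = q_0 m = p_0 = p_1 t$. The idea is right; only the order symbol is off. Second, in the minimality argument, the case where both generators are singular should not appeal to the identity: if $T_n$ is viewed as a monoid then the identity comes for free. The correct observation is that every \emph{nonempty} word in two singular generators has rank at most $n-1$, so no non-identity permutation is ever produced, whereas $T_n$ contains such permutations for $n \ge 2$. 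With these cosmetic fixes your proof is complete.
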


The \emph{left quotient}, or simply \emph{quotient,} of a language $L$ by a word $w$ is the language $w^{-1}L = \{x\in \Sig^*\mid wx\in L \}$. 
An equivalence relation $\sim$ on $\Sig^*$ is a \emph{left congruence} if, for all $x,y\in \Sig^*$, 
$x\sim y {\Lra} ux \sim uy, \mbox{ for all } u\in\Sig^*.$
It is a \emph{right congruence} if, for all $x,y\in \Sig^*$,
$x\sim y {\Lra} xv \sim yv, \mbox{ for all } v\in\Sig^*.$
It is a \emph{congruence} if it is both a left and a right congruence. 
Equivalently, $\sim$ is a congruence if 
$x\sim y {\Lra} uxv \sim uyv, \mbox{ for all } u,v\in\Sig^*.$

For any language $L\subseteq \Sig^*$, define the \emph{Nerode (right) congruence}~\cite{Ner58} $\raL$ of $L$ by
\begin{equation}
x \,{\raL} \, y \mbox{ if and only if } xv\in L  \Leftrightarrow yv\in L, \mbox { for all } v\in\Sig^*.
\end{equation}
Evidently, $x^{-1}L = y^{-1} L $ if and only if $x \, \raL \,  y$.
Thus, each equivalence class of this congruence corresponds to a distinct quotient of $L$.
Let $K=\{K_0,\dots,K_{n-1}\}$ be the set of quotients of a regular language $L$; by convention, we let $K_0=L=\eps^{-1} L$.
The number of distinct quotients of $L$ is the \emph{quotient complexity} $\kappa(L)$ of $L$.

The \emph{Myhill congruence}~\cite{Myh57} $\lraL$ of $L$ is defined by
\begin{equation}
x \, \lraL\,  y \mbox{ if and only if } uxv\in L  \Leftrightarrow uyv\in L\mbox { for all } u,v\in\Sig^*.
\end{equation}
This congruence is also known as the \emph{syntactic congruence} of $L$.
The semigroup $\Sig^+/ \lraL$ of equivalence classes of the relation $\lraL$, is the \emph{syntactic semigroup} of $L$, and 
$\Sig^*/ \lraL$ is the \emph{syntactic monoid} of~$L$. 
The \emph{syntactic complexity} $\sig(L)$ of $L$ is the cardinality of its syntactic semigroup.

A~\emph{deterministic finite automaton (DFA)} is a quintuple $\cD=(Q, \Sig, \delta, q_0,F)$, where $Q$ is a finite, non-empty set of \emph{states}, $\Sig$ is a finite non-empty \emph{alphabet}, $\delta\colon Q\times \Sig\to Q$ is the \emph{transition function}, $q_0\in Q$ is the \emph{initial state}, and $F\subseteq Q$ is the set of \emph{final states}.
By the \emph{language of a state} $q$ of $\cD$
we mean the language $K_q$ accepted by the automaton $(Q,\Sigma,\delta,q,F)$.
States $p$ and $q$ are \emph{equivalent} if $K_p = K_q$.
A state $q$ is \emph{reachable} if $\delta(q_0,w)=q$ for some $w\in \Sig^*$.
A~DFA is \emph{minimal} if every state is reachable and no two states are equivalent.

The \emph{quotient automaton} of $L$ is 
$\cD=(K, \Sig, \delta, L,F)$, where $\delta(K_q,a)=a^{-1}K_q$, 
and $F=\{K_q \mid \eps\in K_q\}$.
The quotient automaton is always minimal, and so quotient complexity is the same as state complexity.

The \emph{transition semigroup} of a DFA is the set of transformations induced by words of $\Sig^+$ on the set of states. The transition semigroup of the quotient DFA of $L$ is isomorphic to the syntactic semigroup of~$L$~\cite{Pin97}.

\section{Syntactic Complexity of Languages with Special Quotients}
\label{sec:basic}

We now present some basic properties of syntactic complexity.
\begin{proposition}\label{prop:basicbounds}
For any $L\subseteq \Sig^*$ with $\kappa(L)=n>1$, $n-1\le \sig(L)\le n^n$. 
\end{proposition}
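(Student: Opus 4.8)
The plan is to prove the two bounds separately, using the isomorphism between the syntactic semigroup of $L$ and the transition semigroup of its quotient DFA $\cD=(K,\Sig,\delta,L,F)$, where $|K|=\kappa(L)=n$. Throughout I would identify the quotient $K_j$ with the state $j\in Q_n=\{0,\dots,n-1\}$, so that each word $w\in\Sig^+$ induces a transformation $t_w$ of $Q_n$, and $\sig(L)$ equals the number of distinct such transformations.

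For the upper bound, the argument is immediate: every $t_w$ is a transformation of the $n$-element set $Q_n$, and there are exactly $n^n$ transformations of $Q_n$ in total, so the transition semigroup has at most $n^n$ elements, giving $\sig(L)\le n^n$.

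For the lower bound, I would use reachability of all $n$ quotients in the minimal quotient automaton. Since $K_0=\eps^{-1}L=L$ and every state is reachable, for each $i\in\{1,\dots,n-1\}$ there is a word $w_i$ with $\delta(K_0,w_i)=K_i$, i.e.\ $w_i^{-1}L=K_i$. Each such $w_i$ is necessarily non-empty: if $w_i=\eps$ then $w_i^{-1}L=L=K_0\neq K_i$, a contradiction. Hence each $w_i$ induces an element $t_{w_i}$ of the transition semigroup $\Sig^+/\lraL$, and $0\cdot t_{w_i}=i$. For $i\neq j$ we then have $0\cdot t_{w_i}=i\neq j=0\cdot t_{w_j}$, so the transformations $t_{w_1},\dots,t_{w_{n-1}}$ are pairwise distinct. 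This exhibits $n-1$ distinct elements in the syntactic semigroup, so $\sig(L)\ge n-1$.

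There is essentially no hard step here; the only point requiring a moment's care is the observation that the connecting words $w_i$ can be chosen non-empty (so that they genuinely lie in $\Sig^+$ and contribute to the \emph{semigroup}, not merely the monoid), which follows at once from $n>1$ and the distinctness of the quotients. Combining the two parts yields $n-1\le\sig(L)\le n^n$.
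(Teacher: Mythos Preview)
Your proof is correct and follows essentially the same approach as the paper: the upper bound is the trivial count of transformations of an $n$-set, and the lower bound comes from reachability of the $n-1$ non-initial states by non-empty words, yielding $n-1$ transformations distinguished by where they send the initial state. The paper additionally remarks that both bounds are attained (by $a^{n-1}a^*$ and by Proposition~\ref{prop:piccard}, respectively), but this is not part of the proposition as stated.
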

\begin{proof}
Let $\cD=(K, \Sig, \delta, L,F)$ be the quotient automaton of $L$.
Since every state other than $L$ has to be reachable from the initial state $L$ by a non-empty word, there must be at least $n-1$ transformations.
If $\Sig=\{a\}$ and $L=a^{n-1}a^*$, then  $\kappa(L)=n$, and $\sigma(L)=n-1$;
so the lower bound $n-1$ is achievable. The upper bound is $n^n$, and by Proposition~\ref{prop:piccard} this upper bound is achievable if $|\Sig|\ge 3$.
\end{proof}

If one of the quotients of $L$ is $\emp$ (respectively, $\{\eps\}$, $\Sig^*$, $\Sig^+$), then we say that $L$ \emph{has $\emp$} (respectively, $\{\eps\}$, $\Sig^*$, $\Sig^+$).
A quotient $w^{-1}L$ of a language $L$ is \emph{\ur{}}~\cite{Brz10} if $x^{-1}L=w^{-1}L$ implies that $x=w$.
If $(wa)^{-1}L$ is \ur{} for $a\in \Sig$, then so is $w^{-1}L$. 
Thus, if $L$ has a \ur{} quotient, then $L$ itself is \ur{} by $\eps$, \ie, a minimal automaton of $L$ is \emph{non-returning}~\cite{HaSa09}.

\begin{theorem}[Special Quotients]
\label{thm:specialquotients}
Let $L\subseteq \Sig^*$ and let $\kappa(L)=n\ge1$. 
\be
\item
If $L$ has $\emp$ or $\Sig^*$, then $\sig(L)\le n^{n-1}$.
\item
If $L$ has $\{\eps\}$ or $\Sig^+$, then $\sig(L)\le n^{n-2}$.
\item
If $L$ is uniquely reachable, then $\sig(L)\le (n-1)^n$.
\item
If $w^{-1}L$ is uniquely reachable by $w \in \Sig^*$ with $0 \le |w|\le n-1$, then $\sig(L)\le |w|+(n-1-|w|)^{ n }$.
\ee
Moreover, all the bounds shown in Table~\ref{tab:special} hold.

\begin{table}[ht]
\caption{Upper bounds on syntactic complexity for languages with special quotients.
The abbreviation ``ur'' stands for ``uniquely reachable''. The $a$ in the last column is in $\Sig$.}\label{tab:special}
\begin{center}
$
\begin{array}{| c | c | c | c | c | c | c |}   
\hline
\hspace{.25cm}\emp \hspace{.25cm} &\hspace{.1cm}\Sig^*\hspace{.1cm} &
\hspace{.25cm}\{\eps\} \hspace{.25cm}&  \hspace{.1cm}\Sig^+\hspace{.1cm}  & \sigma(L)\le &  \mbox{if also $L$ is ur} & \mbox{if also $a^{-1}L$ is ur}
\rule[-0.2cm]{0cm}{0.7cm}
\\
\hline
\surd & &    &    & n^{n-1}  & (n-1)^{n-1} & 1+ (n-3)^{n-1}
\\
\hline
 & \surd &    &    & n^{n-1} & (n-1)^{n-1} & 1+ (n-3)^{n-1}
\\
\hline
\surd  &  & \surd  &    & n^{n-2} & (n-1)^{n-2} & 1+ (n-4)^{n-2}
\\
\hline
 & \surd  &    & \surd   &  n^{n-2} & (n-1)^{n-2} & 1+ (n-4)^{n-2}
\\
\hline
 \surd  & \surd   &    &  &  n^{n-2} & (n-1)^{n-2} & 1+ (n-4)^{n-2}
\\
\hline
 \surd  & \surd &      &\surd  &  n^{n-3} & (n-1)^{n-3} & 1+ (n-5)^{n-3}
\\
\hline
 \surd  &\surd   & \surd   &  & n^{n-3} & (n-1)^{n-3} & 1+ (n-5)^{n-3}
\\
\hline
 \surd  &\surd   & \surd   &\surd  & n^{n-4} & (n-1)^{n-4} & 1+ (n-6)^{n-4}
 \\
\hline
\end{array}
$
\end{center}
\end{table}
\end{theorem}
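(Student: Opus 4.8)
The plan is to argue entirely inside the quotient automaton $\cD=(K,\Sig,\delta,L,F)$ of $L$: it is minimal, has $|K|=n$ states, and its transition semigroup is isomorphic to the syntactic semigroup of $L$, so $\sig(L)$ counts the transformations of the $n$-element set $K$ induced by words of $\Sig^+$. Each hypothesis forces some state of $\cD$ to behave rigidly under the action of $\Sig^+$, and every such rigidity cuts down the number of admissible transformations. I would prove the four numbered items first and then obtain every row of Table~\ref{tab:special} by superimposing the relevant rigidities.

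For the first two items the point is that $\emp$ and $\Sig^*$ are absorbing sinks of $\cD$: $\delta(\emp,a)=\emp$ and $\delta(\Sig^*,a)=\Sig^*$ for every $a\in\Sig$. Hence every transformation induced by a word of $\Sig^+$ fixes that state, and there are $n^{n-1}$ transformations of an $n$-set with a prescribed fixed point, which gives item~1. For item~2, $\delta(\{\eps\},a)=\emp$ for all $a$, so $L$ also has $\emp$, and moreover $w^{-1}\{\eps\}=\emp$ for every $w\in\Sig^+$; thus every transformation in the transition semigroup both fixes $\emp$ and maps $\{\eps\}$ to $\emp$, that is, two distinct states have prescribed images, leaving $n^{n-2}$ possibilities. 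The $\Sig^+$ case is dual: $\Sig^+$ present forces $\Sig^*$ present, and every transformation fixes $\Sig^*$ and sends $\Sig^+$ to $\Sig^*$.

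For items~3 and~4, write $w=a_1\cdots a_m$ and set $p_i=(a_1\cdots a_i)^{-1}L$ for $0\le i\le m$, so $p_0=L$. If $p_m=w^{-1}L$ is uniquely reachable then, by the observation recalled just before the theorem, so is each prefix $p_i$; hence the $p_i$ are pairwise distinct and, for $i\ge1$, the unique edge of $\cD$ entering $p_i$ is $p_{i-1}\xrightarrow{a_i}p_i$, while $p_0$ has no incoming edge. Given $t=t_u$ with $u\in\Sig^+$, look at $p_0t=u^{-1}L$: if it is some $p_j$, unique reachability forces $u=a_1\cdots a_j$, so $t$ is one of the $m$ determined ``prefix'' transformations $t_{a_1},\dots,t_w$; otherwise $u$ is not a prefix of $w$, and tracing the in-edge chains into the $p_i$ shows that $Kt$ meets $\{p_0,\dots,p_m\}$ only when $u$ is an internal factor $a_{i+1}\cdots a_k$ of $w$ (in which case $u$, hence $t$, is determined). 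Bounding the remaining transformations -- those with $Kt$ disjoint from the $m+1$ states $p_0,\dots,p_m$ -- by $(n-1-m)^n$, together with the $m$ prefix transformations, yields $\sig(L)\le m+(n-1-m)^n$; item~3 is the case $w=\eps$.

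Finally I would read Table~\ref{tab:special} off the same three mechanisms: each of $\emp,\Sig^*$ among the quotients supplies a state fixed by the whole transition semigroup; each of $\{\eps\},\Sig^+$ additionally supplies a second state mapped onto that sink; and a uniquely reachable quotient of length $|w|$ supplies the chain $p_0,\dots,p_{|w|}$ whose states are missing from every image except for finitely many special ones. The work is to check that whenever several hypotheses hold the distinguished states are genuinely distinct, that the sink states are exactly those forced to lie in every image while the chain states are exactly those forced to be absent, and that these contributions combine so that each entry has the stated exponent (from fixed states) and base (from avoided states). The step I expect to be the main obstacle is precisely the accounting flagged in item~4 -- controlling, once $p_0t$ has left the prefix chain, how far an image can re-enter it, so that the additive term in the bound is neither over- nor under-counted -- together with the analogous delicacy in the last two columns of the table, where the overlap between a sink state and a chain state must be handled so the ``$+1$'' (or ``$+|w|$'') term is counted exactly once; the rest reduces to routine in-degree and sink bookkeeping.
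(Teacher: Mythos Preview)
Your arguments for items~1--3 are correct and coincide with the paper's: $\emptyset$ and $\Sigma^*$ are sinks and hence fixed by every transformation; the presence of $\{\varepsilon\}$ (respectively $\Sigma^+$) forces $\emptyset$ (respectively $\Sigma^*$) to be present and supplies a second prescribed image; and a uniquely reachable $L$ is absent from every image. Recovering item~3 as the case $|w|=0$ of your item~4 setup is also fine, and the same bookkeeping yields the $\sigma(L)\le$ column and the ``$L$ is ur'' column of the table.

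The genuine gap is in item~4 for $|w|\ge 2$, precisely at the step you yourself flag as the main obstacle. Your parenthetical ``in which case $u$, hence $t$, is determined'' does not close the count: if $u=a_{i+1}\cdots a_k$ with $i\ge 1$ is a factor of $w$ that is not a prefix, then $t_u$ sends $p_i$ to $p_k$ (so its image meets the chain) while sending $p_0=L$ to $u^{-1}L$ outside the chain, so $t_u$ is neither one of the $|w|$ prefix transformations nor covered by the $(n-1-|w|)^n$ term. This is not a bookkeeping subtlety but a real obstruction --- the stated bound is \emph{false}. Take $L=\{ab\}$ over $\Sigma=\{a,b\}$: here $n=4$, $w=ab$, the quotient $(ab)^{-1}L=\{\varepsilon\}$ is uniquely reachable, and the transition semigroup $\{t_a,\,t_b,\,t_{ab},\,(Q\to\emptyset)\}$ has size $4>|w|+(n-1-|w|)^n=3$. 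The paper's own proof contains the identical gap: its claim that ``all the other transformations map every quotient to $y^{-1}L$ where $y$ is not a prefix of $w$'' fails for $t_b$ in this example, since $t_b$ sends $a^{-1}L$ to $(ab)^{-1}L$. The bound \emph{does} hold when $|w|\le 1$ (then $w$ has no non-prefix factors), which is why item~3 and the middle column of the table survive; but the last column cannot be established as written either --- for instance, with $\emptyset$ present and $a^{-1}L$ uniquely reachable one easily builds four-state examples with seven transformations, exceeding $1+(n-3)^{n-1}=2$.
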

\begin{proof}
Suppose that $L\subseteq \Sig^*$, $n\ge 1$, and $\kappa(L)=n$.
\be
\item
Since $a^{-1}\emp=\emp$ for all $a\in\Sig$, there are only $n-1$ states in the quotient automaton with which one can distinguish two transformations. Hence there are at most $n^{n-1}$ transformations. If $L$ has $\Sig^*$, then $a^{-1}\Sig^*=\Sig^*$, for all $a\in\Sig$, and the same argument applies.
\item
Since $a^{-1}\{\eps\}=\emp$ for all $a\in\Sig$, $L$ has $\emp$ if $L$ has $\{\eps\}$. Now there are two states that do not contribute to distinguishing among different transformations. 
Dually, $a^{-1}\Sig^+=\Sig^*$ for all $a\in\Sig$, and the same argument applies.
\item
If $L$ is uniquely reachable then $w^{-1}L= L$ implies $w=\eps$. Thus $L$ does not appear in the image of any transformation by a word in $\Sig^+$, and there remain only $n-1$ choices for each of the $n$ states. 
\item
If $w^{-1}L$ is \ur{}, then so is $x^{-1}L$ for every prefix $x$ of $w$. Hence for each prefix $x$ of $w$, $x^{-1}L$ appears only in one transformation, and there are $|w|$ such transformations. All the other transformations map every quotient $x^{-1}L$ to $y^{-1}L$, where $y$ is not a prefix of $w$. 
Therefore there can be at most $(n-1-|w|)^{n}$ other transformations.
\ee
The remaining entries in Table~\ref{tab:special} are easily verified:
every transformation fixes $\emptyset$, $\Sigma^*$, maps $\{\varepsilon\}$ to $\emptyset$, and maps $\Sigma^+$ to $\Sigma^*$, so these quotients are removed from counting possible mappings for a quotient.
\end{proof}

\section{Right Ideals and Prefix-Closed Languages}
\label{sec:right}
In this section we prove that the syntactic complexity of right ideals is $n^{n-1}$. First we define a witness DFA that meets this bound.

\begin{definition}[Witness: Right Ideals]
\label{def:right}
For $n\ge 3$, let
$\cW_n=(Q_n, \Sig,\delta_\cW,0, \{n-1\}),$
be the DFA in which $\Sig=\{a,b,c,d\}$,
 $a\colon (0,\ldots,n-2)$, 
$b\colon (0,1)$, 
$c\colon (n-2 \to 0)$, and
$d\colon (n-2 \to n-1)$.
For $n=3$ inputs $a$ and $b$ induce the same transformation; hence $\Sig=\{a,c,d\}$ suffices. Furthermore, let $\cW_2=(Q_2,\{a,b\},\delta_\cW,0,\{1\})$, where  
 $a\colon (0\to 1)$, and $b\colon \one$, and let $\cW_1=(Q_1,\{a\},\delta_\cW,0,\{0\})$, where 
 $a\colon \one$.
 Let $L_n=L(\cW_n)$.
\end{definition}

The structure of the DFA of Definition~\ref{def:right} is shown in Fig.~\ref{fig:RWit} for $n\ge 3$.
\begin{figure}[ht]
\unitlength 12pt\small
\begin{center}\begin{picture}(29,6)(0,-1)
\gasset{Nh=2.1,Nw=2.1,Nmr=1.05,ELdist=0.3,loopdiam=1.2}
\node(0)(2,2){$0$}\imark(0)
\node(1)(7,2){$1$}
\node(2)(12,2){$2$}
\node[Nframe=n](qdots)(17,2){$\dots$}
\node(n-2)(22,2){$n-2$}
\node(n-1)(27,2){$n-1$}\rmark(n-1)

\drawedge(0,1){$a,b$}
\drawedge(1,2){$a$}
\drawedge(2,qdots){$a$}
\drawedge(qdots,n-2){$a$}
\drawedge(n-2,n-1){$d$}

\drawedge[curvedepth=2.1](n-2,0){$a,c$}
\drawedge[curvedepth=-2,ELdist=-1](1,0){$b$}

\drawloop(0){$c,d$}
\drawloop(1){$c,d$}
\drawloop(2){$b,c,d$}
\drawloop(n-2){$b$}
\drawloop(n-1){$a,b,c,d$}
\end{picture}\end{center}
\caption{Quotient DFA $\cW_n$ of a right ideal with $n^{n-1}$ transformations.}
\label{fig:RWit}
\end{figure}
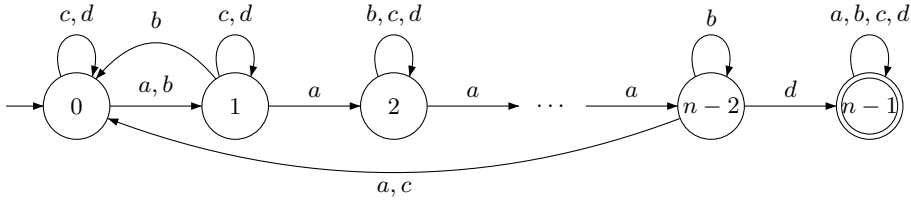

\begin{lemma}\label{lem:right}
The DFA $\cW_n$ of Definition~\ref{def:right} is minimal, accepts a right ideal, and has transition semigroup of size $n^{n-1}$.
\end{lemma}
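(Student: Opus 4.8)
The plan is to verify the three claims in sequence, starting with the easy structural properties and then concentrating on the semigroup size.

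First I would argue that $\cW_n$ is minimal. Reachability is immediate: by repeatedly applying $a$ one walks $0\xrightarrow{a}1\xrightarrow{a}2\xrightarrow{a}\cdots\xrightarrow{a}n-2$, and then $d$ sends $n-2$ to $n-1$, so every state is reachable from $0$. For distinguishability, note that $n-1$ is the only accepting state, and it is a sink (all letters fix it), so $n-1$ is the unique state whose language is $\Sig^*$; every other state $q<n-1$ needs some word to reach $n-1$, and by tracing the transitions one sees that the shortest such word has length $n-1-q$ (using $a$'s to climb to $n-2$, then $d$), which separates all the non-accepting states from one another. Hence all $n$ states are pairwise inequivalent. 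That $L_n$ is a right ideal is equally quick: a DFA accepts a right ideal exactly when its set of final states is closed under all transitions, i.e. is a union of states from which the language is $\Sig^*$; here $F=\{n-1\}$ and $n-1$ is a sink, so $L_n\Sig^*=L_n$.

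The core of the lemma is that the transition semigroup $T$ of $\cW_n$ has size exactly $n^{n-1}$. The upper bound is free from Theorem~\ref{thm:specialquotients}(1): since $L_n$ has $\Sig^*$ (the quotient at state $n-1$), we get $\sig(L_n)=|T|\le n^{n-1}$. For the lower bound I would show that $T$ contains all $n^{n-1}$ transformations that fix $n-1$ — equivalently, that the transformations induced on $Q_n$, restricted to the behaviour on $\{0,1,\dots,n-2\}$, realize every map $Q_n\to Q_n$ that sends $n-1\mapsto n-1$. Because $n-1$ is a sink, every word induces a transformation fixing $n-1$, so it suffices to generate, on the $(n-1)$-element ``active'' set $\{0,\dots,n-2\}$ together with the absorbing target $n-1$, a semigroup of all $n^{n-1}$ such maps. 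The letters give: $a$ acts on $\{0,\dots,n-2\}$ as the $(n-1)$-cycle $(0,1,\dots,n-2)$; $b$ acts as the transposition $(0,1)$; $c$ acts as the rank-decreasing map $(n-2\to 0)$ on $\{0,\dots,n-2\}$; and $d$ is the only letter that ``escapes'' to $n-1$, acting as $(n-2\to n-1)$. By Proposition~\ref{prop:piccard}, the three transformations $a,b,c$ already generate the full transformation monoid $T_{n-1}$ on $\{0,\dots,n-2\}$ (a cyclic permutation, a transposition, and a rank-$(n-2)$ singular map). Adjoining $d$, which moves a single state into the sink, I would then show by a standard argument that one can realize any map $f\colon\{0,\dots,n-2\}\to\{0,\dots,n-1\}$: first permute so that the states intended to go to $n-1$ are gathered at position $n-2$ one at a time, apply $d$ to drop that state into $n-1$ (where it stays), and interleave with elements of $T_{n-1}$ to place the remaining states correctly; an induction on the number of states mapped into $n-1$ closes this. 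Together with the trivial fact $|T|\le n^{n-1}$, this gives $|T|=n^{n-1}$.

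The main obstacle is the last generation step: arguing cleanly that $\langle a,b,c\rangle = T_{n-1}$ together with the single additional ``sink'' letter $d$ suffices to hit every transformation fixing $n-1$, rather than just those of full rank on the active block. I expect to handle this by an induction on $k$, the size of the preimage of $n-1$ under the target transformation $t$: for $k=0$ it is $T_{n-1}$ by Proposition~\ref{prop:piccard}; for the inductive step, pick a state $q$ with $qt=n-1$, use an element of $T_{n-1}$ to bring $q$ to $n-2$ while sending the other states to suitable ``staging'' positions, apply $d$, and appeal to the inductive hypothesis on the remaining $n-2$ states. The boundary cases $n=1,2$ (and $n=3$, where $a=b$) should be checked separately but are trivial: $\cW_1$ has the single transformation $\one$ and $1^{1-1}=1$; $\cW_2$ has $\{(0\to1),\one\}$ of size $2=2^{1}$; for $n=3$ the three letters $a,c,d$ still realize all $9=3^{2}$ maps fixing state $2$.
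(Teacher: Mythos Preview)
Your proposal is correct and follows essentially the same approach as the paper: minimality via explicit distinguishing words, right-ideal via the sink final state, the upper bound from Theorem~\ref{thm:specialquotients}, and the lower bound by observing that $a,b,c$ generate the full transformation monoid on $\{0,\dots,n-2\}$ (Proposition~\ref{prop:piccard}) and then using $d$ to send states into the sink. The only cosmetic difference is in the final generation step: the paper writes down the explicit word $a^{n-2-q}da^{q+1}$ inducing $(q\to n-1)$ and then decomposes an arbitrary $t$ as $(p_1\to n-1)\cdots(p_k\to n-1)\,t'$ with $t'\in T_{n-1}$, whereas you phrase the same thing as an induction on $k=|t^{-1}(n-1)|$; both arguments are equivalent.
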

\begin{proof}
If $n\le 2$ this is easily verified; here $L_1=\Sig^*$ and $L_2=\Sig^*a\Sig^*$. 

For $n\ge 3$, state $q$ with $0\le q \le n-2$ is non-final and accepts $a^{n-2-q}d$ and no other such state accepts this word.
Since $n-1$ is final, all states are distinguishable.
Since $\cW_n$ has exactly one final state and that state accepts $\Sig^*$, $L_n$ is a right ideal.

For the syntactic complexity, observe that inputs $a$, $b$, and $c$ restricted to $Q_{n-1}$ can induce any transformation of $Q_{n-1}$ (Proposition~\ref{prop:piccard}); hence all $(n-1)^{n-1}$ transformations that fix $n-1$ can be performed by $\cW_n$. 
Also observe that any transformation $(q \to n-1)$ for $q \in \{0,\ldots,n-3\}$ is induced by $a^{n-2-q} d a^{q+1}$;

Note that every transformation from the transition semigroup of $\cW_n$ fixes state $n-1$.
Let $t$ be any transformation such that $(n-1)t = n-1$.
There are $(n-1)^n$ such transformations, and we will show that all of them are generated.
Let $\{p_1,\ldots,p_k\}$ be the set of all states from $Q \setminus \{n-1\}$ that are mapped by $t$ to $n-1$.
Then $t$ can be generated by $(p_1 \to n-1)\cdots(p_k \to n-1)t'$, where $t'$ fixes $n-1$ and all states $p_i$, and acts as $t$ on the other states; thus it is a transformation of $Q_{n-1}$ if restricted to $Q_{n-1}$ and can be generated by $a$, $b$, and $c$.
\end{proof}

We are now in a position to state our main theorem of this section.

\begin{theorem}[Right Ideals and Prefix-Closed Languages]
\label{thm:right}
Suppose that $L\subseteq\Sig^*$ and $\kappa(L)=n$.
If $L$ is a right ideal or a prefix-closed language, then $\sig(L)\le n^{n-1}$.
This bound is tight for $n=1$ if $|\Sig|\ge 1$, for $n=2$ if $|\Sig|\ge 2$, for $n=3$ if $|\Sig|\ge 3$, and for $n\ge 4$ if $|\Sig|\ge 4$.
Moreover, the sizes of the alphabet cannot be reduced.
\end{theorem}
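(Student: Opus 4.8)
The plan is to establish the upper bound $\sigma(L)\le n^{n-1}$ from the structural properties of right ideals, then invoke Lemma~\ref{lem:right} for tightness, and finally treat the alphabet-minimality claims separately for small $n$ and for $n\ge 4$.

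First I would prove the upper bound. Let $\cD=(K,\Sig,\delta,L,F)$ be the quotient automaton of $L$. The key observation is that a right ideal always has $\Sig^*$ as a quotient: since $L=L\Sig^*$, picking any $w\in L$ gives $w^{-1}L=\Sig^*$. By Part~1 of Theorem~\ref{thm:specialquotients}, having $\Sig^*$ already forces $\sigma(L)\le n^{n-1}$, because every transformation in the transition semigroup fixes the quotient $\Sig^*$ (as $a^{-1}\Sig^*=\Sig^*$ for all $a\in\Sig$), leaving only $n-1$ states whose images can vary. For the prefix-closed case I would note that $L$ is prefix-closed iff $\cmp L$ is a right ideal, and $\sigma(L)=\sigma(\cmp L)$ since the syntactic semigroup is unchanged by complementation; so this case reduces to the right-ideal case. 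Tightness for each stated range of $n$ and $|\Sig|$ is exactly the content of Lemma~\ref{lem:right}: $\cW_n$ has quotient complexity $n$, is a right ideal, and attains $n^{n-1}$, using $4$ letters for $n\ge 4$, $3$ letters for $n=3$ (as $a$ and $b$ coincide), $2$ for $n=2$, and $1$ for $n=1$.

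The substantive part is showing the alphabet sizes cannot be reduced. For $n=1$ a single letter is trivially needed to have a nonempty semigroup; for $n=2$ one letter gives a unary language, whose syntactic semigroup is a cyclic monoid of size at most... well, for $\kappa(L)=2$ a unary right ideal is $a a^*$ with $\sigma=1<2=n^{n-1}$, so two letters are needed. For $n=3$, the bound is $n^{n-1}=9$, and I would argue that no two-letter right ideal with three quotients can reach $9$: with $\Sig=\{a,b\}$ and the forced fixed state $n-1=2$, the semigroup lives inside the transformations of $Q_3$ fixing $2$, a monoid of size $(n-1)^{n-1}\cdot\text{(extra maps)} $ — more precisely one must generate all $9$ transformations $t$ with $2t=2$, and I would show by a short case analysis on the two generators that this is impossible (the transformations fixing $\{2\}$ that form a group on $\{0,1\}$ together with the collapsing maps cannot all be generated by two elements — essentially a two-generator obstruction analogous to Proposition~\ref{prop:piccard}). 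For $n\ge 4$ the main obstacle is the genuine lower bound argument: one must show that every right ideal witness reaching $n^{n-1}$ needs at least $4$ generators. I expect this to be the hardest step. The strategy is: the $n^{n-1}$ transformations fixing $n-1$ include all $(n-1)^{n-1}$ permutations-and-maps of $Q_{n-1}$, and by Proposition~\ref{prop:piccard} generating the full transformation monoid on $n-1\ge 3$ points already requires $3$ generators; one then argues that a fourth letter is needed to produce transformations sending states of $Q_{n-1}$ to $n-1$ (a ``reset-to-sink'' action), since none of the three generators of $T_{n-1}$ can move a state out of $Q_{n-1}$ without either failing to fix $n-1$ or breaking the generation of $T_{n-1}$ on the remaining points. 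Making this rigorous requires checking that no single letter can simultaneously serve double duty — this is where I would spend the most care, likely via a counting or rank argument on how many transformations a $k$-generated subsemigroup of the relevant monoid can contain.

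In summary: the upper bound is immediate from Theorem~\ref{thm:specialquotients} once one spots that right ideals have $\Sig^*$ as a quotient; tightness is Lemma~\ref{lem:right}; and the alphabet-minimality is a separate, more delicate lower-bound argument, handled by small-case analysis for $n\le 3$ and by a Proposition~\ref{prop:piccard}-style ``three generators for $T_{n-1}$ plus one more for the sink transitions'' argument for $n\ge 4$, the latter being the main obstacle.
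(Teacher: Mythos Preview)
Your approach is correct and matches the paper's exactly: the upper bound via Theorem~\ref{thm:specialquotients} (right ideals have $\Sig^*$ as a quotient), tightness via Lemma~\ref{lem:right}, complementation for the prefix-closed case, small-$n$ checks, and for $n\ge 4$ the ``three generators for $T_{n-1}$ plus one for the sink'' argument based on Proposition~\ref{prop:piccard}. The step you flag as the main obstacle is shorter than you expect: two generators must be permutations (to generate the symmetric group on $Q_{n-1}$), and if the sole remaining generator $g$ maps some $q\in Q_{n-1}$ to $n-1$ then every singular product also sends a state of $Q_{n-1}$ to $n-1$ (look at its first non-permutation factor, which must be $g$ preceded by a permutation of $Q_{n-1}$), so no singular transformation preserving $Q_{n-1}$ is ever generated --- hence three generators cannot suffice.
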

\begin{proof}
If $L$ is a right ideal, it has $\Sig^*$ as a quotient. By Theorem~\ref{thm:specialquotients}, $\sig(L_n)\le n^{n-1}$. 
By Lemma~\ref{lem:right} the languages of Definition~\ref{def:right} meet this bound.

It is easy to verify that the alphabet cannot be smaller if $n \le 3$. Let $n \ge 4$.
The transition semigroup restricted to $Q_{n-1}$ contains all transformations $Q_{n-1} \to Q_{n-1}$.
From Proposition~\ref{prop:piccard} there must be three generators of these transformations, say $a,b,c$.
They cannot map any state from $Q_{n-1}$ to $n-1$. Thus we need one more generator, say $d$, which maps a state from $Q_{n-1}$ to $n-1$.

Since prefix-closed languages are complements of right ideals and the syntactic complexity is preserved by complementation, the result is the same for prefix-closed languages.
\end{proof}

\begin{remark}
A maximal transition semigroup of the quotient DFA of a right ideal contains all transformations of $Q_n$ that fix state $n-1$. Hence there is only one maximal transition semigroup for right ideals.
\end{remark}

\section{Left Ideals and Suffix-Closed Languages}
\label{sec:left}
\subsection{Basic Properties}
Let $Q_n=\{0,\ldots,n-1\}$, 
let $\cD_n=(Q_n, \Sigma_\cD, \delta_\cD, 0,F)$ be a minimal DFA, and let $T_n$ be its transition semigroup.
Consider the sequence $(0,0t,0t^2,\dots)$ of states obtained by applying transformation $t\in T_n$ repeatedly, starting with the initial state.
Since $Q_n$ is finite, there must eventually be a repeated state, that is, there must exist $i$ and $j$ such that $0,0t,\dots,0t^i,0t^{i+1},\dots, 0t^{j-1}$ are distinct, but $0t^j=0t^i$;
the integer $j-i$ is the \emph{period} of $t$.
If the period  is $1$,  $t$ is said to be \emph{initially aperiodic};
then the sequence is $0,0t,\dots, 0t^{j-1}=0t^j$.

\begin{lemma}\label{lem:aperiodic}
If $\cD_n$ is the quotient DFA of a left ideal, all the transformations in $T_n$ are initially aperiodic, and no state of $\cD_n$ is empty.
\end{lemma}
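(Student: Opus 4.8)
The plan is to prove the two assertions of Lemma~\ref{lem:aperiodic} separately, both by exploiting the defining equation $L=\Sig^*L$ of a left ideal at the level of quotients. First I would record the key consequence of this equation: for every word $u\in\Sig^*$ we have $u^{-1}L\supseteq L$, because $w\in L$ implies $uw\in\Sig^*L=L$, so $w\in u^{-1}L$. In automaton terms, writing $q_u=\delta(0,u)$ for the state reached from the initial state $0$ by $u$, this says $K_0\subseteq K_{q_u}$, i.e.\ the language of the initial state is contained in the language of \emph{every} reachable state; since $\cD_n$ is the minimal (quotient) DFA, all states are reachable, so $K_0\subseteq K_q$ for all $q\in Q_n$.

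For the no-empty-state claim: the initial state is non-final or final, but in either case $K_0=L\neq\emp$ since an ideal is by definition non-empty; hence $K_q\supseteq K_0\neq\emp$ for every $q$, so no state has empty language. (I would phrase ``no state is empty'' as ``no quotient of $L$ is $\emp$'', matching the ``has $\emp$'' terminology of Section~\ref{sec:basic}.)

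For initial aperiodicity: take any $t\in T_n$, induced by a word $v\in\Sig^+$, and consider the sequence $0,0t,0t^2,\dots$; by finiteness there are least $i<j$ with $0t^i=0t^j$ and $0,\dots,0t^{j-1}$ distinct, and the period is $j-i$. I want to show $i=0$ or, more to the point, that the tail is a fixed point, equivalently $j-i=1$; actually the cleanest target is: the cycle reached has length $1$. The idea is that $0t^i=0t^j$ means $\delta(0,v^i)=\delta(0,v^j)$, so the states $p_0:=0t^i,\ p_1:=0t^{i+1},\dots$ form a cycle of length $d:=j-i$ under $t$. Now apply the inclusion $K_0\subseteq K_q$ with $q=0t^{i}$: we get $K_{0t^i}\supseteq K_0$. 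But also, reading $v^i$ from state $0t^i$ lands at $0t^{2i}$, and along the cycle $0t^i$ equals $0t^{i+d}=0t^{i+2d}=\cdots$. The point is that the initial state $0$ can reach $0t^i$ (via $v^i$) and $0t^i$ can reach $0$? — not directly; instead I use minimality differently: since $K_0\subseteq K_{q}$ for all $q$, in particular $K_0\subseteq K_{0t^{i-1}}$ if $i\ge 1$, but we also have $\delta(0t^{i-1},v)=0t^i$ and $\delta(0,v^{\,?})$... The honest shortest route is: $0t^i=0t^j$ forces, for the word $w=v^{\,i}$, that $q_{w}=0t^i$ lies on a $t$-cycle of length $d$. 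Consider the state $q_w$; by the displayed inclusion $L=K_0\subseteq K_{q_w}$. Now I claim $K_{q_w}\subseteq K_0$ as well is false in general, so instead I argue about the cycle itself: if $d\ge 2$, pick the state $r:=0t^i$ and the state $r':=0t^{i+1}=rt$ on the same cycle with $r\neq r'$; since both are reachable, $K_0\subseteq K_r$ and $K_0\subseteq K_{r'}$, and moreover $r,r'$ are reachable from each other through powers of $t$, so $K_r=K_{r'}$ by traversing the cycle — wait, that would make $r,r'$ equivalent, contradicting minimality. This is the crux: states on a common cycle under a transition word are related by that word in both directions, hence would be equivalent in the minimal DFA unless the cycle is trivial. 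I expect this ``cycle collapses by minimality'' argument, combined with the ideal inclusion to pin down the non-cyclic tail, to be the main obstacle to state carefully; but note the cycle-collapse alone does \emph{not} use the ideal property, so I must double-check whether the intended statement is really ``$j-i=1$'' (period one) — and indeed a transition word can still have long cycles in a minimal DFA if they do not contain the initial state, so the real content must be that the \emph{initial} state's orbit has period one, which is exactly where $K_0\subseteq K_q$ bites: the inclusion is \emph{not} symmetric, so combined with $0t^i=0t^j$ one shows $0t^{i}=0t^{i+1}$, i.e.\ the orbit of $0$ ends in a fixed point.

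So the hard step, which I would carry out in detail, is: from $0t^i=0t^{j}$ with $j>i$, derive $0t^{i}=0t^{i+1}$. Reading $v^i$ from $0$ gives $0t^i$; reading $v^{i}$ from $0t^{j-i}=0t^{d}$... the winning observation is that $0t^i = 0t^j$ implies $0t^{i}$ and $0t^{d}$ (where $d=j-i$, and $0\le d$) — more directly, $0t^{i}=0t^{i}t^{d}$, so $0t^{i}$ is $t^d$-fixed; let $p=0t^{i}$. Then for the word $v^{d}$, $\delta(p,v^{d})=p$, while $\delta(0,v^{i})=p$ shows $p$ is reachable, so $K_0\subseteq K_p$. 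Also $\delta(0, v^{i+d})=p$, and $\delta(0,v^{i+1})=pt$. Now here is the key use of the left-ideal inclusion in the form $v^{-1}L\supseteq L$ applied with the \emph{state} $pt$: we have $K_{pt}\supseteq K_0$, but I want $K_p\subseteq K_{pt}$ too, which follows because $pt=\delta(p,v)$ and $p=\delta(pt,v^{d-1})$, so any word accepted from $pt$... no — I want the opposite inclusion. The clean finish: since $p$ is on a $t$-cycle (the $t^{d}$-fixed orbit $p,pt,\dots,pt^{d-1}$) and every element of that cycle is reachable from $0$ (by $v^{i},v^{i+1},\dots$), the inclusion $K_0\subseteq K_{pt^{k}}$ holds for all $k$; but also each $pt^{k}$ reaches $pt^{k'}$ via a power of $v$, giving $K_{pt^{k'}}\subseteq$ ... this gives $K_p=K_{pt}=\cdots$, whence by minimality $p=pt$, i.e.\ $d=1$ and $0t^i=0t^{i+1}$; the period is $1$, as claimed. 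I would present this as: the cycle through $p$ consists of mutually-reachable states, hence mutually equivalent, hence (by minimality) equal — so $d=1$. Thus every $t\in T_n$ is initially aperiodic. I expect the subtlety that needs care is making precise ``mutually reachable $\Rightarrow$ equivalent'', which holds because reachability here is via powers of a single transformation $t$ that maps one to the other and back around the cycle, so $K_{p}=\delta(p,\,\cdot\,)$-images coincide; this is standard but worth one careful sentence.
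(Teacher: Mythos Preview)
Your argument for the second claim (no state is empty) is correct and matches the paper's: from $L\subseteq u^{-1}L$ for every $u$, every quotient contains $L\neq\emp$.

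The argument for initial aperiodicity has a genuine gap. You reduce to a $t$-cycle $p,pt,\dots,pt^{d-1}$ on the orbit of $0$ and then assert that because these states are ``mutually reachable via powers of $v$'' they are ``mutually equivalent,'' hence equal by minimality. That implication is false: mutual reachability, even by powers of a single word, does not imply language-equivalence. Any minimal DFA in which some letter acts as a nontrivial permutation already has states on a common cycle that are pairwise inequivalent; your own parenthetical (``a transition word can still have long cycles in a minimal DFA'') concedes this. Having the initial state on the cycle together with only the weak inclusion $K_0\subseteq K_q$ does not help: the spot where your sketch trails off, ``giving $K_{pt^{k'}}\subseteq\ \dots$'', is precisely where no inclusion is available from reachability alone.

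What is missing is the \emph{stronger} consequence of the left-ideal equation (recorded in the paper as Remark~\ref{rem:left-ideals_xy}): $y^{-1}L\subseteq (xy)^{-1}L$ for all $x,y$. Taking $y=v^{k}$ and $x=v$ gives $K_{0t^{k}}\subseteq K_{0t^{k+1}}$, so the orbit languages form a chain $K_0\subseteq K_{0t}\subseteq K_{0t^2}\subseteq\cdots$; once $0t^i=0t^j$ all intermediate languages coincide and minimality forces $0t^i=0t^{i+1}$, i.e.\ period $1$. The paper itself argues slightly differently but to the same effect: assuming period $j-i\ge 2$, it picks a word $x$ distinguishing $0t^i$ from $0t^{j-1}$ and, by prepending a suitable power of $w$ and invoking $\Sig^*L=L$, derives a contradiction in each of the two cases. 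Either route closes the gap; your current outline uses only the special case $y=\eps$ of the ideal inclusion, which is too weak, and then makes an unjustified leap.
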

\begin{proof}
Suppose that $w$ induces a transformation $t$ such that $p_i = 0t^i = 0t^j = p_j$ for some $i < j$, where $j-i \ge 2$; thus $t$ is not initially aperiodic.
Since $\cD_n$ is minimal, states $p_i$ and $p_{j-1}$ must be distinguishable, say by word $x\in\Sig^*$.
If $w^ix\in L$, then $w^{j-1} x=w^iw^{j-i-1}x=w^{j-i-1}(w^ix) \not\in L$, contradicting the assumption that $L$ is a left ideal.
If $w^{j-1}x\in L$, then $w^jx=w(w^{j-1}x) \not\in L$, again contradicting that $L$ is a left ideal.

For the second claim, we know that a left ideal is non-empty by definition. So suppose that $w\in L$. If $L$ has the empty quotient, say $x^{-1}L=\emp$, then $xw\not\in L$, which is a contradiction.
\end{proof}

\begin{example}
Note that the conditions of Lemma~\ref{lem:aperiodic} are not sufficient. For $\Sig=\{a,b\}$, the language $L=b\cup \Sig^*a$ satisfies the conditions, but is not a left ideal because $b\in L$ but $ab\not\in L$.
Its quotient automaton is shown in Fig.~\ref{fig:notideal}.

If the final state is 2 instead of 1, the language becomes $L'=\Sig\Sig^*b=\Sig^*\Sig b$, which \emph{is} a left ideal.
The languages $L$ and $L'$  have the same syntactic semigroup, but one is a left ideal while the other is not.
\end{example}

\begin{figure}[ht]
\unitlength 10pt\small
\begin{center}\begin{picture}(19,6)(0,8)
\gasset{Nh=2.5,Nw=2.5,Nmr=1.25,ELdist=0.5,loopdiam=1.5}

\node(0)(2,10){0}\imark(0)
\node(1)(8,10){$1$}\rmark(1)
\node(2)(14,10){$2$}
\drawedge(0,1){$a,b$}
\drawedge(1,2){$b$}
\drawloop(1){$a$}
\drawloop(2){$b$}
\drawedge[curvedepth=2](2,1){$a$}
\end{picture}\end{center}
\caption{Quotient DFA of a language that is not a left ideal.}
\label{fig:notideal}
\end{figure}
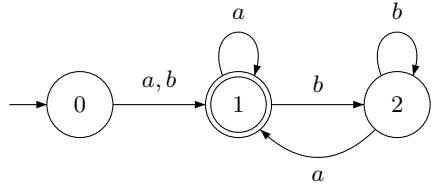

\begin{remark}[\cite{BJL13}]\label{rem:left-ideals_xy}
A language $L\subseteq \Sig^*$ is a left ideal if and only if for all $x,y\in\Sig^*$, $y^{-1} L \subseteq (xy)^{-1}L$.
Hence, if $x^{-1} L \neq L$, then $L\subset x^{-1} L$ for any $x\in\Sig^+$.
\end{remark}

It is useful to restate this observation it terms of the states of $\cD_n$.
For DFA $\cD_n$ and states $p,q \in Q_n$, we write $p \prec q$ if $K_p \subset K_q$. 

\begin{remark}\label{rem:left-ideals_xy2}
A DFA $\cD_n$ is a minimal DFA of a left ideal if and only if for all $s,t\in T_n \cup \{\tid\}$, $0t\preceq 0st$.
If $0t\neq 0$, then $0\prec 0t$ for any $t\in T_n$.
Also, if $r\in Q_n$ has a $t$-predecessor, that is, if there exists $q\in Q_n$ such that $qt=r$, then $0t\preceq r$. 
(This follows because $q=0s$ for some transformation $s$ since $q$ is reachable from 0; hence $0\preceq q$ and $0t \preceq qt =r$.)
In particular, if $r$ appears in a cycle of $t$ or is a fixed point of $t$, then $0t\preceq r$.
\end{remark}

We consider chains of the form $K_{i_1}\subset K_{i_2}\subset\dots \subset K_{i_h}$,
where the $K_{i_j}$ are quotients of $L$. 
If $L$ is a left ideal, the smallest element of any maximal-length chain is always $L$.
Alternatively, we consider chains of states starting from 0 and strictly ordered by $\prec$.
Since $\subseteq$ is a partial order on the quotients $K_{i_j}$, by definition of $\prec$ we have the following:

\begin{proposition}\label{prop:chain}
For $t\in T_n$ and $p, q\in Q_n$, $p \prec q$ implies $pt \preceq qt$.
If $p\prec pt$, then $p \prec pt \prec \dots \prec pt^k = pt^{k+1}$ for some $k \ge 1$. 
Similarly, $p \succ q$ implies $pt \succeq qt$, and $p \succ pt$ implies $p \succ pt \succ \dots \succ pt^k = pt^{k+1}$ for some $k \ge 1$.
\end{proposition}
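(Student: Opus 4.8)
The statement to prove is Proposition~\ref{prop:chain}, which has four parts: (i) $p \prec q \Rightarrow pt \preceq qt$; (ii) $p \prec pt \Rightarrow$ the $t$-orbit of $p$ strictly increases and stabilizes: $p \prec pt \prec \dots \prec pt^k = pt^{k+1}$; (iii)–(iv) the dual statements with $\succ$. The first part is essentially a restatement that applying a transformation (i.e., a right quotient operation on languages) is monotone with respect to language inclusion. I would prove this directly from the definition of $\prec$ in terms of quotients. The second part combines this monotonicity with the finiteness of $Q_n$; the real content is noticing that once a strict inequality appears it must persist until the orbit reaches a fixed point, and it cannot cycle.

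\medskip
\noindent
\textbf{Step 1 (monotonicity).} Let $t$ be induced by the word $w \in \Sig^+$. By definition of the quotient automaton, $K_{pt} = w^{-1} K_p$ and $K_{qt} = w^{-1} K_q$. If $p \prec q$ then $K_p \subseteq K_q$ (in fact strictly, but $\subseteq$ is all we need here), so $w^{-1} K_p \subseteq w^{-1} K_q$, because $x \in w^{-1} K_p$ means $wx \in K_p \subseteq K_q$, hence $wx \in K_q$, i.e. $x \in w^{-1} K_q$. Thus $K_{pt} \subseteq K_{qt}$, which is exactly $pt \preceq qt$. (Note we only get $\preceq$, not $\prec$: distinct quotients can collapse to the same quotient under $w^{-1}$.)

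\medskip
\noindent
\textbf{Step 2 (ascending orbit stabilizes).} Suppose $p \prec pt$. Apply Step 1 repeatedly: from $pt^{i} \preceq pt^{i+1}$ we cannot yet conclude strictness, so instead I argue by the chain structure of $\preceq$ on $Q_n$. Consider the sequence $p, pt, pt^2, \dots$. By Step 1 (used with the pair $pt^{i-1} \preceq pt^{i}$) we get $pt^{i} \preceq pt^{i+1}$ for all $i \ge 0$, so the sequence $K_p \subseteq K_{pt} \subseteq K_{pt^2} \subseteq \cdots$ is a nondecreasing chain of quotients. Since there are only finitely many quotients, this chain stabilizes: there is a least $k$ with $K_{pt^{k}} = K_{pt^{k+1}}$, equivalently $pt^{k} = pt^{k+1}$ (states of a minimal DFA with equal quotients are equal). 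It remains to check the inequalities up to index $k$ are strict, i.e. $pt^{i} \prec pt^{i+1}$ for $0 \le i < k$. This follows from the initial strictness $p \prec pt$ together with the observation that a nondecreasing chain in a finite poset which is strict at its first step and stabilizes only at the end cannot be non-strict in the middle — more carefully: if $pt^{i} = pt^{i+1}$ for some $i < k$ then $pt^{i}$ is already a fixed point, so $pt^{j} = pt^{i}$ for all $j \ge i$; but then from $p \prec pt \preceq \cdots \preceq pt^{i}$ and $i \ge 1$ we'd still have $k \le i$, contradicting minimality of $k$ unless all intermediate steps were strict. I would phrase this cleanly using Proposition~\ref{prop:chain}'s own idea in reverse, or just invoke that a finite $\preceq$-chain starting with a strict step either keeps strictly increasing or reaches a fixed point, and cardinality bounds $k \le n-1$. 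The dual statements (iii), (iv) for $\succ$ are proved identically, reversing all inclusions.

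\medskip
\noindent
\textbf{Main obstacle.} The genuinely delicate point is Step 2: arguing that the intermediate inequalities in the stabilizing orbit are \emph{strict}, not merely $\preceq$. The subtlety is that monotonicity (Step 1) only yields $\preceq$ at each step, so one needs the extra argument that a non-strict step forces immediate stabilization (a fixed point propagates forward), which then pushes the strictness back to the beginning. Everything else — the monotonicity itself and the finiteness argument — is routine, and the dual half is a verbatim mirror image.
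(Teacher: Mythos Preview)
Your proof is correct and follows exactly the reasoning the paper has in mind. In fact, the paper gives no proof at all for this proposition: it simply prefaces the statement with ``Since $\subseteq$ is a partial order on the quotients $K_{i_j}$, by definition of $\prec$ we have the following'' and leaves it at that. Your Step~1 (monotonicity of $w^{-1}$ with respect to inclusion) and Step~2 (finite nondecreasing chain must stabilize) are precisely the routine details behind that one-line justification.

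One minor remark: you slightly overcomplicate the strictness argument in Step~2. Once you define $k$ as the \emph{least} index with $pt^k = pt^{k+1}$, strictness for $i<k$ is immediate: by minimality of $k$ you have $pt^i \neq pt^{i+1}$, and combined with $pt^i \preceq pt^{i+1}$ (from monotonicity) this gives $pt^i \prec pt^{i+1}$ directly. The fixed-point-propagation detour is unnecessary, and calling this the ``genuinely delicate point'' overstates the difficulty. The condition $k \ge 1$ follows from the hypothesis $p \prec pt$, so $p \neq pt$.
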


\subsection{Lower Bound}
We now show that the syntactic complexity of the following DFA of a left ideal is $n^{n-1}+n-1$.

\begin{definition}[Witness: Left Ideals]
\label{def:left}
For $n\ge 3$, let
$\cW_n =(Q_n,\Sig_\cW,\delta_\cW,0,\{n-1\}),$
be the DFA in which
$\Sig_\cW=\{a,b,c,d,e\}$, 
$a\colon (1,\ldots,n-1)$,
$b\colon (1,2)$,
$c\colon (n-1\to 1)$,
$d\colon (n-1\to 0)$,
and $e\colon (Q_n \to 1)$.
For $n=3$, $a$ and $b$ coincide, and we can use $\Sig_\cW=\{a,c,d,e\}$.
Also, let $\cW_2=(Q_2,\{a,b,c\},\delta_\cW,0,\{1\})$, where  
 $a\colon (0\to 1)$,  $b\colon \one$, and $c\colon (Q_2\to 1)$, and let $\cW_1=(Q_1,\{a\},\delta,0,\{0\})$, where 
 $a\colon \one$.
 Let $L_n=L(\cW_n)$.
\end{definition}

The structure of the DFA of Definition~\ref{def:left} is shown in Fig.~\ref{fig:left} for $n\ge 3$.

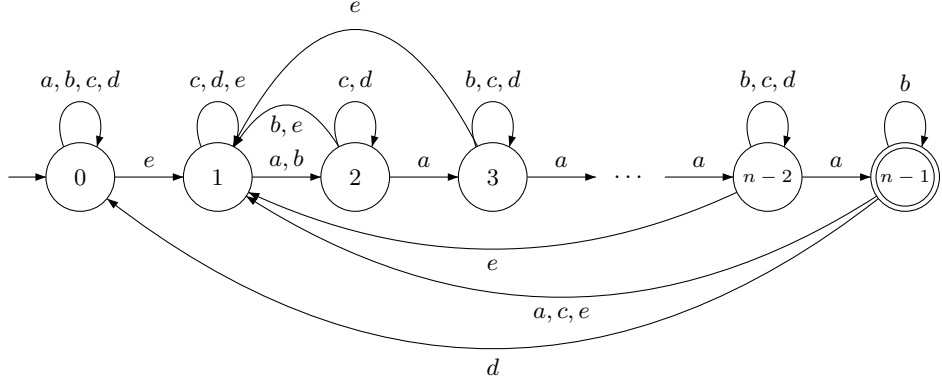
\begin{figure}[ht]
\unitlength 13pt\footnotesize
\begin{center}\begin{picture}(28,12)(0,0)
\gasset{Nh=2,Nw=2,Nmr=1,ELdist=0.3,loopdiam=1.2}
\node(0)(2,6){$0$}\imark(0)
\node(1)(6,6){$1$}
\node(2)(10,6){$2$}
\node(3)(14,6){$3$}
\node[Nframe=n](qdots)(18,6){$\dots$}
\node(n-2)(22,6){\scriptsize$n-2$}
\node(n-1)(26,6){\scriptsize$n-1$}\rmark(n-1)
\drawedge(0,1){$e$}
\drawedge[curvedepth=0,ELdist= 0.2](1,2){$a,b$}
\drawedge(2,3){$a$}
\drawedge(3,qdots){$a$}
\drawedge(qdots,n-2){$a$}
\drawedge(n-2,n-1){$a$}
\drawloop(0){$a,b,c,d$}
\drawloop(1){$c,d,e$}
\drawloop(2){$c,d$}
\drawloop(3){$b,c,d$}
\drawloop(n-2){$b,c,d$}
\drawloop(n-1){$b$}
\drawedge[curvedepth=-2.1](2,1){$b,e$}
\drawedge[curvedepth=-4.3,ELdist=-0.8](3,1){$e$}
\drawedge[curvedepth=2.1](n-2,1){$e$}
\drawedge[curvedepth=3.5](n-1,1){$a,c,e$}
\drawedge[curvedepth=5](n-1,0){$d$}
\end{picture}\end{center}
\caption{Quotient DFA $\cW_n$ of a left ideal with $n^{n-1}+n-1$ transformations.}
\label{fig:left}
\end{figure}

\begin{lemma}\label{lem:left}
The DFA of Definition~\ref{def:left} is minimal, accepts a left ideal, and has transition semigroup of size $n^{n-1}+n-1$ that contains all transformations fixing $0$ and all the constant transformations.
\end{lemma}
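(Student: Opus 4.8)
The plan is to verify the three structural claims (minimality, left ideal, size and contents of the transition semigroup) in order, leaning on Proposition~\ref{prop:piccard} for the ``group part'' of the semigroup and on the special structure of $e$, $d$ for the rest. For $n\le 2$ the claims are checked by hand. For $n\ge 3$, minimality follows by the same pattern as in Lemma~\ref{lem:right}: state $q$ with $1\le q\le n-1$ is reached from $1$ by $a^{q-1}$ (reading off Fig.~\ref{fig:left}), state $0$ is the initial state reached from $n-1$ by $d$, so all states are reachable; for distinguishability, note that among states $1,\dots,n-1$ the word $a^{n-1-q}$ sends $q$ to the unique final state $n-1$ and no other state among $1,\dots,n-1$ there, while $0$ is separated from $1$ because $e$ fixes $1$ but moves $0$ to $1$, and subsequently one of the previous words distinguishes them. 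To see $L_n$ is a left ideal, I would invoke Lemma~\ref{lem:aperiodic}'s converse only informally and instead argue directly: every generator $a,b,c,d,e$ satisfies $0t\preceq qt$ for the ``reachable'' ordering, but the cleanest route is to observe that $n-1$ is the only final state, every letter maps $n-1$ either to itself ($b$) or to $1$ or $0$, and $K_{n-1}=\Sig^*$ is forced since from $n-1$ every letter returns into $\{0,1\}$ and from there $a^{n-1}e\cdots$ — actually the simplest justification: one checks that $0\prec 0t$ whenever $0t\ne 0$ by exhibiting the containment $K_0\subseteq K_q$ for every state $q$ (equivalently, every accepting path from $0$ can be prefixed), which is immediate because $0$ has only the self-loops $a,b,c,d$ and the single exit $e$ to state $1$, and state $1$ dominates $0$; then Remark~\ref{rem:left-ideals_xy2} gives that $\cW_n$ is the minimal DFA of a left ideal.

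For the semigroup size, the target is $n^{n-1}+n-1$, and the natural decomposition is: the $n^{n-1}$ transformations that fix $0$, plus the $n-1$ extra transformations that move $0$. First I would show every transformation in $T_n$ either fixes $0$ or is one of $n-1$ specific maps. The only generator moving $0$ is $e\colon(Q_n\to 1)$, which is constant; and $d\colon(n-1\to 0)$ is the only generator with $0$ in its image other than as a fixed point. A word moves $0$ iff it contains an occurrence of $e$ after which no $d$ later restores $0$... more carefully: I would prove that any composition $w$ with $0w\ne 0$ must act as a constant map or close to it — in fact I expect exactly the $n-1$ constants $(Q_n\to q)$ for $q\in Q_{n-1}$ to be the transformations not fixing $0$ (note $(Q_n\to 0)$ cannot occur and $(Q_n\to n-1)$ is excluded), giving the ``$+\,(n-1)$''. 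All constants $(Q_n\to q)$ for $q\in\{1,\dots,n-1\}$ are realized: $e$ gives $(Q_n\to 1)$, and composing $e$ with a permutation of $Q_{n-1}$ built from $a,b$ (Proposition~\ref{prop:piccard}) moves the target $1$ to any $q\in\{1,\dots,n-1\}$. For the $n^{n-1}$ transformations fixing $0$: restricted to $Q\setminus\{0\}=\{1,\dots,n-1\}$, the letters $a,b,c$ induce, by Proposition~\ref{prop:piccard}, all of $T_{n-1}$ — here $a=(1,\dots,n-1)$ is the full cycle, $b=(1,2)$ the transposition, $c=(n-1\to 1)$ the rank-$(n-1)$ singular map — so every transformation of $\{1,\dots,n-1\}$ extends (by fixing $0$) to an element of $T_n$. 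That accounts for $(n-1)^{n-1}$ maps fixing $0$ pointwise on $\{1,\dots,n-1\}$ ... but I need $n^{n-1}$, i.e. maps fixing $0$ that may send some states of $\{1,\dots,n-1\}$ to $0$: to send a chosen set $\{p_1,\dots,p_k\}\subseteq\{1,\dots,n-1\}$ to $0$, I would use $d$ together with permutations — since $d=(n-1\to 0)$, conjugating by a permutation of $\{1,\dots,n-1\}$ yields $(p\to 0)$ for each $p$, and composing these with a map of $\{1,\dots,n-1\}$ realized by $a,b,c$ produces an arbitrary transformation fixing $0$; this is the exact analogue of the last paragraph of the proof of Lemma~\ref{lem:right}.

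The main obstacle I anticipate is the \emph{upper bound} direction: showing that $T_n$ contains \emph{nothing else}, i.e. that every non-identity-on-$0$ element is one of the $n-1$ constants and no transformation fixing $0$ escapes $T_{n-1}\cup\{\text{maps }\{1,\dots,n-1\}\to Q_n \text{ fixing }0\}$. Concretely, I must argue that once a word applies $e$, the current global image is $\{1\}$ unless a later $d$ re-introduces $0$, and $d$ can only be ``re-exited'' via $e$ again; tracking how $d$ and $e$ interleave to show the only reachable maps with $0$ in the image-closure are the constants and the $0$-fixing transformations requires a short invariant argument (e.g., ``after the last $e$, if it is ever applied, the image is contained in a single $\prec$-chain ending at a constant''). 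I would formalize this via Proposition~\ref{prop:chain} and the observation that the image of any $t\in T_n$ is totally ordered by $\preceq$ whenever it meets $\{0\}$ nontrivially, pinning down exactly $n-1$ such $t$. Everything else is the routine generator-composition bookkeeping sketched above, parallel to Lemma~\ref{lem:right}.
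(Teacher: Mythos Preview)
Your overall strategy matches the paper's: decompose the semigroup into the $n^{n-1}$ transformations fixing $0$ plus $n-1$ further constants, and use Proposition~\ref{prop:piccard} on $\{1,\dots,n-1\}$ together with $d$ to generate the former and $e$ followed by a permutation to realise the latter. However, several points are off or unnecessarily complicated.

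First, a bookkeeping error: the $n-1$ transformations \emph{not} fixing $0$ are the constants $(Q_n\to q)$ for $q\in\{1,\dots,n-1\}$, \emph{including} $(Q_n\to n-1)$; the constant $(Q_n\to 0)$ does occur (e.g.\ via $ea^{n-2}d$) but it fixes $0$ and is already counted among the $n^{n-1}$. Your parenthetical ``$(Q_n\to n-1)$ is excluded'' is wrong, and your index set $q\in Q_{n-1}$ is the wrong one.

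Second, your left-ideal argument never lands. The paper's route is direct and short: show $w\in L\Rightarrow xw\in L$ for each letter $x$. For $x\in\{a,b,c,d\}$ this is immediate since these letters fix state~$0$. For $x=e$: any $w\in L$ has the form $uev$ with $\delta_\cW(0,u)=0$ (because $e$ is the only letter leaving $0$), so $\delta_\cW(0,ue)=1$ and $v$ is accepted from $1$; then $\delta_\cW(0,eue)=1$ as well (since $e$ sends every state to $1$), hence $ew=euev\in L$. Your attempt via Remark~\ref{rem:left-ideals_xy2} would require establishing $K_0\subseteq K_q$ for all $q$, which is essentially the same work phrased less concretely, and you do not actually carry it out.

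Third, your ``main obstacle''---the upper bound on $|T_n|$---is greatly over-engineered. There is no need to track how $d$ and $e$ interleave or to invoke Proposition~\ref{prop:chain}. Simply observe that the set consisting of all maps fixing $0$ together with the constants $(Q_n\to q)$, $q\neq 0$, is \emph{closed under composition}: a $0$-fixing map followed by a $0$-fixing map fixes $0$; anything followed by a constant is that constant; and a constant $(Q_n\to q)$ followed by a $0$-fixing map $t$ is the constant $(Q_n\to qt)$, which is either $0$-fixing (if $qt=0$) or again one of the listed constants. Since every generator $a,b,c,d,e$ lies in this set, the whole transition semigroup does. The paper takes this closure for granted; no word-by-word invariant is needed.

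Finally, your remark that ``$e$ fixes $1$ but moves $0$ to $1$'' does not distinguish $0$ from $1$ (both land at $1$ under $e$). This is harmless, since your $a^{n-1-q}$ argument already separates $0$ from every state in $\{1,\dots,n-1\}$, but the aside is incorrect as stated.
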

\begin{proof}
State $0$ does not accept $a^i$ for any $i$, whereas state $i$ with $1\le i\le n-2$ accepts $a^{n-1-i}$, and no other state of this type accepts this word. Since $n-1$ is the only final state, all states are distinguishable.

To prove that $L$ is a left ideal it suffices to show that for any $w\in L$, we also have $xw\in L$ for every $x\in\Sig$. This is obvious if $x\in\Sig\setminus \{e\}$. If $w\in L$, then $w$ has the form $w =uev$, where $\delta_\cW(0,u)=0$,
$\delta_\cW(0,ue)=1$, and $v$ is accepted from state 1. 
But  $\delta_\cW(0,eue)=1$,  and since $v$ is accepted from 1, we have $euev=ew\in L_n$.
Thus $L_n$ is a left ideal.

In $\cW_n$, the transformations induced by $a$, $b$, and $c$ restricted to $Q_n\setminus \{0\}$ generate all the transformations of the last $n-1$ states (Proposition~\ref{prop:piccard}).
Together with the transformation of $d$, they generate all transformations of $Q_n$ that fix 0, and the number of such transformations is $n^{n-1}$.
To see this, consider any transformation $t$ that fixes 0.
If some states from $\{1,\dots,n-1\}$ are mapped to 0 by $t$, we can map them first to $n-1$ and $n-1$ to one of them by the transformations of $a$, $b$, and $c$, and then map $n-1$ to 0 by the transformation of $d$.

Also the words of the form $e a^i$ for $i \in \{0,\ldots,n-2\}$ induce constant transformations $(Q_n \to i+1)$.
Hence the transition semigroup of $\cW_n$ contains all the constant transformations of $Q_n$ (where $(Q_n \to 0)$ has been already counted before).
Altogether, there are $n^{n-1}+n-1$ transformations in the transition semigroup of $\cW_n$.
\end{proof}

\begin{example}
One verifies that the maximal-length chains of quotients in $\cW_n$ have length~2.
On the other hand, for $n\ge 2$, let $\Sig=\{a,b\}$ and let $L=\Sig^*a^{n-1}$.
Then $L$ has $n$ quotients and the maximal-length chains are of length $n$.
\end{example}

We will see that the maximal length of chains of quotients is an important structural feature; in particular, to meet the bound for syntactic complexity by both left and two-sided ideals, the maximal length of the chains must be the smallest possible.

\subsection{Upper Bound}

The derivation of the upper bound $n^{n-1}+ n-1$ for left ideals is much more difficult that that for right ideals.
Our approach is as follows: We consider a minimal DFA $\cD_n=(Q_n, \Sigma_\cD, \delta_\cD, 0,F)$  of an arbitrary left ideal with $n$ quotients and let $T_n$ be the transition semigroup of $\cD_n$. 
We also deal with the witness DFA $\cW_n =(Q_n,\Sig_\cW,\delta_\cW,0,\{n-1\})$ of Definition~\ref{def:left} that has the same state set as $\cD_n$ and whose transition semigroup is $S_n$. We shall show that there is an injective mapping $f\colon T_n\to S_n$, and this will prove that $|T_n|\le |S_n|$.

\begin{remark}\label{rem:smalln}
If $n=1$, the only left ideal is $\Sig^*$ and the transition semigroup of its minimal DFA satisfies the bound
$1^0+1-1=1$.
If $n=2$, there are only three allowed transformations, since the transposition
$(0,1)$ is not initially aperiodic and is ruled out by Lemma~\ref{lem:aperiodic}.
Thus the bound $2^1+2-1=3$ holds.
\end{remark}

\begin{lemma}\label{lem:chain2}
If $n\ge 3$ and a maximal-length chain in $\cD_n$ strictly ordered by $\prec$ has length 2, then $|T_n|\le n^{n-1}+n-1$ and $T_n$ is a subsemigroup of $S_n$.
\end{lemma}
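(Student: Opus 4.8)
The plan is to exploit the structural assumption heavily: if every maximal chain ordered by $\prec$ has length $2$, then (using Remark~\ref{rem:left-ideals_xy2} and Proposition~\ref{prop:chain}) the state $0$ is the unique minimum, $0 \prec q$ for every $q$ reached from $0$ by a nonempty word, and no two states $p,q \in Q_n \setminus \{0\}$ satisfy $K_p \subset K_q$. I would first record the consequences this has for an arbitrary $t \in T_n$. Since every transformation is initially aperiodic (Lemma~\ref{lem:aperiodic}), the sequence $0, 0t, 0t^2, \dots$ stabilizes at some fixed point $0t^k$ of $t$; and since any state lying in a cycle of $t$ or fixed by $t$ satisfies $0t \preceq r$ (Remark~\ref{rem:left-ideals_xy2}), combined with the chain-length-$2$ hypothesis, either $0t = 0$ or $0t$ is already a fixed point of $t$ — indeed if $0 \prec 0t \prec 0t^2$ were strict we would get a chain of length $\ge 3$. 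So for every $t \in T_n$, either $0t = 0$, or $0t = 0t^2 = r$ for some state $r$ with $0 \prec r$, and moreover \emph{every} point in the range of $t$ that is not $0$ must be $\succeq r$; by chain-length $2$ this forces the whole range of $t$ (minus possibly $0$) to collapse onto $\{r\}$, i.e. $t$ is essentially a constant map to $r$ together with possibly sending some states to $0$. But $0 \prec r$ means $0$ itself cannot be sent to $0$ unless $0t = 0$. Hence the transformations of $T_n$ split into two types: (i) those fixing $0$, and (ii) those mapping $0$ to some $r \ne 0$, and these latter ones map every state to $r$ except possibly a subset that goes to $0$ — but if any state went to $0$, that state $q$ would have $0t \preceq q$, i.e. $r \preceq q$, and $qt = 0 \prec r$, contradicting Proposition~\ref{prop:chain} applied along $r \preceq q$. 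So type (ii) transformations are exactly the constant maps $(Q_n \to r)$ with $r \ne 0$.

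Next I would count. Type (i) transformations fix $0$, so there are at most $n^{n-1}$ of them; type (ii) transformations are constant maps onto one of the $n-1$ states other than $0$, giving at most $n-1$ of them; the constant map onto $0$ is already counted in type (i). These two families are disjoint because a constant map onto $r \ne 0$ does not fix $0$. Hence $|T_n| \le n^{n-1} + (n-1)$, which is the desired bound.

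For the second claim, that $T_n$ is a subsemigroup of $S_n$: by Lemma~\ref{lem:left}, $S_n$ contains \emph{all} $n^{n-1}$ transformations fixing $0$ and \emph{all} $n$ constant transformations. Since every element of $T_n$ is either a transformation fixing $0$ or a constant transformation (by the case analysis above), we get $T_n \subseteq S_n$ as sets; and $T_n$ is closed under composition (being a transition semigroup), so it is a subsemigroup of $S_n$.

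The main obstacle I anticipate is the case analysis pinning down the structure of an arbitrary $t \in T_n$ — specifically, rigorously ruling out ``mixed'' transformations that send some states to $0$ and others to various states $\succeq r$ without being constant. The key tool is Remark~\ref{rem:left-ideals_xy2} (every state with a $t$-predecessor, in particular every fixed point and every cyclic point, lies $\succeq 0t$) together with Proposition~\ref{prop:chain} (monotonicity of $t$ along $\prec$) and the hypothesis that forbids chains of length $3$; one has to check these interact correctly both when $0t = 0$ (where $t$ can be an arbitrary map fixing $0$, since then no lower bound $r \succ 0$ is imposed) and when $0t = r \succ 0$ (where the length-$2$ restriction squeezes everything onto $\{0, r\}$ and then monotonicity kills the $0$-fibre unless $t$ fixes $0$, forcing a genuine constant map). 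I would also double-check the boundary bookkeeping that $(Q_n \to 0)$ is not double-counted and that for $n = 3$, where $a$ and $b$ coincide, nothing in the argument breaks — but that is routine given Remark~\ref{rem:smalln} handles the truly small cases.
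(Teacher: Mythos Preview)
Your approach is essentially the same as the paper's: split on whether $0t=0$ or not, and in the latter case use Remark~\ref{rem:left-ideals_xy2} plus the chain-length-$2$ hypothesis to force $t$ to be a constant map onto $0t$. The counting and the subsemigroup conclusion are fine.

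There is, however, a genuine slip in your argument ruling out states mapped to $0$ when $0t=r\ne 0$. You assert that if $qt=0$ then ``that state $q$ would have $0t\preceq q$, i.e.\ $r\preceq q$''. Nothing you have cited gives $r\preceq q$: Remark~\ref{rem:left-ideals_xy2} bounds states \emph{in the range} of $t$ below by $0t$, not arbitrary states in the domain, and in fact under the length-$2$ hypothesis any two nonzero states are $\prec$-incomparable, so $r\preceq q$ is typically false. The correct (and shorter) fix is to apply Remark~\ref{rem:left-ideals_xy2} to the state $0$ itself: if some $q$ has $qt=0$ then $0$ has a $t$-predecessor, so $0t\preceq 0$, i.e.\ $r\preceq 0$, contradicting $0\prec r$. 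Equivalently, you already observed that every element of the range is $\succeq r$; this immediately excludes $0$ from the range without the detour through~$q$.

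The paper's version is more direct: for $p=0t\ne 0$ it applies Remark~\ref{rem:left-ideals_xy2} once to get $p\preceq qt$ for every $q$, and then the chain-length bound forces $qt=p$ in one line, bypassing the intermediate step of showing $p$ is a fixed point.
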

\begin{proof}
Consider an arbitrary transformation $t\in T_n$ and let $p=0t$. 
If $p=0$, then any state other than $0$ can possibly be mapped by $t$  to any one of the $n$ states; hence there are at most $n^{n-1}$ such transformations. All of these transformations are in $S_n$ by the proof of Lemma~\ref{lem:left}.

If $p\neq 0$, then $0\prec p$. Consider any state $q\not \in \{0,p\}$; by Remark~\ref{rem:left-ideals_xy2}, $p\preceq qt$.
If $p \neq qt$, then $p\prec qt$.
But then we have the chain $0\prec p\prec qt$ of length 3, contradicting our assumption.
Hence we must have $p=qt$, and so $t$ is the constant transformation $t= (Q_n\to p)$.
Since $p$ can be any one of the $n-1$ states other than 0, we have at most $n-1$ such transformations. 
Since all of these transformations are in $S_n$ by Lemma~\ref{lem:left}, $T_n$ is a subsemigroup of $S_n$.
\end{proof}

\begin{lemma}[Left Ideals, Suffix-Closed Languages]\label{lem:left_upper-bound}
If $n\ge 3$ and $L$ is a left ideal or a suffix-closed language with $n$ quotients, then its syntactic complexity is less than or equal to 
$n^{n-1}+n-1$.
\end{lemma}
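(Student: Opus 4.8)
The plan is to prove $|T_n| \le n^{n-1}+n-1$ by reducing to the case handled by Lemma~\ref{lem:chain2}, namely the case where the longest $\prec$-chain in $\cD_n$ has length $2$. By Lemma~\ref{lem:aperiodic} every transformation in $T_n$ is initially aperiodic, and by Remark~\ref{rem:left-ideals_xy2} the state $0$ is the $\prec$-minimum. The key structural observation is that if some transformation moves $0$ (so $0 \prec 0t$ for some $t$), then the length of a maximal $\prec$-chain is an invariant worth tracking; if \emph{every} $t \in T_n$ fixes $0$, then $|T_n| \le n^{n-1}$ trivially and we are done. So we may assume there is a chain $0 \prec p_1 \prec \dots \prec p_{h-1}$ of length $h \ge 2$, and by Lemma~\ref{lem:chain2} we are done if $h = 2$.

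The main work is therefore the case $h \ge 3$. Here I would set up the injective function $f\colon T_n \to S_n$ announced in the discussion preceding the lemma: fix an arbitrary minimal DFA $\cD_n$ of a left ideal whose longest chain has length $h \ge 3$, and construct an injection into the transition semigroup $S_n$ of the witness $\cW_n$, which by Lemma~\ref{lem:left} contains all $n^{n-1}$ transformations fixing $0$ together with all $n$ constant transformations. The idea is to exploit the chain: the states lying on (or dominated by) a long chain are severely constrained in how transformations can act on them, using Proposition~\ref{prop:chain} (which says $p \prec q \Rightarrow pt \preceq qt$, so images of chain elements again form a chain) and Remark~\ref{rem:left-ideals_xy2} (every state with a $t$-predecessor, in particular every fixed point and every state on a cycle of $t$, lies $\succeq 0t$). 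One shows that a transformation $t \in T_n$ is determined by, on one hand, its behavior on a small "free" part of $Q_n$ (roughly $Q_n$ minus the long chain, giving at most $(n-h+1)^{\,n-h+1}$-ish many choices once one accounts for the forced fixing of $0$) and, on the other hand, a bounded amount of combinatorial data describing how the chain states are routed; the product of these counts must be shown not to exceed $n^{n-1}+n-1$. The cleanest way to package this is: assign to each $t$ a transformation $f(t) \in S_n$ that mimics $t$ on the free part, sends $0$ to $0$ unless $0t \ne 0$ (in which case $t$ is forced onto a short list of constant-like transformations by the chain, and $f$ sends it to the corresponding constant in $S_n$), and records the chain-routing via the structure available in $\cW_n$ (whose $a$-chain $1 \to 2 \to \dots \to n-1$ and feedback edges $e, c, d$ are precisely the "chain machinery"). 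One then checks $f$ is injective by recovering $t$ from $f(t)$.

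The hard part will be making the injection $f$ genuinely well-defined and injective when $3 \le h \le n-1$ can take intermediate values — the naive "free part $\times$ chain data" count is comfortably below $n^{n-1}$ only when $h$ is not too small, and when $h = 3$ one is closest to the bound, so the argument near $h=3$ needs the chain constraints (from Proposition~\ref{prop:chain} and Remark~\ref{rem:left-ideals_xy2}) to be used tightly rather than wastefully. I would handle this by an induction-like decomposition on $h$: peel off the top of the chain, argue that the transformations restricted to the smaller "effective" state set satisfy a length-$(h-1)$ bound, and bookkeep the extra freedom contributed by the peeled state, checking at each step that the total stays within $n^{n-1}+n-1$; alternatively, do the $h=3$ case by a direct refinement of the Lemma~\ref{lem:chain2} argument (where now $0 \prec p \prec qt$ is \emph{allowed} for at most the states $q$ above $p$, forcing $t$ to be "eventually constant on the chain") and show $h \ge 4$ only loses room. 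Finally, since suffix-closed languages are complements of left ideals and syntactic complexity is invariant under complementation, the bound for suffix-closed languages follows immediately, completing the proof.
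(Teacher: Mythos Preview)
Your proposal names the right mechanism (an injection $f\colon T_n \to S_n$) but never actually builds it, and the one concrete rule you propose for $f$ is wrong. You write that when $0t\ne 0$, ``$t$ is forced onto a short list of constant-like transformations by the chain, and $f$ sends it to the corresponding constant in $S_n$.'' That statement is exactly the content of Lemma~\ref{lem:chain2} and is valid \emph{only} when $h=2$; for $h\ge 3$ it fails. Once a chain $0\prec p\prec r$ exists there are many non-constant $t\in T_n$ with $0t\ne 0$ --- e.g.\ with $\Sigma=\{a,b\}$ and $L=\Sigma^* a^{\,n-1}$ (chain length $n$) the letter $a$ induces $i\mapsto \min(i{+}1,n{-}1)$, which moves $0$ and is far from constant. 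Such $t$ vastly outnumber the $n-1$ non-trivial constants available in $S_n$, so you cannot route them there. The injection must instead send most of these $t$ to elements of $S_n$ that fix $0$ yet lie \emph{outside} $T_n$, and manufacturing suitable, pairwise distinct targets is the whole substance of the proof; you have not supplied this.

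The paper's construction analyses each individual $t$, not the global chain length $h$. The cases are: (1)~$t\in S_n$, where $f(t)=t$; (2)~$t\notin S_n$ and $0t^2\ne 0t$, where the strictly $\prec$-increasing orbit $p,pt,\ldots,pt^k$ of $p=0t$ is closed into a cycle in $f(t)$; (3)~$t\notin S_n$ and $0t^2=0t$, split further according to whether $t$ has a nontrivial cycle, an extra fixed point, or a state $r\succ p$ with $rt=p$. In each case $f(t)$ is obtained from $t$ by redirecting two or three transitions so as to fix $0$ (hence land in $S_n$) while creating a feature --- a $\prec$-increasing cycle, or a state $\succ p$ sent to $0$, etc.\ --- that violates Proposition~\ref{prop:chain} or Remark~\ref{rem:left-ideals_xy2} (so $f(t)\notin T_n$) and simultaneously tags which case produced it (so $f$ is injective across cases). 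Your ``induction on $h$ by peeling off the top of the chain'' does not get traction: $h$ is a property of the fixed DFA, there is no smaller left-ideal DFA to recurse on, and a ``free part $\times$ chain data'' count does not separate cleanly because Proposition~\ref{prop:chain} couples every pair of $\prec$-comparable states, not just those on one maximal chain.
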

\begin{proof}
It suffices to prove the result for left ideals, since suffix-closed languages are their complements.

For a transformation $t \in T_n$, consider the following cases:
\goodbreak
\smallskip

\noindent
\hglue 15pt  
{\bf Case 1:} $t \in S_n$. \\
Let $f(t) = t$; obviously $f(t)$ is injective.
\smallskip

\noindent
\hglue 15pt  
{\bf Case 2:} $t \not\in S_n$ and $0t^2 \neq 0t$.\\
Note that $t \not\in S_n$ implies $0 t \neq 0$ by Lemma~\ref{lem:left}.
Let $0t=p$.
We have $p=0t \prec 0tt=pt$ by Remark~\ref{rem:left-ideals_xy2}.
Let $p \prec \dots \prec p t^k = p t^{k+1}$ be the chain defined from $p$; this chain is of length at least 2.
Let $f(t)=s$, where $s$ is the transformation defined by

\begin{center} 
$0 s = 0, \quad p t^k s = p, \quad q s = q t \text{ for the other states } q\in Q_n.$
\end{center}
Transformation $s$ is shown in Fig.~\ref{fig:left-case2}, where the dashed transitions show how $s$ differs from $t$.
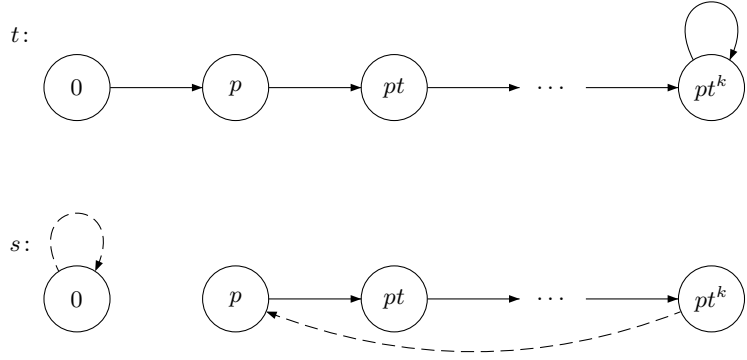
\begin{figure}[ht]
\unitlength 10pt\footnotesize
\begin{center}\begin{picture}(26,14)(0,0)
\gasset{Nh=2.5,Nw=2.5,Nmr=1.25,ELdist=0.5,loopdiam=2}
\node[Nframe=n](name)(0,12){$t\colon$}
\node(0)(2,10){0}
\node(p)(8,10){$p$}
\node(pt)(14,10){$pt$}
\node[Nframe=n](pdots)(20,10){$\dots$}
\node(pt^k)(26,10){$pt^k$}
\drawedge(0,p){}
\drawedge(p,pt){}
\drawedge(pt,pdots){}
\drawedge(pdots,pt^k){}
\drawloop[loopangle=90](pt^k){}

\node[Nframe=n](name)(0,4){$s\colon$}
\node(0')(2,2){0}
\node(p')(8,2){$p$}
\node(pt')(14,2){$pt$}
\node[Nframe=n](pdots')(20,2){$\dots$}
\node(pt^k')(26,2){$pt^k$}
\drawloop[loopangle=90,dash={.5 .25}{.25}](0'){}
\drawedge(p',pt'){}
\drawedge(pt',pdots'){}
\drawedge(pdots',pt^k'){}
\drawedge[curvedepth=2,dash={.5 .25}{.25}](pt^k',p'){}
\end{picture}\end{center}
\caption{Case~2 in the proof of Lemma~\ref{lem:left_upper-bound}.}
\label{fig:left-case2}
\end{figure}

By Lemma~\ref{lem:left}, $s \in S_n$.
However, $s\not\in T_n$, as it contains the cycle $(p, \ldots, p t^k)$ with states strictly ordered by $\prec$ in DFA $\cD_n$, which contradicts Proposition~\ref{prop:chain}.
Since $s \not\in T_n$, it is distinct from the transformations defined in Case~1.

In going from $t$ to $s$, we have added one transition ($0s=0$) that is a fixed point, and one ($pt^ks=p$) that is not.
Since only one non-fixed-point transition has been added, there can be only one cycle in $s$ with states strictly ordered by $\prec$.
Since $0$ cannot appear in this cycle, $p$ is its smallest element with respect to $\prec$.

Suppose now that $t'\neq t$ is another transformation that satisfies Case~2, that is, $0 t' = p' \neq 0$ and $p' t' \neq p'$; 
we shall show that $f(t) \neq f(t')$.
Define $s'$ for $t'$ as $s$ was defined for $t$.
For a contradiction, assume $s = f(t) = f(t')=s'$. 

Like $s$, $s'$ contains only one cycle strictly ordered by $\prec$, and $p'$ is its smallest element.
Since we have assumed that $s=s'$, we must have $p=0t=0t'=p'$ and the cycles in $s$ and $s'$ must be identical. 
In particular, $p t^k t = p t^k = p (t')^k t' = p (t')^k$.
For $q$ of $Q_n\setminus \{0,pt^k\}$, we have $qt=qs=qs'=qt'$.  
Hence $t = t'$---a contradiction. 
Therefore $t\neq t'$ implies $f(t)\neq f(t')$.
\smallskip

\noindent
\hglue 15 pt
{\bf Case 3:}  $t \not\in S_n$ and $0t^2 = 0t$. \\
As before, let $0t=p$. Consider any state $q\not\in \{0,p\}$; then
$0\prec q$ by Remark~\ref{rem:left-ideals_xy2} and $0t\preceq qt$ by Proposition~\ref{prop:chain}.
Thus either $p\prec qt$, or $p= qt$.
We consider the following sub-cases:
\smallskip

\noindent
\hglue 15 pt
$\bullet$ {\bf (a):} $t$ has a cycle.\\
Since $t$ has a cycle, take a state $r$ from the cycle; then $r$ and $rt$ are not comparable under $\preceq$ by Proposition~\ref{prop:chain}, and $p \prec r$ by Remark~\ref{rem:left-ideals_xy2}.
Let $f(t) = s$, where $s$ is the transformation shown in Figure~\ref{fig:left-case3a} and defined by
\begin{center}
$0 s = 0, \quad p s = r, \quad q s = q t \text{ for the other states } q \in Q_n.$
\end{center}

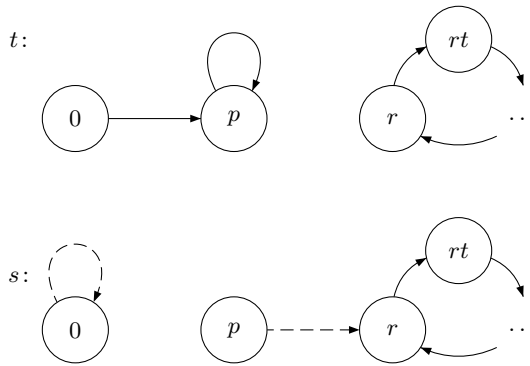
\begin{figure}[ht]
\unitlength 10pt\footnotesize
\begin{center}\begin{picture}(19,14)(0,0)
\gasset{Nh=2.5,Nw=2.5,Nmr=1.25,ELdist=0.5,loopdiam=2}
\node[Nframe=n](name)(0,13){$t\colon$}
\node(0)(2,10){0}
\node(p)(8,10){$p$}
\node(r)(14,10){$r$}
\node(rt)(16.5,13){$rt$}
\node[Nframe=n](rdots)(19,10){$\dots$}
\drawedge(0,p){}
\drawloop[loopangle=90](p){}
\drawedge[curvedepth=1](r,rt){}
\drawedge[curvedepth=1](rt,rdots){}
\drawedge[curvedepth=1](rdots,r){}
\node[Nframe=n](name)(0,4){$s\colon$}
\node(0')(2,2){0}
\node(p')(8,2){$p$}
\node(r')(14,2){$r$}
\node(rt')(16.5,5){$rt$}
\node[Nframe=n](rdots')(19,2){$\dots$}
\drawloop[loopangle=90,dash={.5 .25}{.25}](0'){}
\drawedge[dash={.5 .25}{.25}](p',r'){}
\drawedge[curvedepth=1](r',rt'){}
\drawedge[curvedepth=1](rt',rdots'){}
\drawedge[curvedepth=1](rdots',r'){}
\end{picture}\end{center}
\caption{Case~3(a) in the proof of Lemma~\ref{lem:left_upper-bound}.}
\label{fig:left-case3a}
\end{figure}

By Lemma~\ref{lem:left}, $s \in S_n$. 
Suppose that $s \in T_n$; since $p \prec r$, we have 
$r=ps \preceq rs=rt$ by the definition of $s$ and Proposition~\ref{prop:chain};
this contradicts that $r$ and $rt$ are not comparable.
Hence $s \not\in T_n$, and so $s$ is distinct from the transformations of Case~1.

We claim that $p$ is not in a cycle of $s$; this cycle would have to be
\begin{center}
 $p\stackrel{s }{\rightarrow} r \stackrel{s }{\rightarrow} rt \stackrel{s }{\rightarrow} \dots \stackrel{s }{\rightarrow} rt^{k-1}
\stackrel{s}{\rightarrow} p,
\text { that is, }
p\stackrel{s }{\rightarrow} r \stackrel{t }{\rightarrow} rt \stackrel{t }{\rightarrow} \dots \stackrel{t }{\rightarrow} rt^{k-1}
\stackrel{t}{\rightarrow} p,
$
\end{center}
\goodbreak
\noin
for some $k\ge 2$
because $r\neq p=pt$ and $rt\neq p$.
Since $p\prec r$  we have $p \prec rt$; but then we have a chain
$p\prec rt \prec \dots \prec rt^{k}=p$, contradicting Proposition~\ref{prop:chain}.

Since $p$ is not in a cycle of $s$, it follows that $s$ does not contain a cycle with states strictly ordered by $\prec$, as such a cycle would also be in $t$. So $s$ is distinct from the transformations of Case~2.

We claim there is a unique state $q$ such that (a) $0 \prec q \prec q s$, (b) $q s \not\preceq q s^2$.
First we show that $p$ satisfies these conditions: 
(a) holds because $ps=r$ and $p\prec r$;
(b) holds because $ps=r$, $ps^2=rt$ and $r$ and $rt$ are not comparable.
Now suppose that $q$ satisfies the two conditions, but $q \neq p$.
Note that $qs\neq p$, because $qs=p$ implies $qs=p\prec r=qs^2$, contradicting (b).
Since $q,q s \not\in \{0,p\}$, we have $q t = q s \not\preceq q s^2 = q t^2$. But Proposition~\ref{prop:chain} for $q \prec q t$ implies that $q t \preceq q t^2$---a contradiction.
Thus $p$ is the only state satisfying these conditions.

If $t' \neq t$ is another transformation satisfying the conditions of this case, we define $s'$ like $s$. Suppose that $s = f(t) = f(t') = s'$. Since both $s$ and $s'$ contain a unique state $p$ satisfying the two conditions above, we have $0 t = 0 t' = p$ and $p t = p t' = p$. 
Since the other states  are mapped by $s$ exactly as by $t$ and $t'$, we have $t = t'$.

\noindent
\hglue 15pt
$\bullet$ {\bf (b):} $t$ has no cycles and has a fixed point $r\neq p $.\\
Because $0\prec r$ by Remark~\ref{rem:left-ideals_xy2},
$0t\preceq rt$ by Proposition~\ref{prop:chain}.
If $r$ is a fixed point of $t$, then $p=0t \preceq rt=r$.
Since $r\neq p$, we have $p\prec r$.
Let $f(t)=s$, where $s$ is the transformation shown in Figure~\ref{fig:left-case3b} and defined by
\begin{center}
$0s=0, \quad q s = 0 \text{ for each fixed point } q\neq p,$
$q s = q t \text{ for the other states } q\in Q_n.$
\end{center}

By Lemma~\ref{lem:left}, $s \in S_n$.
Suppose that $s \in T_n$; because $p \prec r$, $ps=p$, $rs=0$, and $ps \preceq rs$ by Proposition~\ref{prop:chain}, we have $p \prec 0$, which is a contradiction.
Hence $s$ is not in $T_n$
and so is distinct from the transformations of Case~1. Also, $s$ maps at least one state other than $0$ to $0$, and so is distinct from the transformations of Case~2  and also from the transformations of Case~3(a).

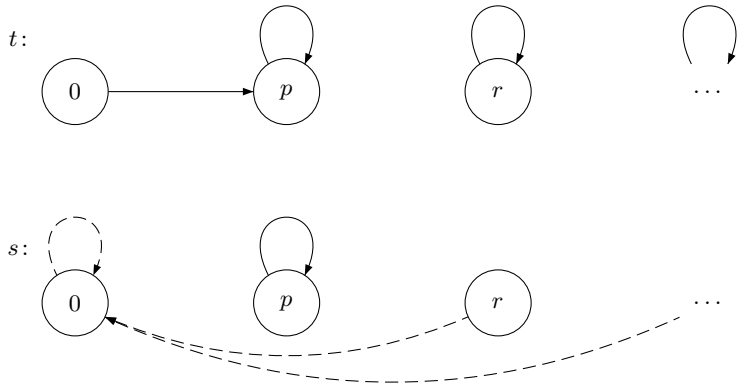
\begin{figure}[ht]
\unitlength 10pt\footnotesize
\begin{center}\begin{picture}(28,14)(0,0)
\gasset{Nh=2.5,Nw=2.5,Nmr=1.25,ELdist=0.5,loopdiam=2}
\node[Nframe=n](name)(0,12){$t\colon$}
\node(0)(2,10){0}
\node(p)(10,10){$p$}
\node(r)(18,10){$r$}
\node[Nframe=n](rdots)(26,10){$\dots$}
\drawedge(0,p){}
\drawloop[loopangle=90](p){}
\drawloop[loopangle=90](r){}
\drawloop[loopangle=90](rdots){}

\node[Nframe=n](name)(0,4){$s\colon$}
\node(0')(2,2){0}
\node(p')(10,2){$p$}
\node(r')(18,2){$r$}
\node[Nframe=n](rdots')(26,2){$\dots$}
\drawloop[loopangle=90,dash={.5 .25}{.25}](0'){}
\drawloop[loopangle=90](p'){}
\drawedge[curvedepth=2,ELpos=30,dash={.5 .25}{.25}](r',0'){}
\drawedge[curvedepth=3,ELpos=30,dash={.5 .25}{.25}](rdots',0'){}
\end{picture}\end{center}
\caption{Case~3(b) in the proof of Lemma~\ref{lem:left_upper-bound}.}
\label{fig:left-case3b}
\end{figure}

If $t'\neq t$ is another transformation satisfying the conditions of this case, we define $s'$ like $s$.
Now suppose that $s = f(t) = f(t')=s'$.
There is only one fixed point of $s$ other than $0$ ($p s = p$), and only one fixed point of $s'$ other than 0 ($p' s' = p'$); hence $0 t= p= p'= 0 t'$. 
By the definition of $s$, for each state $q\neq 0$ such that $q s = 0$, we have $q t = q$.
Similarly, for each state $q\neq 0$ such that $qs'=0$, we have $q t' = q$. 
Hence $t$ and $t'$ agree on these states.
Since the remaining states are mapped by $s$ exactly as they are mapped by $t$ and $t'$, we have $t = t'$.
Thus we have proved that $t\neq t'$ implies $f(t)\neq f(t')$. 
\smallskip

\noindent
\hglue 15 pt
$\bullet$ {\bf (c):} $t$ has no cycles, has no fixed point $r\neq p$, and there is a state $r$ 
such that $p\prec r$ with $rt=p$.\\
Let $f(t) = s$, where $s$ is the transformation shown in Figure~\ref{fig:left-case3c} and defined by
\begin{center}
$0 s = 0, \hspace{.2cm} p s = r, \hspace{.2cm} q s = 0 \text{ for each $q \succ p$ such that } q t = p,$

$\hspace{.2cm} q s = q t \text{ for the other states } q \in Q_n.$
\end{center}
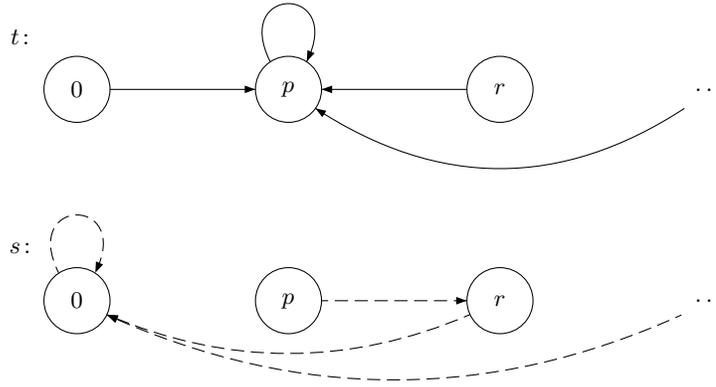
\begin{figure}[ht]
\unitlength 10pt\footnotesize
\begin{center}\begin{picture}(26,14)(0,-0.5)
\gasset{Nh=2.5,Nw=2.5,Nmr=1.25,ELdist=0.5,loopdiam=2}
\node[Nframe=n](name)(0,12){$t\colon$}
\node(0)(2,10){0}
\node(p)(10,10){$p$}
\node(r)(18,10){$r$}
\node[Nframe=n](rdots)(26,10){$\dots$}
\drawedge(0,p){}
\drawloop[loopangle=90](p){}
\drawedge(r,p){}
\drawedge[curvedepth=3](rdots,p){}

\node[Nframe=n](name)(0,4){$s\colon$}
\node(0')(2,2){0}
\node(p')(10,2){$p$}
\node(r')(18,2){$r$}
\node[Nframe=n](rdots')(26,2){$\dots$}
\drawloop[loopangle=90,dash={.5 .25}{.25}](0'){}
\drawedge[dash={.5 .25}{.25}](p',r'){}
\drawedge[curvedepth=2,ELpos=30,dash={.5 .25}{.25}](r',0'){}
\drawedge[curvedepth=3,ELpos=30,dash={.5 .25}{.25}](rdots',0'){}
\end{picture}\end{center}
\caption{Case~3(c) in the proof of Lemma~\ref{lem:left_upper-bound}.}
\label{fig:left-case3c}
\end{figure}

By Lemma~\ref{lem:left}, $s \in S_n$. 
Suppose that $s \in T_n$; because $p \prec r$, $p s = r$, $r s = 0$, and $r= p s \prec r s=0$ by Proposition~\ref{prop:chain}, we have $r \prec 0$---a contradiction. Hence $s \not\in T_n$ and $s$ is distinct from the transformations of Case~1.

Because $s$ maps at least one state other than $0$ to $0$ ($r s = 0$), it is distinct from the transformations of Case~2 and~3(a).
Also $s$ does not have a fixed point other than $0$, while the transformations of Case~3(b) have such a fixed point.

We claim that there is a unique state $q$ such that (a) $0 \prec q \prec q s$ and (b) $q s^2 = 0$. First we show that $p$ satisfies these conditions.
By assumption $0\prec p\prec r$ and  $rt=p$; also $rs=0$
by the definition of $s$.
Condition (a) holds because $0 \prec p \prec r = p s$, and (b) holds because $0 = r s = p s^2  $.

Now suppose that $0 \prec q \prec q s$, $q s^2 = 0$ and $q \neq p$.
Since $q s \neq 0$, we have $q s = q t$ by the definition of $s$. 
Because $q t$ has a $t$-predecessor, $p \preceq q t$ by Remark~\ref{rem:left-ideals_xy2}.
Also $q t = q s \neq p$, for $qs = p$ implies $0=qs^2=ps=r$---a contradiction.
Hence $p\prec q t$.
From $qt = qs$ and $q \prec q s$, we have $q \prec q t$.
Since $q s^2 = 0$ we have $(q t) s =0$ and so $(q t) t = p$, by the definition of $s$.
By Proposition~\ref{prop:chain}, from $q \prec q t $ we have $q t \preceq (q t) t = p$, contradicting $p \prec q t$.
So $q=p$.

If $t' \neq t$ is another transformation satisfying the conditions of this case, we define $s'$ like $s$. Suppose that $s = f(t) = t(t') = s'$. Since $s$ and $s'$ contain a unique state $p$ satisfying the two conditions above, we have $0 t = 0 t' = p$ and $p t = p t' = p$. Then $r$ and the states $q \succ p$ with $qt=p$ are determined by $p$, since they are precisely the states $q \succ p$ with $qs = 0$. Since the other states are mapped by $s$ exactly as by $t$ and $t'$, we have $t = t'$, and $f$ is again injective.
\smallskip

\hglue 5pt $\bullet$ {\bf All cases are covered:} \\ 
See the appendix for a list of the cases.
We need to ensure that any transformation $t$ fits in at least one case.
It is clear that $t$ fits in Case~1 or~2 or~3.
For Case~3, it is sufficient to show that if (i) $t \not\in S_n$ does not contain a fixed point $r\neq p$, and (ii) there is no state $r$ with $p \prec r$ and $r t = p$, then $t$ contains a cycle and so fits in Subcase~3(c).

First, if there is no $r$ such that $p\prec r$, we claim that $t$ is the constant transformation $(Q_n\to p)$, thus it fits in Case~1.
Consider any state $q \in Q_n$ such that $q t \neq p$.
Then $p\prec qt$ by Remark~\ref{rem:left-ideals_xy2}, contradicting that there is no state $r= qt$ such that $p\prec r$.

So let $t$ be a transformation that fits in Case~3 and satisfies~(i) and~(ii), and let $r$ be some state such that $p\prec r$.
Consider the sequence $r, rt, rt^2, \ldots$.
By Remark~\ref{rem:left-ideals_xy2}, $p \preceq rt^i$ for all $i \ge 0$.
If $rt^k = p$ for some $k \ge 1$, let $k$ be the smallest such number, then $rt^{k-1} \neq p$; we have $p \prec rt^{k-1}$ and $(rt^{k-1}) t = p$, contradicting (ii).
Since $p$ is the only fixed point by~(i), we have $rt^i \neq rt^{i-1}$ for all $i \ge 1$.
Since there are finitely many states, $rt^i = rt^j$ for some $i$ and $j$ such that $0 \le i < j-1$, and so the states $rt^i,rt^{i+1}, \ldots, rt^j=rt^i$ form a cycle.
\smallskip

We have shown that for every transformation $t$ in $T_n$ there is a corresponding transformation $f(t)$ in $S_n$, and $f$ is injective.
So $|T_n|\leq |S_n|=n^{n-1}+n-1$.
\end{proof}

Next we prove that $S_n$ is the only transition semigroup meeting the bound. It follows that minimal DFAs of left ideals with the maximal syntactic complexity have maximal-length chains of length $2$.

\begin{theorem}\label{thm:left-ideals_unique-maximal}
If $T_n$ has size $n^{n-1}+n-1$, then $T_n=S_n$. 
\end{theorem}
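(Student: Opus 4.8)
The plan is to show that the hypothesis $|T_n|=n^{n-1}+n-1$ forces the longest chain of states of $\cD_n$ strictly ordered by $\prec$ to have length exactly $2$, and then to conclude with Lemma~\ref{lem:chain2}. Indeed, once the maximal chain has length $2$, Lemma~\ref{lem:chain2} tells us that $T_n$ is a subsemigroup of $S_n$, and since $|S_n|=n^{n-1}+n-1=|T_n|$ this gives $T_n=S_n$. The cases $n\le 2$ are immediate from Remark~\ref{rem:smalln} (for $n=2$ the only admissible transformations are $\tid$, $(0\to1)$, and $(Q_2\to1)$), so I assume $n\ge 3$.

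To establish the length claim I would argue by contradiction: suppose $\cD_n$ has a chain $0\prec p\prec q$. Note that every $t\in T_n$ is \emph{admissible}, meaning it is $\prec$-monotone (Proposition~\ref{prop:chain}) and satisfies the property of Remark~\ref{rem:left-ideals_xy2} that $0t\neq 0$ implies the image of $t$ lies inside the proper up-set $\{r\in Q_n:0t\preceq r\}$ (proper because it omits $0$), on which moreover $t$ acts order-preservingly. I would bound the total number of admissible transformations of $\cD_n$ and show it is strictly less than $n^{n-1}+n-1$; since $|T_n|$ cannot exceed this number, that contradicts the hypothesis. To count admissible transformations I split by the value $0t$: (i) those fixing $0$; (ii) the non-constant ones with $0t\neq 0$; (iii) the at most $n-1$ constants $(Q_n\to r)$ with $r\neq 0$. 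In case (i), along the chain we must have $pt\preceq qt$, which together with the remaining comparabilities of the poset of $\cD_n$ cuts the count strictly below $n^{n-1}$; in case (ii), the image of $t$ is forced into a proper up-set on which $t$ is order-preserving, which is very restrictive. The two resulting deficits together outweigh the $n-1$ constants, so $|T_n|<n^{n-1}+n-1$. Hence the maximal chain has length $2$, and by Lemma~\ref{lem:chain2} and $|T_n|=|S_n|$ we conclude $T_n=S_n$. (Equivalently, one can phrase this via the injection $f\colon T_n\to S_n$ from the proof of Lemma~\ref{lem:left_upper-bound}, which under the hypothesis is a bijection and which one shows fails to be surjective whenever a chain of length $\ge 3$ is present.)

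The main obstacle is making the estimates in (i) and (ii) precise and uniform in $n\ge 3$ and over all DFAs $\cD_n$ of left ideals whose state poset contains a chain of length $\ge 3$: that poset may be not only wide but also taller than $3$, so a naive estimate (for instance, in (i) using only the constraint $(pt,qt)\neq(q,p)$, which excludes just $n^{n-3}$ maps) is too weak for large $n$. The cleanest route I see is to organize both counts by the value $0t=p^*$ and then by the value $pt=x$, exploiting that whenever $p^*$ or $x$ lies high in the poset its up-set is small, so very few order-preserving extensions are available; one then checks that the extremal case is a poset whose longest chain has length exactly $3$ and that is otherwise as wide as possible, and that even there the total number of admissible transformations stays strictly below $n^{n-1}+n-1$. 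For $n=3$ this reduces to a direct enumeration, since a chain of length $\ge 3$ forces the three-state poset to be a chain. Once the count is in place, Lemma~\ref{lem:chain2} together with $|T_n|=|S_n|$ completes the proof.
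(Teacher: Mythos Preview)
Your overall architecture is right: reduce to ``maximal chain has length $2$'' and then invoke Lemma~\ref{lem:chain2} together with $|T_n|=|S_n|$. The gap is exactly where you flag it yourself: the counting in the chain-length $\ge 3$ case is not carried out. You describe a strategy (split by $0t$, bound the number of $\prec$-monotone maps with image contained in the up-set of $0t$, identify an extremal poset, verify the inequality there) but you never execute it, and the extremal-poset step is the whole difficulty. Without that, the argument is a plan, not a proof; and the plan is genuinely nontrivial, since the poset of $\cD_n$ is not fixed and you must show the bound uniformly over all left-ideal posets containing a $3$-chain.

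The paper avoids this counting entirely by taking the parenthetical route you mention at the end. It uses the injection $f\colon T_n\to S_n$ already built in Lemma~\ref{lem:left_upper-bound} and, assuming a chain $0\prec p\prec r$ exists, \emph{exhibits one concrete element} $s\in S_n\setminus f(T_n)$. Specifically, it applies the Case~3(c) construction to the constant map $u=(Q_n\to p)$: set $0s=0$, $ps=r$, $qs=0$ for every $q\succ p$, and $qs=p$ otherwise. One then checks, reusing verbatim the distinguishing arguments from Cases~1, 2, 3(a), 3(b), and the uniqueness-of-$p$ claim inside Case~3(c), that no $t\in T_n$ satisfies $f(t)=s$. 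Since $f$ is injective and not surjective, $|T_n|<|S_n|$, contradicting the hypothesis. This is a two-line construction plus a reuse of already-proved case analysis, whereas your route requires a new structural optimization over posets. If you want to finish your version, you would need to actually prove that for every $n$-element poset with least element $0$ and a chain of length $\ge 3$, the number of order-preserving self-maps whose image lies in the up-set of the image of $0$ is strictly below $n^{n-1}+n-1$; this is plausible (and your examples confirm it for small $n$) but it is a separate lemma you have not supplied.
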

\begin{proof}
Consider a maximal-length chain of states strictly ordered by $\prec$ in $\cD_n$. If its length is 2, then by Lemma~\ref{lem:chain2}, $T_n$ is a subsemigroup of $S_n$. Thus only $T_n = S_n$ reaches the bound in this case.

Assume now that the length of a maximal-length chain is at least 3. 
Then there are states $p$ and $r$ such that $0 \prec p \prec r$.
Let $R=\{q \mid p\prec q\}$, and let $X=Q_n\setminus (R\cup \{0,p\})$.
We shall show that there exists a transformation $s$ that is in $S_n$ but not in $f(T_n)$.
To define $s$ we use  the constant transformation $u=(Q_n\to p)$ as an auxiliary transformation.
Let $0s=0$, $ps=r$, $rs=0$ for all $r\in R$, and $qs = qu=p$ for $q\in X$; these are precisely the rules  we used in Case 3(c) in the proof of Lemma~\ref{lem:left_upper-bound}.
By Lemma~\ref{lem:left}, $s \in S_n$.

It remains to be shown that there is no transformation $t\in T_n$ such that $s=f(t)$.
The proof that $s$ is different from the transformations $f(t)$ of Cases~1, 2, 3(a) and 3(b) is exactly the same as the corresponding proof in Case~3(c) following the definition of $s$.

It remains to verify that there is no $u' \in T_n$\ in Case~3(c) such that $f(u')=s$.
Suppose there is such a $u'$.  
Recall that states $p$ and $r$ satisfying $0 \prec p \prec r$ have been fixed by assumption. 
By the definition of $s$, state  $p$ satisfies the conditions (a) $0 \prec p \prec ps$ and (b) $ps^2 = 0$.
We claim that $p$ is the only state satisfying these conditions.
Indeed, if $q \neq p$ then either $qs = 0$, $q \not\prec qs=0$ and (a) is violated, or
$qs = p$,  $qs^2 = ps = r \neq 0$ and (b) is violated.
This observation is used in the proof of Case~3(c) to prove the claim below.

Both $u$ and $u'$ satisfy the conditions of Case~3(c), except that $u$ fails the condition $u\not\in S_n$. 
However, that latter condition is not used in the proof that 
if $u \neq u'$ and $u'$ satisfy the other conditions of Case~3(c), then $s' \neq s$, where $s'$ is the transformation obtained from $u'$ by the rules of $s$. Thus $s$ is also different from the transformations in $f(T_n)$ from Case~3(c).

Because $s \not\in f(T_n)$, $s \in S_n$ and $f(T_n) \subseteq S_n$,
the bound $n^{n-1}+n-1$ cannot be reached if the length of the maximal-length chains is not 2.
\end{proof}

\begin{proposition}\label{prop:left-ideals_maximal_generators}
For $n \ge 4$, the minimal number of generators of the transition semigroup $T_n$ is 5.
\end{proposition}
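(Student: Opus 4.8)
The plan is to establish matching upper and lower bounds. For the upper bound, note that by Definition~\ref{def:left} and Lemma~\ref{lem:left} the semigroup $S_n$ (which by Theorem~\ref{thm:left-ideals_unique-maximal} is the unique transition semigroup of size $n^{n-1}+n-1$) is generated by the five transformations induced by $a,b,c,d,e$, and for $n\ge 4$ these five transformations are pairwise distinct (two are permutations and three are singular, of ranks $n-1,n-1,1$, with the two rank-$(n-1)$ ones differing at $n-1$); hence a generating set of size $5$ exists. All the work is in showing that no generating set $G$ of $S_n$ has fewer than $5$ elements.

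I would first record, using Lemma~\ref{lem:left}, the decomposition $S_n=F\sqcup C'$, where $F$ is the submonoid of all $n^{n-1}$ transformations fixing $0$, and $C'=\{(Q_n\to i):1\le i\le n-1\}$ is the set of the $n-1$ constant transformations that do not fix $0$. Two elementary closure observations drive the argument. First, any product of elements of $S_n$ that uses a factor from $C'$ is a constant transformation; consequently every non-constant element of $F$ is a product of elements of $G\cap F$, and since $(Q_n\to 0)$ is a power of the non-constant transformation $[0,0,1,2,\dots,n-2]\in F$, in fact $\langle G\cap F\rangle=F$. Second, every product of elements of $F$ fixes $0$, so no such product lies in $C'$; hence $G$ must contain at least one element of $C'$, i.e.\ $|G\cap C'|\ge 1$. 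As $F$ and $C'$ are disjoint, it remains to show $|G\cap F|\ge 4$.

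For the bound $|G\cap F|\ge 4$ I would argue with the rank (image size), which never increases under composition. The group of units of $F$ is the symmetric group $S_{n-1}$ acting on $\{1,\dots,n-1\}$; a product of rank-$n$ elements of $F$ again has rank $n$, so $(G\cap F)\cap S_{n-1}$ generates $S_{n-1}$, and since $S_{n-1}$ is not cyclic for $n\ge 4$, this forces at least two permutations in $G\cap F$. Next, $F$ contains transformations of rank $n-1$ (for instance $(n-1\to 0)$), and in any product equal to such a transformation some factor has rank exactly $n-1$; hence $G\cap F$ contains a non-permutation $r_1$ of rank $n-1$. Its fibre partition has exactly one non-singleton block, of size $2$; either that block contains $0$, in which case $r_1$ sends one nonzero state to $0$ but merges no two nonzero states, or it does not, in which case $r_1$ merges two nonzero states but sends no nonzero state to $0$. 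Now introduce two proper submonoids of $F$: the set $N$ of $0$-\emph{neutral} transformations (those mapping $\{1,\dots,n-1\}$ into itself) and the set $M$ of \emph{merge-free} transformations (those never sending two distinct nonzero states to a common nonzero state). One checks that $N$ and $M$ are each closed under composition, both contain every permutation, and both are proper in $F$ (e.g.\ $(1\to 0)\notin N$ and $(1\to 2)\notin M$). Because $\langle G\cap F\rangle=F$, the set $G\cap F$ lies in neither $N$ nor $M$, so it contains a non-permutation outside $N$ and a non-permutation outside $M$. Since $r_1$ lies in exactly one of $N$ and $M$, at least one of these two further transformations is distinct from $r_1$; together with the two permutations and $r_1$ this yields four distinct elements of $G\cap F$. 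Hence $|G|\ge|G\cap F|+|G\cap C'|\ge 4+1=5$, completing the proof.

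The main obstacle is the last paragraph: one has to see that escaping $N$ and escaping $M$ are genuinely independent requirements, and that no single generator can do both while still contributing a rank-$(n-1)$ transformation (such a generator would necessarily have rank at most $n-2$, having two distinct non-singleton fibres). The supporting bookkeeping — verifying that $N$ and $M$ are closed under composition and proper, and carefully matching the fibre structure of $r_1$ with membership in $N$ versus $M$ — is routine but must be done with care.
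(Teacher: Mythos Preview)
Your proof is correct and shares the paper's overall decomposition — separate the generators that fix $0$ (the set $F$) from the constants in $C'$ — but you carry out the count inside $F$ by a genuinely different and more careful argument than the paper's.

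The paper argues in three lines: one constant generator is forced; then, since $S_n$ contains the copy $N\cong T_{n-1}$ of the full transformation monoid on $\{1,\dots,n-1\}$, Proposition~\ref{prop:piccard} gives three more generators ``for them''; finally one extra generator is needed to send a nonzero state to $0$. This is terse, and the middle step implicitly assumes that the three generators accounting for $N$ can be taken inside $N$ (so that Proposition~\ref{prop:piccard} applies directly) — a point the paper does not justify, and which is not immediate since a factorization of $t\in N$ inside $F$ may well pass through elements of $F\setminus N$.

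Your route avoids this by replacing the appeal to Proposition~\ref{prop:piccard} with a rank-stratified count: two permutations are forced because $S_{n-1}$ is non-cyclic for $n\ge 4$; a generator of rank exactly $n-1$ is forced by the first rank drop in any factorization of a rank-$(n-1)$ element; and your pair of proper submonoids $N$ (zero-neutral) and $M$ (merge-free), together with the observation that any rank-$(n-1)$ element lies in exactly one of them, cleanly forces a fourth distinct element of $G\cap F$. This $N/M$ trick is the substantive new idea; it makes rigorous precisely the separation of ``merge two nonzero states'' from ``send a nonzero state to $0$'' that the paper takes for granted. The cost is a little more bookkeeping (closure of $M$ under composition, the fibre analysis of $r_1$), but the payoff is a proof that stands on its own without the unproved step.
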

\begin{proof}
Since all transformations mapping 0 to a state in $Q_n \setminus \{0\}$ are constant transformations,
they must be generated by constant generators. Let $e$ be one of these generators.
Transition semigroup $T_n$ contains all transformations from $Q_n \setminus \{0\}$ to $Q_n \setminus \{0\}$ that fix 0.
By Proposition~\ref{prop:piccard} we need three generators for them, and the generators must fix 0; otherwise a generator would have to be constant, and the only constant transformation fixing 0 is $(Q_n \to 0)$. Let $a,b,c$ be three of these generators.
Finally, $T_n$ contains transformations mapping some states from $Q_n \setminus \{0\}$ to 0, so we need one more generator $d$ mapping a state other than 0 to 0.
\end{proof}

We are finally in a position to prove our main theorem of this section.

\begin{theorem}[Left Ideals, Suffix-Closed Languages]
\label{thm:LeftIdeal4}
Suppose that $L\subseteq\Sig^*$ and $\kappa(L)=n$.
If $L$ is a left ideal or a suffix-closed language, then $\sig(L)\le n^{n-1}+ n-1$.
This bound is tight for $n=1$  if $|\Sig|\ge 1$, for $n=2$ if $|\Sig|\ge 3$, for $n=3$ if $|\Sig|\ge 4$, and for $n\ge 4$ if $|\Sig|\ge 5$.
Moreover, the sizes of the alphabet cannot be reduced.
\end{theorem}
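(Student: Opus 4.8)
The plan is to assemble the theorem from the pieces already proved in this section, treating small $n$ separately and then handling $n\ge4$. For the upper bound $\sigma(L)\le n^{n-1}+n-1$, I would invoke Lemma~\ref{lem:left_upper-bound} for $n\ge3$ and Remark~\ref{rem:smalln} for $n=1,2$; since suffix-closed languages are complements of left ideals and syntactic complexity is preserved under complementation, it suffices to argue for left ideals. For tightness, the witness DFAs $\cW_n$ of Definition~\ref{def:left} meet the bound by Lemma~\ref{lem:left}, so I just need to record the alphabet sizes those witnesses use: $|\Sig|=1$ for $n=1$, $|\Sig|=3$ for $n=2$ (the transposition $(0,1)$ being forbidden by Lemma~\ref{lem:aperiodic}, only $\one$, $(0\to1)$, and $(Q_2\to1)$ survive, and all three are needed), $|\Sig|=4$ for $n=3$ (where $a$ and $b$ coincide), and $|\Sig|=5$ for $n\ge4$.

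The substantive part is the ``cannot be reduced'' clause, i.e.\ a matching lower bound on the alphabet. For $n\ge4$ this is exactly Proposition~\ref{prop:left-ideals_maximal_generators}: any transition semigroup of size $n^{n-1}+n-1$ equals $S_n$ by Theorem~\ref{thm:left-ideals_unique-maximal}, and $S_n$ needs at least $5$ generators — three to generate (via Proposition~\ref{prop:piccard}) all transformations of $Q_n\setminus\{0\}$ fixing $0$, these three forced to fix $0$ since the only constant transformation fixing $0$ is $(Q_n\to0)$; one constant generator $e$ because every transformation moving $0$ is constant and constants are indecomposable; and one more generator $d$ sending some nonzero state to $0$, since none of $a,b,c,e$ does (note $e=(Q_n\to i)$ with $i\ne0$ never hits $0$). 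For $n=3$ I would give the analogous count directly: $a$ (or equivalently $b$) and one more permutation-type generator on $Q_3\setminus\{0\}$, plus $c$-type, plus $d$, plus a constant — checking by hand that $4$ letters are necessary and sufficient. For $n=2$ the three transformations $\one,(0\to1),(Q_2\to1)$ are pairwise non-composable into one another (each is idempotent and the semigroup they generate has order exactly $3$ only if all three are present as generators, since no two of them compose to the third — $(0\to1)\cdot\one=(0\to1)$, etc.), so $3$ letters are required. For $n=1$ one letter inducing $\one$ is obviously needed and enough.

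I would organize the write-up as: first the upper bound (one sentence citing Lemma~\ref{lem:left_upper-bound} and Remark~\ref{rem:smalln}); then tightness via Lemma~\ref{lem:left} with the four alphabet-size cases; then optimality of the alphabet, splitting off $n\ge4$ (Proposition~\ref{prop:left-ideals_maximal_generators} combined with Theorem~\ref{thm:left-ideals_unique-maximal}) from the small cases $n\le3$ verified by hand; and finally the one-line remark that complementation transfers everything to suffix-closed languages. The main obstacle is the small-$n$ alphabet-optimality bookkeeping — in particular, for $n=3$ one must be careful that $a$ and $b$ coincide so that a naive ``$5$ generators'' count does not apply, and one must check that no clever re-encoding with $3$ letters can generate a size-$(3^{2}+3-1)=11$ semigroup; this requires explicitly arguing that three letters would have to include a constant (forcing one of the three to be $(Q_3\to i)$), three transformations of $Q_3\setminus\{0\}=\{1,2\}$ — but there are only $2^2=4$ such and Proposition~\ref{prop:piccard} is vacuous in rank $2$ — so the argument for $n=3$ must instead count: we need the full monoid of all $4$ transformations of $\{1,2\}$ fixing $0$, a generator moving $0$ (constant), and a generator hitting $0$ from $\{1,2\}$, and I would check $\{a=(1,2), c=(2\to1), d=(2\to0), e=(Q_3\to1)\}$ works while no $3$-letter set does because the ``fixes $0$ and permutes $\{1,2\}$'' part, the ``constant'' part, and the ``hits $0$'' part cannot be merged into fewer than the stated number of indecomposable generators.

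\begin{proof}
By Lemma~\ref{lem:left_upper-bound} the bound $\sigma(L)\le n^{n-1}+n-1$ holds for every left ideal with $n\ge3$ quotients, and by Remark~\ref{rem:smalln} it holds for $n=1,2$; since suffix-closed languages are complements of left ideals and syntactic complexity is preserved under complementation, the bound holds for suffix-closed languages as well. Thus it remains only to establish tightness and the minimality of the alphabet.

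By Lemma~\ref{lem:left}, for each $n$ the DFA $\cW_n$ of Definition~\ref{def:left} is a minimal quotient DFA of a left ideal whose transition semigroup has size $n^{n-1}+n-1$. Its alphabet has size $1$ for $n=1$, size $3$ for $n=2$, size $4$ for $n=3$ (inputs $a$ and $b$ coincide), and size $5$ for $n\ge4$; this proves tightness for the claimed alphabet sizes.

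We now show these sizes cannot be reduced. For $n=1$ a single letter inducing $\one$ is clearly necessary. For $n=2$, by Lemma~\ref{lem:aperiodic} the transposition $(0,1)$ is excluded, so the only available transformations are $\one$, $(0\to1)$, and $(Q_2\to1)$; since $(0\to1)\cdot\one=(0\to1)$, $\one\cdot\one=\one$, and neither of these two nor their products equals $(Q_2\to1)$, while $(Q_2\to1)$ composed with anything remains $(Q_2\to1)$, no two of the three generate the third, so all three letters are required. For $n\ge4$, Theorem~\ref{thm:left-ideals_unique-maximal} shows that a transition semigroup of size $n^{n-1}+n-1$ must equal $S_n$, and Proposition~\ref{prop:left-ideals_maximal_generators} shows $S_n$ cannot be generated by fewer than $5$ transformations; hence $|\Sig|\ge5$.

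Finally, consider $n=3$. A transition semigroup of size $3^{2}+3-1=11$ must equal $S_3$ by Theorem~\ref{thm:left-ideals_unique-maximal}. By Remark~\ref{rem:left-ideals_xy2} every transformation of $S_3$ that does not fix $0$ maps all of $Q_3$ to a single state, so it is constant; as constant transformations are indecomposable (a product $ts$ with $0ts=i$ forces $i$ to lie in the image of $s$), and there is a non-identity constant in $S_3$ (namely $(Q_3\to1)$ and $(Q_3\to2)$), at least one generator $e$ must be a constant $(Q_3\to i)$ with $i\neq0$, and such $e$ never maps any state to $0$. On the states $\{1,2\}=Q_3\setminus\{0\}$, $S_3$ contains all four transformations fixing $0$; generating these needs a generator inducing the transposition $(1,2)$ and one inducing $(2\to1)$ (or $(1\to2)$), and such generators must fix $0$, for otherwise the generator would be constant equal to $(Q_3\to0)$, which does not act as a transposition or a singular map on $\{1,2\}$. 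Thus we have three generators so far, none of which maps a nonzero state to $0$; since $S_3$ contains transformations sending a state of $\{1,2\}$ to $0$, a fourth generator $d$ with this property is needed. Hence $|\Sig|\ge4$. Since suffix-closed languages are complements of left ideals, all the above statements hold for suffix-closed languages as well.
\end{proof}
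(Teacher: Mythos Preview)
Your overall structure is exactly the paper's: cite Lemma~\ref{lem:left_upper-bound} (and Remark~\ref{rem:smalln}) for the upper bound, Lemma~\ref{lem:left} for the witness, and then, for the alphabet lower bound with $n\ge 4$, combine Theorem~\ref{thm:left-ideals_unique-maximal} with Proposition~\ref{prop:left-ideals_maximal_generators}. The paper dismisses $n\le 3$ with ``easy to verify''; you commendably try to spell it out, and your $n=3$ argument (once one notes Theorem~\ref{thm:left-ideals_unique-maximal} indeed applies for $n=3$) is sound in outline.

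There is, however, a concrete slip in your $n=2$ analysis. On $Q_2=\{0,1\}$ the transformations $(0\to 1)$ and $(Q_2\to 1)$ are \emph{identical}: both send $0\mapsto 1$ and fix $1$, i.e.\ both equal $[1,1]$. So your list ``$\one$, $(0\to 1)$, $(Q_2\to 1)$'' names only two distinct maps, and your subsequent sentence ``neither of these two nor their products equals $(Q_2\to1)$'' is incoherent. The three transformations that survive after excluding the transposition are $[0,1]=\one$, $[1,1]$, and $[0,0]=(1\to 0)=(Q_2\to 0)$; the correct argument that three letters are needed is that any product containing $[1,1]$ or $[0,0]$ has rank~$1$, while $[0,1]$ alone generates only itself, so no pair among $\{[0,0],[0,1],[1,1]\}$ generates the third. (The paper's own witness $\cW_2$ in Definition~\ref{def:left} lists $a\colon(0\to 1)$ and $c\colon(Q_2\to 1)$, which likewise coincide; a genuine three-letter witness needs a letter inducing $[0,0]$.) Fix this enumeration and your proof goes through.
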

\begin{proof}
If $L$ is a left ideal, then $\sig(L_n)\le n^{n-1}+n-1$ by Lemma~\ref{lem:left_upper-bound}. 
By Lemma~\ref{lem:left} the languages of Definition~\ref{def:left} meet this bound.
It is easy to verify that the size of the alphabet cannot be reduced if $n \le 3$.
For $n \ge 4$, by Theorem~\ref{thm:left-ideals_unique-maximal} only languages $L$ whose quotient automaton has transition semigroup isomorphic to $T_n$ meet the bound, and by Proposition~\ref{prop:left-ideals_maximal_generators} $T_n$ requires 5 generators.
\end{proof}

\section{Two-Sided Ideals}
\label{sec:2sided}
If a language $L$ is a right ideal, then $L=L\Sig^*$ and $L$ has exactly one final quotient, namely $\Sig^*$;
hence this also holds for two-sided ideals.
For $n\ge 3$, in a two-sided ideal every maximal chain is of length at least 3: it starts with $L$, every quotient contains $L$ and is contained in $\Sig^*$.

\subsection{Lower Bound}

We now show that the syntactic complexity of the following DFA of a two-sided ideal is 
$n^{n-2} + (n-2) 2^{n-2} +1$.

\begin{definition}[Witness: Two-Sided Ideals]\label{def:2-sided}
For $n\ge 4$, 
define the DFA  
$\cW_n =(Q_n,\Sig_\cW,\delta_\cW,0,\{n-1\}),$
where  $\Sig_\cW=\{a,b,c,d,e,f\}$, 
 $a\colon (1,\ldots,n-2)$,
$b\colon (1,2)$,
$c\colon (n-2\to 1)$,
$d\colon (n-2\to 0)$,
$e\colon Q_{n-1}\to 1$,
and $f\colon (1\to n-1)$.
For $n=4$, inputs $a$ and $b$ coincide, and we can use $\Sig_\cW=\{a,c,d,e,f\}$.
Also, let $\cW_3=(Q_3,\{a,b,c\},\delta_\cW,0,\{2\})$, where  
 $a\colon (1\to 2)(0\to 1)$,  
 $b\colon ( 1\to 0)$, 
 and $c\colon \one$, 
 and let $\cW_2=(Q_2,\{a,b\},\delta_\cW,0,\{1\})$, where  
 $a\colon (0\to 1)$, and $b\colon \one$, Let $L_n=L(\cW_n)$.
\end{definition}
The structure of the DFA of Definition~\ref{def:2-sided} is shown in Fig.~\ref{fig:2-sided} for $n\ge 4$.

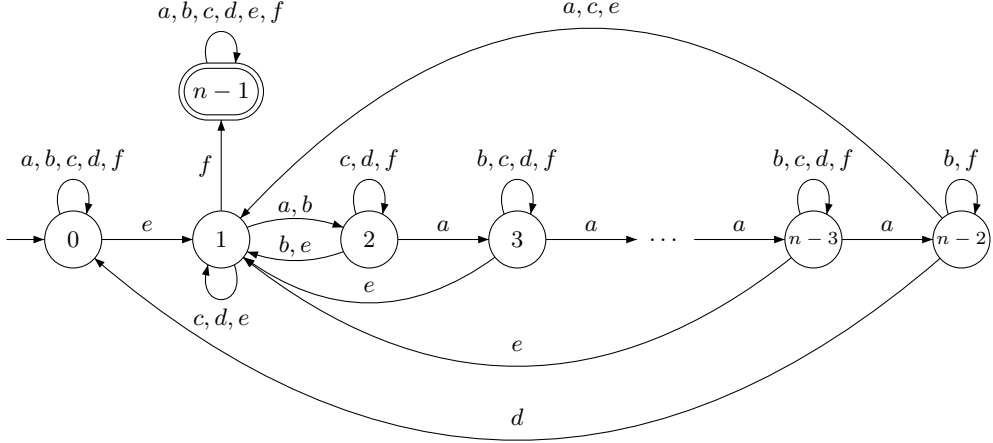
\begin{figure}[ht]
\unitlength 8pt\footnotesize
\begin{center}\begin{picture}(44,22)(0,-2)
\gasset{Nh=2.7,Nw=2.7,Nmr=1.35,ELdist=0.4,loopdiam=1.5}
\node(0)(1,8){0}\imark(0)
\node(1)(8,8){1}
\node(2)(15,8){2}
\node(3)(22,8){3}
\node[Nframe=n](3dots)(29,8){$\dots$}
\node(n-3)(36,8){\scriptsize$n-3$}
\node(n-2)(43,8){\scriptsize$n-2$}
\node[Nw=4](n-1)(8,15){$n-1$}\rmark(n-1)
\drawloop(n-1){$a,b,c,d,e,f$}
\drawedge(1,n-1){$f$}
\drawedge(0,1){$e$}
\drawloop(0){$a,b,c,d,f$}
\drawloop[loopangle=270](1){$c,d,e$}
\drawedge[curvedepth= 1,ELdist=.1](1,2){$a,b$}
\drawedge[curvedepth= 1,ELdist=-1.2](2,1){$b,e$}
\drawloop(2){$c,d,f$}
\drawedge(2,3){$a$}
\drawedge[curvedepth=3,ELdist=-1](3,1){$e$}
\drawedge(3,3dots){$a$}
\drawedge(3dots,n-3){$a$}
\drawloop(3){$b,c,d,f$}
\drawloop(n-3){$b,c,d,f$}
\drawedge(n-3,n-2){$a$}
\drawedge[curvedepth=6,ELdist=-1.2](n-3,1){$e$}
\drawedge[curvedepth=-10,ELdist=-1.2](n-2,1){$a,c,e$}
\drawedge[curvedepth=9.5,ELdist=-1.5](n-2,0){$d$}
\drawloop(n-2){$b,f$}
\end{picture}\end{center}
\caption{Quotient DFA of a two-sided ideal with $n^{n-2} + (n-2) 2^{n-2} +1$ transformations.}
\label{fig:2-sided}
\end{figure}

\begin{lemma}\label{lem:2-sided}
For $n\ge 2$, the DFA of Definition~\ref{def:left} is minimal, accepts a two-sided ideal, and its transition semigroup has size $n^{n-2} + (n-2) 2^{n-2} +1$.
\end{lemma}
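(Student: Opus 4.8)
The plan is to verify the three claimed properties of $\cW_n$ of Definition~\ref{def:2-sided} in turn: minimality, the two-sided ideal property, and the exact size of the transition semigroup. Minimality is routine: as in the left-ideal case, state $0$ accepts no word of the form $a^i$ while states $1,\ldots,n-2$ are separated by words $a^i$ (reaching $n-1$ via $f$ at the right moment), and $n-1$ is the unique final state, so all states are pairwise distinguishable. For the ideal property, I would first observe that $n-1$ is a sink final state, so $L_n=L_n\Sig^*$ (right ideal); then I would show $L_n=\Sig^*L_n$ exactly as in Lemma~\ref{lem:left}, using that $e$ sends everything in $Q_{n-1}$ to $1$ and fixes $n-1$, so prepending any letter to an accepted word keeps it accepted. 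Together these give $L_n=\Sig^*L_n\Sig^*$.

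The substantive part is the count $n^{n-2}+(n-2)2^{n-2}+1$. The structure is: $n-1$ is a fixed point of every transformation, and also $0$ plays a special role because the only way to leave $0$ is via $e$, which is constant on $Q_{n-1}$. I would partition the transition semigroup by the image of $0$. First, the transformations fixing $0$ (hence fixing both $0$ and $n-1$): restricted to $\{1,\ldots,n-2\}$, the letters $a,b,c$ generate (by Proposition~\ref{prop:piccard}) all transformations of that $(n-2)$-element set, $d$ allows sending states to $0$, and $f$ allows sending state $1$ — and after a suitable rearrangement, any state — to $n-1$; so one gets all $n^{n-2}$ maps of $\{1,\ldots,n-2\}$ into $Q_n$ that fix $0$ and $n-1$. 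Second, the transformations with $0t\ne 0$: any such word factors as $w=uev$ where $u$ fixes $0$; after applying $e$ the whole set $Q_{n-1}$ collapses to $\{1\}$ (and $n-1$ stays), so the resulting transformation has image contained in the set reachable from $\{1,n-1\}$ under $v$. One must show these are exactly the maps that send $Q_{n-1}$ to a single state $j\in\{1,\ldots,n-2\}$ and fix $n-1$, but with the refinement that each state can independently go to $j$ or, if it was already mapped appropriately, to $n-1$ — this is where the factor $2^{n-2}$ comes from: starting from the two-state image $\{1,n-1\}$, each of the $n-1$ non-zero source states ends in $\{j,n-1\}$, giving for each choice of $j$ roughly $2^{n-1}$ maps, and careful bookkeeping over the $n-2$ choices of $j$ together with avoiding double-counting the constant map $(Q_n\to n-1)$ yields $(n-2)2^{n-2}+1$. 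Finally the constant map to $0$ would require a single letter acting as $(Q_n\to 0)$, which is not available; instead one checks $d$ only reaches $0$ from state $n-2$, so no extra transformations arise.

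The upper bound — that no transformations beyond those enumerated occur — is the main obstacle. The clean way is the chain argument: in a two-sided ideal every maximal chain ordered by $\prec$ has length exactly $3$ when the bound is met (analogous to Theorem~\ref{thm:left-ideals_unique-maximal}), because $0$ is the bottom, $n-1$ is the top, and the middle layer is an antichain; so for $t\in T_n$ with $0t=p\notin\{0,n-1\}$, every state $q\notin\{0,p\}$ satisfies $p\preceq qt$ by Remark~\ref{rem:left-ideals_xy2}, forcing $qt\in\{p,n-1\}$ once one notes $qt$ cannot be strictly between $p$ and the top without creating a chain of length $4$. This pins the image of such a $t$ to $\{p,n-1\}$, giving the $2^{n-2}$-type count, while $0t=0$ gives the $n^{n-2}$ count and $0t=n-1$ forces the constant map contributing the $+1$. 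I would present the lower bound (this lemma) by explicitly exhibiting generating words — e.g. $ea^i$ for the maps collapsing onto $i+1$, and $f$ composed with permutations for reaching $n-1$ — and deferring the matching upper bound to the subsequent lemma, exactly paralleling the organization of Section~\ref{sec:left}.
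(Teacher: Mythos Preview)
Your plan is essentially the paper's: minimality and the ideal property are handled as for left ideals, and the semigroup is counted by splitting on whether $0$ is fixed. The paper exhibits the $(n-2)2^{n-2}$ transformations with $0t=j\in\{1,\ldots,n-2\}$ via words $w_S\,e\,a^{j-1}$, where $w_S$ first sends a chosen $S\subseteq\{1,\ldots,n-2\}$ to $n-1$ and fixes the rest; your ``$ea^i$'' only covers $S=\emptyset$, and your phrase ``$f$ composed with permutations'' should be turned into this explicit $w_S$ construction.

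Your count in the second paragraph is muddled. Once $0t=j\in\{1,\ldots,n-2\}$, both $0\to j$ and $(n-1)\to n-1$ are forced, so only the $n-2$ middle states are free to land in $\{j,n-1\}$: that is exactly $2^{n-2}$ maps per $j$, not ``roughly $2^{n-1}$,'' and there is no double-counting to repair---the constant map to $n-1$ arises only as the separate case $0t=n-1$, contributing the standalone $+1$. Your chain-based upper bound is a valid addition: it is precisely Lemma~\ref{lem:chain3} applied to $\cW_n$, once you verify directly that $\{1,\ldots,n-2\}$ is an antichain under $\prec$ in $\cW_n$ (immediate from the cyclic symmetry under $a$ together with minimality). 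The paper itself does not argue an upper bound inside this lemma; it just exhibits the listed transformations. Your alternative of deferring to Lemma~\ref{lem:2sided-ideals_upper-bound} is also sound and not circular, since that proof uses only the membership assertions ``$s\in S_n$'' from the present lemma, never the exact cardinality.
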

\begin{proof}
For $i=1,\ldots,n-2$, state $i$ is the only non-final state that accepts $a^{n-1-i}f$; hence all these states are distinguishable. State 0 is distinguishable from these states, because it does not accept any words in $a^*f$. Hence $\cW_n$ is minimal.
The proof that $\cW_n$ is a left ideal is like that in Lemma~\ref{lem:left}.
Since $n-1$ is the only final state, $L_n$ is a right ideal. Hence it is two-sided.

For $n=3$, $\cW_3$ meets the bound $6$ with the transition semigroup consisting of the transformations $[0,1,2]$, $[1,2,2]$, $[2,2,2]$, $[0,0,2]$, $[1,1,2]$, and $[0,2,2]$. Also, for $n=2$, $\cW_2$ meets the bound $2$.

From now on we may assume that $n\ge 4$.
In $\cW_n$, the transformations induced by $a$, $b$, and $c$ restricted to $Q_n \setminus \{0,n-1\}$ generate all the transformations of the states $1,\ldots,n-2$. Together with the transformations of $d$ and $f$, they generate all $n^{n-2}$ transformations of $Q_n$ that fix $0$ and $n-1$.
For any subset $S \subseteq \{1,\ldots,n-2\}$, there is a transformation---induced by a word $w_S$, say---that maps $S$ to $n-1$ and fixes $Q_n \setminus S$. 
Then the words of the form $w_S e a^i$, for $i \in \{0,\ldots,n-3\}$, induce all transformations that map $S \cup \{n-1\}$ to $n-1$ and $Q_n \setminus (S \cup \{n-1\})$ to $i+1$. 
There are $2^{n-2}$ such transformations, and for each such transformation there are $n-2$ possibilities for $i$. Hence there are $(n-2)2^{n-2}$  transformations of this type.
There is also the constant transformation $ef\colon (Q_n \to n-1)$, which yields the total number claimed.
\end{proof}

\subsection{Upper Bound}

We consider a minimal DFA $\cD_n=(Q_n, \Sigma_\cD, \delta_\cD, 0,\{n-1\})$ of an arbitrary two-sided ideal with $n$ quotients, and let $T_n$ be the transition semigroup of $\cD_n$. 
We also deal with the witness DFA $\cW_n =(Q_n,\Sig_\cW,\delta_\cW,0,\{n-1\})$ of Definition~\ref{def:2-sided} with transition semigroup $S_n$.

\begin{lemma}\label{lem:chain3}
If $n \ge 4$ and a maximal-length chain in $\cD_n$ strictly ordered by $\prec$ has length 3, then $|T_n|\le n^{n-2}+(n-2)2^{n-2}+1$, and $T_n$ is a subsemigroup of $S_n$.
\end{lemma}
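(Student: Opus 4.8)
The plan is to mimic the proof of Lemma~\ref{lem:chain2}, partitioning $T_n$ according to the value $p = 0t$ of a transformation $t\in T_n$, bounding each part and simultaneously checking that every transformation that occurs lies in $S_n$. Throughout I would use two structural facts about the quotient DFA of a two-sided ideal: (i) every $t\in T_n$ fixes $n-1$, since $K_{n-1}=\Sig^*$ and $a^{-1}\Sig^*=\Sig^*$ for each $a\in\Sig$; and (ii) for every $q\in Q_n$ we have $0\preceq q\preceq n-1$, strictly unless $q$ is the relevant endpoint (the left inequality by Remark~\ref{rem:left-ideals_xy}, the right one because $K_q\subseteq\Sig^*$ and $\cD_n$ is minimal). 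Consequently every maximal chain runs from $0$ to $n-1$, so the hypothesis that the longest chain has length $3$ is equivalent to the $n-2$ states of $Q_n\setminus\{0,n-1\}$ forming a $\prec$-antichain; in particular, $n-1$ is the \emph{only} state strictly above any given middle state.

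Now fix $t\in T_n$ and set $p=0t$. \textbf{If $p=0$}, then $t$ fixes $0$ and $n-1$ and is arbitrary on the remaining $n-2$ states, so there are at most $n^{n-2}$ such transformations, and by the proof of Lemma~\ref{lem:2-sided} all transformations fixing $0$ and $n-1$ lie in $S_n$. \textbf{If $p=n-1$}, then for every $q$, Remark~\ref{rem:left-ideals_xy2} gives $n-1=0t\preceq qt$, forcing $qt=n-1$; hence $t=(Q_n\to n-1)$ is the single constant transformation, which equals $ef$ in $\cW_n$ and so lies in $S_n$. \textbf{If $p$ is a middle state}, then $0\prec p\prec n-1$, and by the antichain observation $n-1$ is the only state strictly above $p$; since every state $q\neq 0$ is reachable, Remark~\ref{rem:left-ideals_xy2} yields $p=0t\preceq qt$, whence $qt\in\{p,n-1\}$, while $0t=p$. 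Thus $t$ is determined by the choice of $p$ (at most $n-2$ possibilities) together with the set $S$ of middle states mapped to $n-1$ (at most $2^{n-2}$ possibilities), giving at most $(n-2)2^{n-2}$ transformations. Each such $t$ maps $S\cup\{n-1\}$ to $n-1$ and $Q_n\setminus(S\cup\{n-1\})$ to $p$, which is precisely one of the transformations exhibited in the proof of Lemma~\ref{lem:2-sided}; hence $t\in S_n$.

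The three cases are distinguished by the value of $0t$, hence are disjoint, so summing the bounds gives $|T_n|\le n^{n-2}+(n-2)2^{n-2}+1$. Moreover every transformation of $T_n$ was shown to lie in $S_n$, so $T_n\subseteq S_n$, and since both sets are closed under composition, $T_n$ is a subsemigroup of $S_n$. I expect the only delicate points to be (a) confirming that the case split on $p$ is exhaustive and that the count in the middle-state case is exactly $(n-2)2^{n-2}$, and (b) matching each transformation that actually arises against the explicit family from the proof of Lemma~\ref{lem:2-sided}, including the case $p\in S$ (so $pt=n-1$), which is still covered because the target state $p$ of $Q_n\setminus(S\cup\{n-1\})$ need not lie outside $S$.
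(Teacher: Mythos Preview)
Your proposal is correct and follows essentially the same argument as the paper. The paper splits into just two cases ($0t=0$ versus $0t\neq 0$) and derives the constraint $qt\in\{0t,n-1\}$ in the second case via a direct length-4 chain contradiction, whereas you split into three cases and phrase the hypothesis as an antichain condition on the middle states; these are equivalent reformulations of the same idea, and your verification that each resulting transformation lies in $S_n$ matches the paper's appeal to Lemma~\ref{lem:2-sided}.
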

\begin{proof}
Consider an arbitrary transformation $t\in T_n$; then $(n-1)t=n-1$. 
If $0t=0$, then any state not in $\{0,n-1\}$ can possibly be mapped by $t$ to any one of the $n$ states; hence there are at most $n^{n-2}$ such transformations.

If $0t\neq 0$, then
$0\prec 0t$. Consider any state $q\not \in \{0,0t\}$; since $\cD_n$ is minimal, $q$ must be reachable from 0 by some transformation $s$, that is, $q=0s$.
If $0st \not\in \{0t,n-1\}$, then $0t\prec 0st$ by Remark~\ref{rem:left-ideals_xy2}.
But then we have the chain $0\prec 0t\prec 0st\prec n-1$ of length 4, contradicting our assumption.
Hence we must have either $0st=0t$, or $0st=n-1$.
For a fixed $0t$, a subset of the states in $Q_n\setminus\{0,n-1\}$ can be mapped to $0t$ and the remaining states in $Q_n\setminus\{0,n-1\}$ to $n-1$, thus giving $2^{n-2}$ transformations.
Since there are $n-2$ possibilities for $0t$, we obtain the second part of the bound.
Finally, all states can be mapped to $n-1$.

By Lemma~\ref{lem:2-sided} all of the above-mentioned transformations are in $S_n$.
\end{proof}

\begin{lemma}[Two-Sided Ideals, Factor-Closed Languages]\label{lem:2sided-ideals_upper-bound}
If $L$ is a two-sided ideal or a factor-closed language with $n\ge 2$ quotients, then its syntactic complexity is less than or equal to 
$n^{n-2} + (n-2) 2^{n-2} +1$.
\end{lemma}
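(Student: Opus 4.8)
The plan is to follow the same injective-mapping strategy used for left ideals in Lemma~\ref{lem:left_upper-bound}, but now splitting on the length of a maximal-length chain of states strictly ordered by $\prec$ in $\cD_n$. Since $L$ is a two-sided ideal (and factor-closed languages are complements of two-sided ideals, so it suffices to treat two-sided ideals), $L$ has both $\Sig^*$ as a quotient and, being also a left ideal, satisfies all the properties of Section~\ref{sec:left}; in particular every transformation of $T_n$ fixes $n-1$ (because $(n-1)$ corresponds to $\Sig^*$) and is initially aperiodic with respect to the initial state. As noted just before Section~\ref{sec:2sided}.1, for $n \ge 3$ every maximal chain has length \emph{at least} 3: it runs $L = K_0 \subset \dots \subset \Sig^* = K_{n-1}$. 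So the maximal-length chain has length $h$ with $3 \le h \le n$, and the heart of the argument is to show $|T_n| \le |S_n| = n^{n-2} + (n-2)2^{n-2} + 1$ in all these cases.

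First I would dispose of the small cases $n = 2, 3$ directly, as in Lemma~\ref{lem:2-sided} (for $n=2$ the transposition is forbidden by initial aperiodicity, and for $n=3$ one checks the six allowed transformations by hand). Then, for $n \ge 4$, if the maximal-length chain has length exactly 3, Lemma~\ref{lem:chain3} already gives $|T_n| \le n^{n-2} + (n-2)2^{n-2} + 1$ and that $T_n \subseteq S_n$, so that case is done. The substantive work is when the maximal-length chain has length $h \ge 4$, i.e.\ there exist states with $0 \prec p \prec r \prec s$ below $n-1$. Here I would build an injection $f\colon T_n \to S_n$ by the same case analysis as in Lemma~\ref{lem:left_upper-bound}: Case~1 is $t \in S_n$, where $f(t) = t$; Case~2 is $t \notin S_n$ with $0t^2 \neq 0t$; Case~3 is $t \notin S_n$ with $0t^2 = 0t$, subdivided according to whether $t$ has a cycle (among states other than $n-1$, which we must now treat with care since $n-1$ is a fixed point of \emph{every} $t$), has a fixed point other than $0$ and $0t$ (again excluding the forced fixed point $n-1$), or has a predecessor chain collapsing onto $0t$. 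In each subcase I would define $s = f(t)$ by the same local surgery — redirecting $0t^k$ or $0t$, sending spurious fixed points / predecessors to $0$ — but now I must verify that the resulting $s$ still fixes $n-1$ and still lies in $S_n$ using Lemma~\ref{lem:2-sided} (which gives all transformations fixing $0$ and $n-1$, all ``collapse $S \cup \{n-1\}$ to $n-1$, rest to a single state'' transformations, and the constant $(Q_n \to n-1)$), and that $s \notin T_n$ by invoking Proposition~\ref{prop:chain} to find a contradictory $\prec$-cycle or $\prec$-decreasing step. Injectivity within each case and disjointness across cases is then argued by recovering $t$ from $f(t)$ via a uniquely-determined distinguished state $p = 0t$, exactly mirroring the left-ideal proof.

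The main obstacle — and the reason this lemma is ``much more difficult'' — is handling the extra forced structure coming from $n-1$ being a universal fixed point together with the longer chains. Concretely: (i) in the cycle subcase 3(a) and the fixed-point subcase 3(b), one must carefully distinguish the \emph{genuine} cycles / fixed points of $t$ from the trivial fixed point $n-1$, and ensure the surgery never touches $n-1$, so that $f(t)$ stays in $S_n$; (ii) when the maximal chain has length $\ge 4$ rather than $= 3$, the image $s$ of the surgery may land in the ``large'' part of $S_n$ (the $n^{n-2}$ transformations fixing $0$ and $n-1$) in a way that must be checked not to collide with Case~1 images — this requires exhibiting the same kind of contradictory $\prec$-chain of length $\ge 4$ that the hypothesis makes available. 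I expect the bookkeeping to show $f$ is well-defined into $S_n$ (rather than its injectivity) to be where the real care is needed, since $S_n$ for two-sided ideals is a more restrictive target than the left-ideal $S_n$, and one has to keep the predecessor state $r$ (or the whole set $\{q : q \succ p,\ qt = p\}$) inside $Q_n \setminus \{0, n-1\}$ throughout. Once $f$ is verified injective into $S_n$, we conclude $|T_n| \le |S_n| = n^{n-2} + (n-2)2^{n-2} + 1$, and complementation gives the same bound for factor-closed languages.
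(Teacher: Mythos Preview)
Your overall plan---build an injection $f\colon T_n\to S_n$ via the same Case~1/2/3 skeleton as in Lemma~\ref{lem:left_upper-bound}---is exactly the paper's strategy, and your remarks about excluding the forced fixed point $n-1$ in Case~3 are on the right track. But there is a genuine gap: the left-ideal surgeries do \emph{not} port over with only a verification step, because in two places the surgery itself would move $n-1$ and so fail to land in $S_n$.

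Concretely: in your Case~2 the left-ideal construction sets $(pt^k)s=p$ where $pt^k$ is the terminal fixed point of the chain $p\prec pt\prec\cdots\prec pt^k$. For a two-sided ideal this terminal point can be $n-1$ (indeed, whenever the chain reaches $\Sigma^*$), and then your $s$ would send $n-1\mapsto p$, so $s\notin S_n$. The paper therefore splits Case~2 into three subcases: 2(a) $pt^k\neq n-1$ (where your surgery works verbatim), 2(b) $pt^k=n-1$ with $k\ge 2$ (where one instead \emph{reverses} the chain, sending $pt^i\mapsto pt^{i-1}$ and $p\mapsto n-1$), and 2(c) $pt=n-1$ (where one must locate an external state $r$ with $p\not\preceq r$ and $p\prec rt\prec n-1$ and build a $2$-cycle $(p,rt)$ while sending $r\mapsto 0$). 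These are new constructions, not verifications.

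Similarly, your Case~3 trichotomy (cycle / extra fixed point / $\succ p$-predecessor of $p$) is not exhaustive here. In the left-ideal proof the residual situation forces $t=(Q_n\to p)\in S_n$; but now states can be mapped to $n-1$, so one can have $t\notin S_n$ with no cycle, no fixed point outside $\{p,n-1\}$, no $r\succ p$ with $rt=p$, yet some $r$ with $p\prec r\prec n-1$ and $rt=n-1$. The paper adds a fourth subcase 3(d) for this, with its own surgery (make all such $r$ fixed and send $p\mapsto n-1$). Without 2(b), 2(c), and 3(d) your injection is undefined on a nonempty set of $t$'s, so the argument does not close. (Your splitting on chain length $3$ versus $\ge 4$ is harmless but unnecessary: the paper builds the single injection uniformly for all $n\ge 4$ and uses Lemma~\ref{lem:chain3} only later for uniqueness.)
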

\begin{proof}
It suffices to prove the result for two-sided ideals, since factor-closed languages are their complements.

If $n=1$, the only two-sided ideal is $\Sig^*$, its syntactic complexity is 1, and so the upper bound is 1. 
If $n=2$, each two-sided ideal is of the form $L=\Sig^*\Gamma\Sig^*$, where $\emp\subsetneq \Gamma\subseteq \Sigma$, its syntactic complexity is $2$, and so the upper bound is 2, and this agrees with Lemma~\ref{lem:2-sided}.
If $n=3$, there are eight transformations that are initially aperiodic and such that $(n-1)t = n-1$ (the property of a right-ideal transformation).
We have verified that the DFA having all eight or any seven of the eight transformations is not a two-sided ideal. Hence $6$ is an upper bound.
From now on we may assume that $n\ge 4$.

As we did for left ideals,
we show that $|T_n| \le |S_n|$, by constructing an injective function $f\colon T_n \to S_n$.

We have $q \preceq n-1$ for all $q \in Q_n$, and $n-1$ is a fixed point of every transformation in $T_n$ and $S_n$.

For a transformation $t \in T_n$, consider the following cases:
\smallskip

\noin
\hglue 15 pt
{\bf Case 1:} $t \in S_n$. \\
The proof is the same as that of Case 1 of Lemma~\ref{lem:left_upper-bound}.
\smallskip

\noin
\hglue 15 pt
{\bf Case 2:} $t \not\in S_n$, and $0 t^2 \neq 0t$. \\
Let $0t=p \prec \dots \prec p t^k = p t^{k+1}$ be the chain defined from $p$.
\smallskip

\noin
\hglue 15 pt
$\bullet$ {\bf (a):} $p t^k \neq n-1$. \\
The proof is the same as that of Case 2 of Lemma~\ref{lem:left_upper-bound}.\smallskip

\noin
\hglue 15 pt
$\bullet$ {\bf (b):} $p t^k = n-1$ and $k \ge 2$.\\
Let $f(t)=s$, where $s$ is the transformation shown in Fig.~\ref{fig:2sided-case2b} and defined by
\begin{center}
$0 s = 0, \quad p t^i s = p t^{i-1} \text{ for }1 \le i \le k-1 , \quad p s = n-1,$

$q s = q t \text{ for the other states } q\in Q_n.$
\end{center}
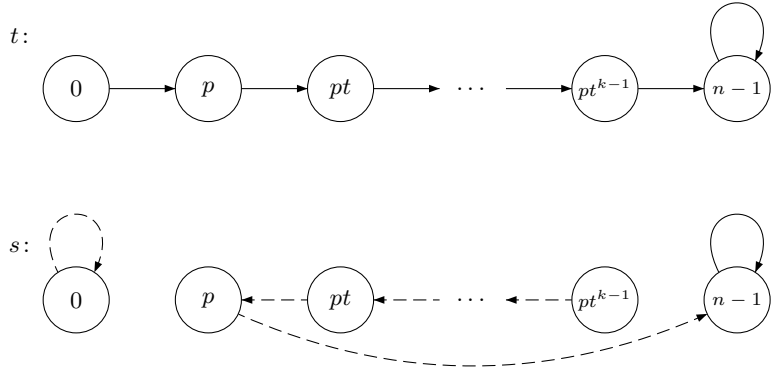
\begin{figure}[ht]
\unitlength 10pt\footnotesize
\begin{center}\begin{picture}(27,14)(0,0)
\gasset{Nh=2.5,Nw=2.5,Nmr=1.25,ELdist=0.5,loopdiam=2}
\node[Nframe=n](name)(0,12){$t\colon$}
\node(0)(2,10){0}
\node(p)(7,10){$p$}
\node(pt)(12,10){$pt$}
\node[Nframe=n](pdots)(17,10){$\dots$}
\node(pt^{k-1})(22,10){\scriptsize$pt^{k-1}$}
\node(n-1)(27,10){\scriptsize$n-1$}
\drawedge(0,p){}
\drawedge(p,pt){}
\drawedge(pt,pdots){}
\drawedge(pdots,pt^{k-1}){}
\drawedge(pt^{k-1},n-1){}
\drawloop[loopangle=90](n-1){}

\node[Nframe=n](name)(0,4){$s\colon$}
\node(0')(2,2){0}
\node(p')(7,2){$p$}
\node(pt')(12,2){$pt$}
\node[Nframe=n](pdots')(17,2){$\dots$}
\node(pt^{k-1}')(22,2){\scriptsize$pt^{k-1}$}
\node(n-1')(27,2){\scriptsize$n-1$}
\drawloop[loopangle=90,dash={.5 .25}{.25}](0'){}
\drawedge[ELdist=-1,dash={.5 .25}{.25}](pt',p'){}
\drawedge[ELdist=-1,dash={.5 .25}{.25}](pdots',pt'){}
\drawedge[ELdist=-1,dash={.5 .25}{.25}](pt^{k-1}',pdots'){}
\drawedge[curvedepth=-2.5,ELdist=-1,dash={.5 .25}{.25}](p',n-1'){}
\drawloop[loopangle=90](n-1'){}
\end{picture}\end{center}
\caption{Case~2(b) in the proof of Lemma~\ref{lem:2sided-ideals_upper-bound}.}
\label{fig:2sided-case2b}
\end{figure}
By Lemma~\ref{lem:2-sided}, $s \in S_n$.
We have $pt \succ p$, $pts = p$, and $ps = n-1$.
By Proposition~\ref{prop:chain}, $pts\succeq ps$, that is, $p \succeq n-1$, which contradicts the fact that $p \neq n-1$ (since $k \ge 2$), and $q \preceq n-1$ for all $q \in Q_n$.

Thus $s$ is not in $T_n$, and so it is different from the transformations of Case~1.

Observe that $s$ does not have a cycle with states strictly ordered by $\prec$, since no state from $\{0, p, pt, \ldots, pt^{k-1} \}$ can be in a cycle, and $t$ cannot have such a cycle. Hence $s$ is different from the transformations of Case~2(a).

In $s$, there is a unique state $q$ such that $q s = n-1$ and for which there exists a state 
$r$ such that $r \succ q$ and $r s = q$, and that this state $q$ must be $p$.
Indeed, if $q \neq p$, then $q t = q s = n-1$ by the definition of $s$.
From $r \succ q$, we have $rt\succeq qt=n-1$; hence $rs=rt=n-1$ and $rt\neq q$---a contradiction. Hence $q=p$.

By a similar argument, we show that there exists a unique state $q$ such that $q \succ p$, and $q s = p$, and that this state $q$ must be $pt$. If $q \neq pt$ then $qs = qt$. But $q \succ qt$ and $p=qt \succeq qt^2=pt$ contradicts that $p \prec pt$. Continuing in this way for $pt^2,\ldots,pt^{k-1}$ we show that there is a unique chain
$pt^{k-1} \stackrel{s }{\rightarrow} \dots \stackrel{s }{\rightarrow} pt \stackrel{s }{\rightarrow} p$.

If $t'\neq t$ is another transformation satisfying the conditions of this case, we define $s'$ like $s$.
Now suppose that $s = f(t) = f(t') = s'$.
Since we have a unique state $p$ such that $p s = n-1$ for which there exists a state $r$ such that $r \succ p$ and $r s = p$, we have $0 t = 0 t' = p$.
Also the chain of states $p,pt,pt^2,\dots,pt^{k-1}$ is unique in $s$ and $s'$ as we have shown above; so $p t^{i} = p {t'}^{i}$ for $i=1,\dots,k-1$. Since the other states are mapped by $s$ exactly as by $t$ and $t'$, we have $t = t'$.
\smallskip

\noin
\hglue 15 pt
$\bullet$ {\bf (c):} $p t = n-1$.\\
Let $P=\{0,p,n-1\}$. Since $n\ge 4$, there must be a state $r\not\in P$. 
If $p\prec r$ for all $r\not\in P$, then $n-1=pt\preceq rt$; hence $rt=n-1$ for all such $r$,
and $q t \in \{p,n-1\}$ for all $q \in Q_n$.
By Lemma~\ref{lem:2-sided}, there is a transformation in $S_n$ that maps $S \cup \{n-1\}$ to $n-1$, and $Q_n \setminus (S \cup \{n-1\})$ to $p$ for any $S \subseteq \{1,\ldots,n-2\}$. Thus $t \in S_n$---a contradiction.

In view of the above, there must exist a state $r\not\in P$ such that $p\not\preceq r$.
By Remark~\ref{rem:left-ideals_xy2}, we have $p\preceq rt$ and of course $rt\preceq n-1$. If $rt$ is $p$ or $n-1$ for all $r \not\in P$, we again have the situation described above, showing that $t \in S_n$.
Hence there must exist an $r\not\in P$ such that $p\not\preceq r$ and $p \prec r t \prec n-1$.

Also we claim that $t$ does not have a cycle. Indeed, if $p \preceq q$, then $q$ is mapped to $n-1$; if $p \not\preceq q$, then $q$ is mapped to a state $q t \succeq p$ and again $q$ cannot be in a cycle since the chain starting with $q$ ends in $n-1$.

Let $f(t)=s$, where $s$ is the transformation shown in Fig.~\ref{fig:2sided-case2c} and defined by
\begin{center}
$0 s = 0, \quad p s = r t, \quad (r t) s = p, \quad r s = 0,$

$q s = q t \text{ for the other states } q\in Q_n.$
\end{center}
\begin{figure}[ht]
\unitlength 10pt\footnotesize
\begin{center}\begin{picture}(18,16)(0,1)
\gasset{Nh=2.5,Nw=2.5,Nmr=1.25,ELdist=0.5,loopdiam=2}
\node[Nframe=n](name)(0,14){$t\colon$}
\node(0)(2,10){0}
\node(p)(10,10){$p$}
\node(n-1)(18,10){\scriptsize$n-1$}
\node(r)(6,14){$r$}
\node(rt)(14,14){$rt$}
\drawedge(0,p){}
\drawedge(p,n-1){}
\drawloop[loopangle=90](n-1){}
\drawedge(r,rt){}
\drawedge(rt,n-1){}

\node[Nframe=n](name)(0,6){$s\colon$}
\node(0')(2,2){0}
\node(p')(10,2){$p$}
\node(n-1')(18,2){\scriptsize$n-1$}
\node(r')(6,6){$r$}
\node(rt')(14,6){$rt$}
\drawloop[loopangle=90,dash={.5 .25}{.25}](0'){}
\drawedge[curvedepth=1,dash={.5 .25}{.25}](p',rt'){}
\drawedge[curvedepth=1,dash={.5 .25}{.25}](rt',p'){}
\drawedge[dash={.5 .25}{.25}](r',0'){}
\drawloop[loopangle=90](n-1'){}
\end{picture}\end{center}
\caption{Case~2(c) in the proof of Lemma~\ref{lem:2sided-ideals_upper-bound}.}
\label{fig:2sided-case2c}
\end{figure}
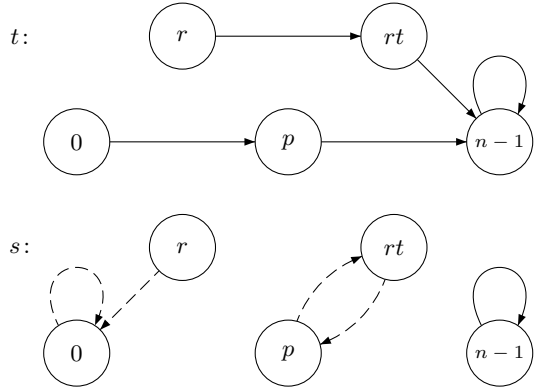

Since $s$ fixes both 0 and $n-1$, it is in $S_n$ by Lemma~\ref{lem:2-sided}. But $s$ is not in $T_n$, as we have the cycle $(p, r t)$ with $p \prec r t$. So $s$ is different from the transformations of Case~1.
Since $s$ maps a state other than $0$ to $0$, it is different from the transformations of Cases~2(a) and 2(b).

Observe that $t$ does not map any state to $0$; otherwise, if $qt = 0$ for some $q$, then $0 \prec p$ implies $q \prec 0$ by Proposition~\ref{prop:chain}, which contradicts that $0 \prec q$ from Remark~\ref{rem:left-ideals_xy2}, since $q$ is reachable from $0$ by some transformation.
Consequently, in $s$ there is the unique state $r\neq 0$ mapped to $0$. Also, as $t$ does not contain a cycle, the only cycle in $s$ must be $(p, r t)$.

If $t'\neq t$ is another transformation satisfying the conditions of this case, we define $s'$ like $s$. Now suppose that $s = f(t) = f(t') = s'$.
Because both $s$ and $s'$ have the unique non-fixed point $r$ mapped to $0$, $r = r'$. Also $s$ and $s'$ contain the unique cycle $(p, r t)$, $p \prec r t$. Thus $p = p'$, $p t = p t' = n-1$ and $r t = r t'$. It follows that $0 t = 0 t' = p$. Because $p \prec r t = r t'$, we have $(r t) t = (r t) t' = n-1$. The other states are mapped by $s$ exactly as by $t$ and $t'$, and so $t = t'$.
\smallskip

\noin
\hglue 15 pt
{\bf Case 3:} $t \not\in S_n$, $0 t = p \neq 0$, and $p t = p$. 

\noin
\hglue 15 pt
$\bullet$ {\bf (a):} $t$ has a cycle.\\
The proof is analogous to that of Case~3(a) in Lemma~\ref{lem:left_upper-bound}, 
but we need to ensure that $s$ is different from the $s$ of Cases~2(b) and 2(c).

Here there is the state $r$ such that $r \prec r s$, and $r s$ and $r s^2$ are not comparable under $\preceq$.
Consider a transformation $t'$ that fits in Case~2(b). Then in $s'$ every state $q = pt^i$ for $0 \le i \le k-1$, and $q = 0$, is mapped to a state comparable with $q$ under $\preceq$, and the other states are mapped as in $t'$. Since $t' \in T_n$ cannot contain a state $r'$ such that $r' \prec r' t$ and  $r' t$ and $r' t^2$ are not comparable under $\preceq$, it follows that $s'$ also does not contain such a state. Thus $s \neq s'$.

For a distinction from the transformations of Case~2(c) observe that $s$ does not map to 0 any state other than 0.\smallskip

\noin
\hglue 15 pt
$\bullet$ {\bf (b):} $t$ has no cycles and has a fixed point $r \not\in \{p,n-1\}$.\\
The proof is analogous to that of Case~3(b) in Lemma~\ref{lem:left_upper-bound}, 
but we need to ensure that $s$ is different from the $s$ of Cases~2(b) and 2(c).

Since $s$ maps to 0 a state other than 0, this case is distinct from Case~2(b).
Because $t$ does not have a cycle, and no state $q$ mapped to $0$ can be in a cycle in $s$, it follows that $s$ does not have a cycle. Thus $s$ is different from the transformations of Case~2(c).
\smallskip

\noin
\hglue 15 pt
$\bullet$ {\bf (c):} $t$ has neither a cycle nor a fixed point $r \not\in \{p,n-1\}$, and\ has a state $r \succ p$ mapped to $p$.\\
\noin The proof is analogous to that of Case~3(c) in Lemma~\ref{lem:left_upper-bound},
but we need to ensure that $s$ is different from the $s$ of Cases~2(b) and 2(c).

As before, since $s$ maps to 0 a state other than $0$, this case is distinct from Case~2(b).
In $s$, $0$ cannot be in a cycle, no state $q \succ p$ mapped to $0$ can be in a cycle and $p$ cannot be in a cycle as $ps = r$ and $rs = 0$. Since the other states are mapped as in $t$, $s$ does not have a cycle. Thus $s$ is different from the transformations of Case~2(c).
\smallskip

\noin
\hglue 15 pt
$\bullet$ {\bf (d):} $t$ has no cycles, no fixed point $r \not\in \{p,n-1\}$, and no state $r \succ p$ mapped to $p$, and has a state $r$ such that $p \prec r \prec n-1$ that is mapped to $n-1$.\\
Let $f(t)=s$, where $s$ is the transformation shown in Fig.~\ref{fig:2sided-case3d} and defined by
\begin{center}
$0 s = 0, \quad q s = q \text{ for states } q \text{ such that } q t = n-1, \quad p s = n-1$
$qs = qt \text{ for the other states } q\in Q_n.$
\end{center}
\begin{figure}[ht]
\unitlength 10pt\footnotesize
\begin{center}\begin{picture}(22,14)(0,0)
\gasset{Nh=2.5,Nw=2.5,Nmr=1.25,ELdist=0.5,loopdiam=2}
\node[Nframe=n](name)(0,12){$t\colon$}
\node(0)(2,10){0}
\node(p)(7,10){$p$}
\node(r)(12,10){$r$}
\node[Nframe=n](rdots)(17,10){$\dots$}
\node(n-1)(22,10){\scriptsize$n-1$}
\drawedge(0,p){}
\drawloop[loopangle=90](p){}
\drawloop[loopangle=90](n-1){}
\drawedge[curvedepth=-1.5,ELdist=-1](r,n-1){}
\drawedge(rdots,n-1){}

\node[Nframe=n](name)(0,4){$s\colon$}
\node(0')(2,2){0}
\node(p')(7,2){$p$}
\node(n-1')(22,2){\scriptsize$n-1$}
\node(r')(12,2){$r$}
\node[Nframe=n](rdots')(17,2){$\dots$}
\drawloop[loopangle=90,dash={.5 .25}{.25}](0'){}
\drawloop[loopangle=90,dash={.5 .25}{.25}](r'){}
\drawloop[loopangle=90,dash={.5 .25}{.25}](rdots'){}
\drawedge[curvedepth=-2.5,ELdist=-1,dash={.5 .25}{.25}](p',n-1'){}
\drawloop[loopangle=90](n-1'){}
\end{picture}\end{center}
\caption{Case~3(d) in the proof of Lemma~\ref{lem:2sided-ideals_upper-bound}.}
\label{fig:2sided-case3d}
\end{figure}
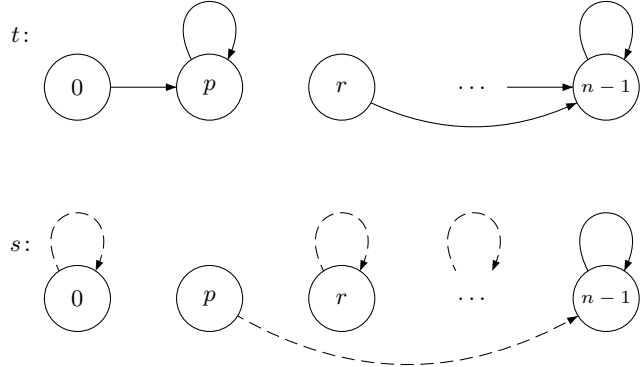

By Lemma~\ref{lem:2-sided}, $s \in S_n$. However, $s$ is not in $T_n$, as we have a fixed point $r$ such that $p \prec r \prec n-1$ and $p s = n-1$. So Proposition~\ref{prop:chain} yields $n-1 = p s \preceq r s = r$---a contradiction. Thus $s$ is different from the transformations of Case~1.

Transformation $s$ does not have any cycles, as $t$ does not have one in this case and fixed points $q$ and $p$ cannot be in a cycle. So $s$ is different from the transformations of Cases~2(a) and~3(a).
Also, since  $p$ is the unique state mapped to $n-1$ and there is no state $r \succ p$ mapped to $p$,  $s$ is different from the transformations of Case~2(b).
For a distinction from the transformations of Cases~2(c), 3(b) and 3(c), observe that $s$ does not map to 0 any state other than $0$.

If $t'\neq t$ is another transformation satisfying the conditions of this case, we define $s'$ like $s$.
Now suppose that $s = f(t) = f(t') = s'$.
Observe that $t$ does not have a fixed point other than $n-1$. So for every fixed point $q \not\in \{0,n-1\}$ of $s$ we have $q t = q t' = n-1$. Also, since $p$ is the unique state mapped to $n-1$ in $s$, $0 t = 0 t' = p$ and $p t = p t' = p$. The other states are mapped by $s$ as by $t$ and $t'$; so $t = t'$.
\smallskip

\hglue 5pt $\bullet$ {\bf All cases are covered:} \\ 
See the appendix for a list of the cases.
We need to ensure that any transformation $t$ fits in at least one case. It is clear that $t$ fits in Case~1 or 2 or 3. Any transformation from Case~2 fits in Case~2(a) or 2(b) or 2(c).
For Case~3, it is sufficient to show that if (i) $t \not\in S_n$ does not contain a fixed point $r\not\in \{p,n-1\}$, and (ii) there is no state $r$, $p \prec r \prec n-1$, mapped to $p$ or $n-1$, then $t$ has a cycle.

If there is no state $r$ such that $p \prec r \prec n-1$, then $q t \in \{p,n-1\}$ for all $q \in Q_n$, since $q t \succeq p$.
By the proof of Lemma~\ref{lem:2-sided} in $S_n$ for any $S \subseteq Q_n \setminus \{n-1\}$ there are all transformations that map $S \cup \{n-1\}$ to $n-1$, and the other states $Q_n \setminus (S \cup \{n-1\})$ to any state from $Q_n$; thus $t \in S_n$---a contradiction.

So let $t$ be a transformation that fits in Case~3 and satisfies~(i) and~(ii), and let $r$ be some state such that $p\prec r \prec n-1$.
Consider the sequence $r, rt, rt^2, \ldots$.
By Remark~\ref{rem:left-ideals_xy2}, $p \preceq rt^i$ for all $i \ge 0$.
If $rt^k \in \{p,n-1\}$ for some $k \ge 1$, let $k$ be the smallest such number, then $rt^{k-1} \notin \{p,n-1\}$; we have $p \prec rt^{k-1} \prec n-1$ and $(rt^{k-1}) t \in \{p,n-1\}$, contradicting (ii).

Since $p$ and $n-1$ are the only fixed points by (i), we have $rt^i \neq rt^{i-1}$. Since there are finitely many states, $rt^i = rt^j$ for some $i$ and $j$ such that $0 \le i < j-1$, and so the states $rt^i,rt^{i+1}\ldots,rt^j=rt^i$ form a cycle.
\end{proof}

\begin{theorem}\label{thm:2sided-ideals_unique-maximal}
If $T_n$ has size $n^{n-2} + (n-2) 2^{n-2} +1$, then $T_n=S_n$.
\end{theorem}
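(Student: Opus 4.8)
The plan is to follow the template of Theorem~\ref{thm:left-ideals_unique-maximal}. The cases $n\le 3$ are checked directly (the bound is small and the maximal transition semigroups were written out explicitly in Lemma~\ref{lem:2-sided}). So assume $n\ge 4$ and consider a maximal-length chain of states of $\cD_n$ strictly ordered by $\prec$; since $L$ is a two-sided ideal this chain has length at least $3$. If its length is exactly $3$, then Lemma~\ref{lem:chain3} gives $T_n\subseteq S_n$ and $|T_n|\le n^{n-2}+(n-2)2^{n-2}+1=|S_n|$; combined with the hypothesis $|T_n|=|S_n|$ and finiteness, this forces $T_n=S_n$. The rest of the argument shows that a maximal chain of length at least $4$ is incompatible with $|T_n|=|S_n|$.

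Suppose then that some maximal chain has length at least $4$, and fix states $0\prec p\prec r\prec n-1$. Put $R=\{q\in Q_n\mid p\prec q\prec n-1\}$ (so $r\in R$) and $X=Q_n\setminus(\{0,p,n-1\}\cup R)$. Recall the injection $f\colon T_n\to S_n$ built in Lemma~\ref{lem:2sided-ideals_upper-bound}; it suffices to produce a single transformation $s\in S_n\setminus f(T_n)$, since then $|T_n|=|f(T_n)|<|S_n|$. Mirroring Case~3(c) of that lemma, I would define $s$ by $0s=0$, $(n-1)s=n-1$, $ps=r$, $qs=0$ for $q\in R$, and $qs=p$ for $q\in X$. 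Because $s$ fixes both $0$ and $n-1$, Lemma~\ref{lem:2-sided} gives $s\in S_n$. The transformation that plays the role of the ``underlying $t$'' for these Case~3(c) rules is $u$, the transformation sending every state except $n-1$ to $p$ and fixing $n-1$; by Lemma~\ref{lem:2-sided} (taking the subset $S=\emptyset$ in the construction there) we have $u\in S_n$, and $u$ satisfies every condition of Case~3(c) except the requirement $u\notin S_n$. This choice replaces the constant transformation $(Q_n\to p)$ used for left ideals, which is unavailable here since every transformation in $S_n$ must fix $n-1$.

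It then remains to check, case by case in the definition of $f$, that $s\ne f(t)$ for all $t\in T_n$. For Case~1 one observes that $p\prec r$ together with $ps=r$ and $rs=0$ violates Proposition~\ref{prop:chain}, so $s\notin T_n$. The transformations $f(t)$ arising in Cases~2(a), 2(c), 3(a) each contain a cycle, while every state of $s$ other than $0$ and $n-1$ is eventually sent to $0$, so $s$ has no cycle; the transformations of Cases~2(b) and 3(d) send some state other than $n-1$ to $n-1$, while $s$ sends only $n-1$ there; and the transformation of Case~3(b) has a fixed point outside $\{0,n-1\}$, while $s$ has none. Finally, for Case~3(c) itself one verifies (exactly as in Lemma~\ref{lem:2sided-ideals_upper-bound}) that $p$ is the unique state with $0\prec p\prec ps$ and $ps^2=0$; hence any $t'\in T_n$ in Case~3(c) with $f(t')=s$ would, by the injectivity argument of that case --- which never uses ``$t'\notin S_n$'' --- have to coincide with $u$, and this contradicts $u\in S_n$. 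Thus $s\notin f(T_n)$, so $|T_n|<|S_n|$, contradicting the hypothesis; the maximal chain therefore has length $3$ and $T_n=S_n$.

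I expect the last paragraph to be the main obstacle: for the concrete $s$ above one must re-derive, for each of the eight cases of $f$, the appropriate distinguishing invariant (presence of a cycle, mapping some state to $0$ or to $n-1$, an extra fixed point, or the uniqueness of the state $q$ with $0\prec q\prec qs$ and $qs^2=0$) by adapting the arguments of Lemma~\ref{lem:2sided-ideals_upper-bound}, and one must handle the degenerate shapes ($X=\emptyset$, or $R=\{r\}$, which already occurs at $n=4$) so that the construction and every distinction still go through.
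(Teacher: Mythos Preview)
Your proposal is correct and follows essentially the same approach as the paper: both reduce to Lemma~\ref{lem:chain3} when the maximal chain has length~3, and for longer chains both pick $0\prec p\prec r\prec n-1$, take $u$ to be the transformation sending $Q_n\setminus\{n-1\}$ to $p$ and fixing $n-1$, apply the Case~3(c) rule to produce $s\in S_n$, and argue $s\notin f(T_n)$ by running through the cases of Lemma~\ref{lem:2sided-ideals_upper-bound}. Your case-by-case invariants (presence of a cycle for 2(a), 2(c), 3(a); a preimage of $n-1$ other than $n-1$ for 2(b), 3(d); a fixed point outside $\{0,n-1\}$ for 3(b)) differ in wording from the paper's but are all valid for this particular $s$, and your handling of Case~3(c) via the reconstruction-to-$u$ argument is exactly the paper's point.
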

\begin{proof}
The proof is very similar to that of Theorem~\ref{thm:left-ideals_unique-maximal}.

Consider a maximal-length chain of states strictly ordered by $\prec$ in $\cD_n$. If its length is 3, then by Lemma~\ref{lem:chain3} $T_n$ is a subsemigroup of $S_n$. Thus only $T_n = S_n$ reaches the bound. 

If there is a chain of length 4, 
then there are states $p$ and $r$ such that $0 \prec p \prec r \prec n-1$. Let $f$ be the injective function from Lemma~\ref{lem:2sided-ideals_upper-bound}.
Consider the transformation $u$ that maps $Q_n \setminus \{n-1\}$ to $p$ and fixes $n-1$.
Let $s$ be defined from $u$ in Case~3(c) of the proof of Lemma~\ref{lem:2sided-ideals_upper-bound}. The rest of the proof follows the proof of Theorem~\ref{thm:left-ideals_unique-maximal} with Case~3(d) of Lemma~\ref{lem:2sided-ideals_upper-bound} added.
\end{proof}

\begin{proposition}\label{prop:2sided-ideals_maximal_generators}
For $n \ge 4$, the minimal number of generators of the transition semigroup $T_n$ is 6.
\end{proposition}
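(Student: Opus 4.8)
The upper bound is immediate: by Lemma~\ref{lem:2-sided} the transition semigroup $S_n$ is generated by the six transformations induced by the letters of $\cW_n$ in Definition~\ref{def:2-sided}, and by Theorem~\ref{thm:2sided-ideals_unique-maximal} this $S_n$ is the only transition semigroup of maximal size, so six generators always suffice. For the lower bound we may assume $T_n=S_n$, and we use the description of $S_n$ from Lemma~\ref{lem:2-sided}: it is the disjoint union of $M$, the monoid of all $n^{n-2}$ transformations of $Q_n$ that fix both $0$ and $n-1$, and the remaining $(n-2)2^{n-2}+1$ transformations, whose images have the form $\{p,n-1\}$ with $p\in\{1,\dots,n-2\}$ or $\{n-1\}$. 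Thus every transformation in $S_n$ fixes $n-1$, every transformation outside $M$ has rank at most $2$, and every transformation of rank at least $3$ lies in $M$.

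Let $G$ be an arbitrary generating set of $S_n$. The plan is a reduction in three steps. First, $G$ contains at least one element outside $M$: otherwise every product fixes $0$, whereas $S_n\setminus M$ is nonempty. Second, $G\cap M$ generates $M$: a product equal to a transformation of rank at least $3$ has only factors of rank at least $3$, hence only factors in $M$, and (for $n\ge 5$) $M$ is generated by its transformations of rank at least $3$, since each rank-$2$ map in $M$ is a composite of maps of rank at least $3$ that lie in $M$. Consequently $|G|\ge|G\cap M|+1$, and it remains to show that $M$ cannot be generated by fewer than five elements; then $|G|\ge 6$, which together with the upper bound gives the claim. (The case $n=4$ is handled directly.)

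That $M$ needs five generators is the ``two fixed point'' analogue of Proposition~\ref{prop:piccard}, and I would prove it by exhibiting five pairwise distinct forced generators. Two of them, $a$ and $b$, are permutations of $\{1,\dots,n-2\}$: the permutations lying in $M$ form the symmetric group on $\{1,\dots,n-2\}$, a product equal to a permutation has only permutation factors, and two generators are required for that group by Proposition~\ref{prop:piccard}. A third, $c$, is a non-permutation stabilizing $\{1,\dots,n-2\}$ setwise: the transformations of $M$ that stabilize $\{1,\dots,n-2\}$ form a copy of $T_{n-2}$, which needs a third generator, and the closure argument below forces this generator to have neither a \emph{$0$-edge} (a nonzero state mapped to $0$) nor an \emph{$(n-1)$-edge} (a state of $\{0,\dots,n-2\}$ mapped to $n-1$). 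The fourth, $d$, has a $0$-edge but no $(n-1)$-edge, and the fifth, $f$, an $(n-1)$-edge but no $0$-edge; both are forced because $M$ contains suitable rank-$(n-1)$ transformations with a $0$-edge, respectively an $(n-1)$-edge, and the closure argument shows that a generator carrying the ``wrong'' extra edge cannot occur in any product realizing a transformation that lacks that edge. Since $a,b$ are permutations, $c$ has no special edge, $d$ only a $0$-edge, and $f$ only an $(n-1)$-edge, these five are distinct.

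The main obstacle is the closure lemma on which Steps~3--5 rest: if $g_1\cdots g_k$ equals a rank-$\ge 3$ transformation with no $0$-edge (respectively no $(n-1)$-edge), then no factor $g_j$ for which the prefix $g_1\cdots g_{j-1}$ acts surjectively on $Q_n$ can carry a $0$-edge (respectively an $(n-1)$-edge), since such an edge would be transmitted through the surjective prefix and then preserved by the suffix, which fixes $0$ and $n-1$. Ensuring that the factor one needs to control really does follow a surjective prefix --- by taking it to be the first non-permutation in the product and keeping the ranks large --- is the part that requires care, and separating the $0$-edge role from the $(n-1)$-edge role in $M$ is where this is most delicate. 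The remainder is bookkeeping along the lines of the proof of Proposition~\ref{prop:left-ideals_maximal_generators}.
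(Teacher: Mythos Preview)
Your approach is correct but takes a genuinely different decomposition from the paper's. The paper's proof is three lines: the transformations in $T_n=S_n$ that send no state of $Q_n\setminus\{n-1\}$ to $n-1$ form a subsemigroup whose restriction to $Q_n\setminus\{n-1\}$ is exactly the maximal left-ideal transition semigroup on $n-1$ states, so by Proposition~\ref{prop:left-ideals_maximal_generators} five generators without an $(n-1)$-edge are forced, and one further generator with such an edge is needed. You instead first split off the elements not fixing $0$ (which all have rank $\le 2$) via a rank argument, reduce to the monoid $M$ of transformations fixing both $0$ and $n-1$, and then prove from scratch that $M$ needs five generators. The paper's route is shorter because it reuses Proposition~\ref{prop:left-ideals_maximal_generators} as a black box, peeling off only the sink $n-1$; your route peels off the non-$M$ part first and then has to redo the left-ideal analysis with two sinks instead of one.

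One step in your argument for $M$ is not settled by the closure lemma as you phrase it. To exhibit $d$ you factor a rank-$(n-1)$ element of $M$ with a $0$-edge but no $(n-1)$-edge and look at the first non-permutation factor; your closure lemma (surjective prefix) correctly shows this factor has no $(n-1)$-edge, but it need not have a $0$-edge --- it may simply lie in the setwise stabiliser of $\{1,\dots,n-2\}$, so this factor could coincide with $c$ rather than give you $d$. The clean fix is a rank count you do not state: in such a factorisation every factor has rank $\ge n-1$, and an element of $M$ of rank $\ge n-1$ cannot carry \emph{both} a $0$-edge and an $(n-1)$-edge, since each such edge forces a separate fibre of size $\ge 2$ (over $0$ and over $n-1$), hence rank $\le n-2$. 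Some factor must carry a $0$-edge; that factor is your $d$, automatically without an $(n-1)$-edge. With this observation the five generators $a,b,c,d,f$ are pairwise distinct and your argument goes through.
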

\begin{proof}
From Proposition~\ref{prop:left-ideals_maximal_generators} we know that the transformations in $T_n$ restricted to $Q_n \setminus \{n-1\}$ require 5 generators.
These generators in $T_n$ do not map any state from $Q_n \setminus \{n-1\}$ to $n-1$, and must fix $n-1$.
Hence, we need one more generator that map a state from $Q_n \setminus \{n-1\}$ to $n-1$.
\end{proof}

We are now in a position to prove our main theorem of this section.

\begin{theorem}[Two-Sided Ideals, Factor-Closed Languages]
\label{thm:2sided}
Suppose that $L\subseteq\Sig^*$ and $\kappa(L)=n>1$.
If $L$ is a two-sided ideal or a factor-closed language, then  $\sig(L)\le n^{n-2} + (n-2) 2^{n-2} +1$.
This bound is tight for $n=2$ if $|\Sig|\ge 2$, for $n=3$ if $|\Sig|\ge 3$, for $n\ge 4$ if $|\Sig|\ge 5$, and for $n\ge 5$ if $|\Sig|\ge 6$.
Moreover, the sizes of the alphabet cannot be reduced.
\end{theorem}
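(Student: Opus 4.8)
The plan is to assemble the lemmas and propositions established earlier in this section, exactly as Theorems~\ref{thm:right} and~\ref{thm:LeftIdeal4} were deduced from theirs. First I would obtain the upper bound from Lemma~\ref{lem:2sided-ideals_upper-bound}: if $L$ is a two-sided ideal with $\kappa(L)=n>1$ then $\sig(L)\le n^{n-2}+(n-2)2^{n-2}+1$, and since a factor-closed language is the complement of a two-sided ideal and syntactic complexity is invariant under complementation, the same bound holds there. So only tightness and the minimality of the alphabet remain.

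For tightness I would invoke the witness $\cW_n$ of Definition~\ref{def:2-sided}: by Lemma~\ref{lem:2-sided} it is a minimal DFA of a two-sided ideal with $n$ quotients whose transition semigroup has size exactly $n^{n-2}+(n-2)2^{n-2}+1$. Reading the alphabet off Definition~\ref{def:2-sided} yields the claimed pairs $(n,|\Sig|)$: two letters for $n=2$, three for $n=3$, five for $n=4$ (because inputs $a$ and $b$ then induce the same transformation), and six for $n\ge 5$.

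For the claim that these sizes cannot be reduced I would split on $n$. For $n\in\{2,3\}$ I would argue directly, by the same finite enumeration of the initially-aperiodic right-ideal transformations on $Q_n$ already used inside the proof of Lemma~\ref{lem:2sided-ideals_upper-bound}, that no smaller alphabet attains the bound. For $n\ge 4$, Theorem~\ref{thm:2sided-ideals_unique-maximal} forces every language meeting the bound to have transition semigroup isomorphic to $S_n=T_n$, so the alphabet of its minimal DFA must contain a generating set of $T_n$; Proposition~\ref{prop:2sided-ideals_maximal_generators} then supplies the lower bound of six generators for $n\ge 5$, while the boundary value $n=4$ --- where $\cW_4$ already attains the bound with only five letters, the inputs $a$ and $b$ coinciding --- is settled by a short separate check that four letters do not suffice. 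The factor-closed case follows by complementation, as before.

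The main obstacle here is purely organisational: the real work --- the injective-function argument bounding $|T_n|$, the uniqueness of the maximal transition semigroup, and the generator count --- has already been carried out in Lemmas~\ref{lem:2-sided} and~\ref{lem:2sided-ideals_upper-bound}, Theorem~\ref{thm:2sided-ideals_unique-maximal} and Proposition~\ref{prop:2sided-ideals_maximal_generators}. The only delicate point is keeping the small cases and the boundary value $n=4$ consistent with the general count, since the generator bound reads six for $n\ge 5$ but the $n=4$ witness exploits the collapse of $a$ and $b$ to use only five letters.
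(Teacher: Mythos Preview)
Your proposal is correct and matches the paper's proof almost exactly: the paper simply cites Lemmas~\ref{lem:2-sided} and~\ref{lem:2sided-ideals_upper-bound} for tightness and the upper bound, dispatches $n\le 4$ by direct verification, and for $n\ge 5$ combines Theorem~\ref{thm:2sided-ideals_unique-maximal} with Proposition~\ref{prop:2sided-ideals_maximal_generators}. Your treatment of the boundary case $n=4$ is in fact more careful than the paper's, since Proposition~\ref{prop:2sided-ideals_maximal_generators} (despite its stated hypothesis) relies on Proposition~\ref{prop:left-ideals_maximal_generators} applied to $n-1$ states and therefore only genuinely yields six generators once $n\ge 5$.
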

\begin{proof}
This follows from Lemmas~\ref{lem:2-sided} and~\ref{lem:2sided-ideals_upper-bound}.
It is easy to verify that the size of the alphabet cannot be reduced if $n \le 4$.
For $n \ge 5$, by Theorem~\ref{thm:2sided-ideals_unique-maximal} only languages $L$ whose quotient automaton has transition semigroup isomorphic to $T_n$ meet the bound, and by Proposition~\ref{prop:2sided-ideals_maximal_generators}, semigroup $T_n$ requires 6 generators.
\end{proof}

\section{Conclusions}\label{sec:conclusions}

We have found tight upper bounds on the syntactic complexity of right, left, and two-sided ideals.
Despite the fact that the Myhill congruence has left-right symmetry, there are significant differences between left and right ideals. 
We have shown that in each of the three cases the maximal transition semigroup is unique.

In our proof for left and two-sided ideals we exhibited an injective function from the transition semigroup of a minimal DFA of an arbitrary left, right, two-sided ideal language to the transition semigroup of the witness DFA attaining the upper bound for these languages. This approach is generally applicable for other subclasses of regular languages.
For example, in~\cite{BrSz15} we have used this method to establish the upper bound for suffix-free languages.

\section*{Acknowledgements}
          
This work was supported by the Natural Sciences and Engineering Research Council of Canada (NSERC)
grant No.~OGP000087, the National Science Centre, Poland under project number 2013/09/N/ST6/01194,
an NSERC Postgraduate Scholarship, and a Graduate Award from the Department of Computer Science, University of Toronto.
It was completed during the internship of Marek Szyku{\l}a at the University of Waterloo, which was co-financed by the European Union under the European Social Fund's project ``International computer science and applied mathematics for business study programme at the University of Wroc{\l}aw''.\\

\bibliographystyle{spmpsci}
\providecommand{\noopsort}[1]{}

\newpage
\section*{Appendix}

\subsection*{List of the cases in the proof of Lemma~\ref{lem:left_upper-bound}.}

{\setstretch{1.1}
\begin{enumerate}[leftmargin=*,widest=\textbf{Case~1}]
\item[\textbf{Case~1}:] $t \in S_n$.
\medskip
\item[\textbf{Case~2}:] $t \not\in S_n$ and $0t^2 \neq 0t$.
\medskip
\item[\textbf{Case~3}:] $t \not\in S_n$ and $0t^2 = 0t$.
\begin{enumerate}[leftmargin=*,widest=\textbf{Case~3}]
\item[\textbf{Case~3(a)}:] $t$ has a cycle.
\item[\textbf{Case~3(b)}:] $t$ has no cycles and has a fixed point $r\neq p$.
\item[\textbf{Case~3(c)}:] $t$ has no cycles, has no fixed point $r\neq p$, and there is a state $r$ such that $p\prec r$ with $rt=p$.
\end{enumerate}
\end{enumerate}}

\begin{figure}[htb]
\unitlength 6.7pt\scriptsize
\gasset{Nh=2.5,Nw=2.5,Nmr=1.25,ELdist=0.5,loopdiam=1.5}
\begin{center}\begin{picture}(29,12)(0,-1)
\node[Nframe=n](name)(14,8){Case 1: $\textcolor{red}{s} = t$.}
\end{picture}\begin{picture}(29,12)(-2,-1)
\node[Nframe=n](name)(14,8){Case 2:}
\node(0)(2,2){0}
\node(p)(8,2){$p$}
\node(pt)(14,2){$pt$}
\node[Nframe=n](pdots)(20,2){$\dots$}
\node(pt^k)(26,2){$pt^k$}
\drawedge(0,p){}
\drawedge(p,pt){}
\drawedge(pt,pdots){}
\drawedge(pdots,pt^k){}
\drawloop[loopangle=90](pt^k){}
\drawloop[linecolor=red,dash={.5 .25}{.25},loopangle=90](0){}
\drawedge[linecolor=red,dash={.5 .25}{.25},curvedepth=2](pt^k,p){}
\end{picture}\end{center}
\begin{center}\begin{picture}(29,9)(0,-1)
\node[Nframe=n](name)(14,8){Case 3(a):}
\node(0)(2,2){0}
\node(p)(8,2){$p$}
\node(r)(14,2){$r$}
\node(rt)(16.5,5){$rt$}
\node[Nframe=n](rdots)(19,2){$\dots$}
\drawedge(0,p){}
\drawloop[loopangle=90](p){}
\drawedge[curvedepth=1](r,rt){}
\drawedge[curvedepth=1](rt,rdots){}
\drawedge[curvedepth=1](rdots,r){}
\drawloop[linecolor=red,dash={.5 .25}{.25},loopangle=90](0){}
\drawedge[linecolor=red,dash={.5 .25}{.25}](p,r){}
\end{picture}\begin{picture}(29,9)(-2,-1)
\node[Nframe=n](name)(14,8){Case 3(b):}
\node(0)(2,2){0}
\node(p)(10,2){$p$}
\node(r)(18,2){$r$}
\node[Nframe=n](rdots)(26,2){$\dots$}
\drawedge(0,p){}
\drawloop[loopangle=90](p){}
\drawloop[loopangle=90](r){}
\drawloop[loopangle=90](rdots){}
\drawloop[linecolor=red,dash={.5 .25}{.25},loopangle=90](0){}
\drawedge[linecolor=red,dash={.5 .25}{.25},loopangle=90,curvedepth=2,ELpos=30](r,0){}
\drawedge[linecolor=red,dash={.5 .25}{.25},loopangle=90,curvedepth=3,ELpos=30,exo=-1](rdots,0){}
\end{picture}\end{center}
\begin{center}\begin{picture}(29,9)(0,-1)
\node[Nframe=n](name)(14,8){Case 3(c):}
\node(0)(2,2){0}
\node(p)(10,2){$p$}
\node(r)(18,2){$r$}
\node[Nframe=n](rdots)(26,2){$\dots$}
\drawedge(0,p){}
\drawloop[loopangle=90](p){}
\drawedge(r,p){}
\drawedge[curvedepth=-3](rdots,p){}
\drawloop[linecolor=red,dash={.5 .25}{.25},loopangle=90](0){}
\drawedge[linecolor=red,dash={.5 .25}{.25},curvedepth=2,ELpos=30](r,0){}
\drawedge[linecolor=red,dash={.5 .25}{.25},curvedepth=3,ELpos=30,exo=-1](rdots,0){}
\end{picture}\begin{picture}(29,9)(0,-1)
\end{picture}\end{center}
\caption{Map of the cases in the proof of Lemma~\ref{lem:left_upper-bound}. The transitions of $t$ are represented by solid lines, and the transitions of $s$ by dashed red lines.}
\end{figure}


\newpage
\subsection*{List of the cases in the proof of Lemma~\ref{lem:2sided-ideals_upper-bound}.}

{\setstretch{1.1}
\begin{enumerate}[leftmargin=*,widest=\textbf{Case~1}]
\item[\textbf{Case~1}:] $t \in S_n$.
\medskip
\item[\textbf{Case~2}:] $t \not\in S_n$ and $0 t^2 \neq 0t$.
\begin{enumerate}[leftmargin=*,widest=\textbf{Case~2}]
\item[\textbf{Case~2(a)}:] $t \not\in S_n$ and $0 t^2 \neq 0t$.
\item[\textbf{Case~2(b)}:] $p t^k = n-1$ and $k \ge 2$.
\item[\textbf{Case~2(c)}:] $p t = n-1$.
\end{enumerate}
\medskip
\item[\textbf{Case~3}:] $t \not\in S_n$, $0 t = p \neq 0$ and $p t = p$.
\begin{enumerate}[leftmargin=*,widest=\textbf{Case~3}]
\item[\textbf{Case~3(a)}:] $t$ has a cycle.
\item[\textbf{Case~3(b)}:] $t$ has no cycles and has a fixed point $r \not\in \{p,n-1\}$.
\item[\textbf{Case~3(c)}:] $t$ has neither a cycle nor a fixed point $r \not\in \{p,n-1\}$, and has a state $r \succ p$ mapped to $p$.
\end{enumerate}
\end{enumerate}}

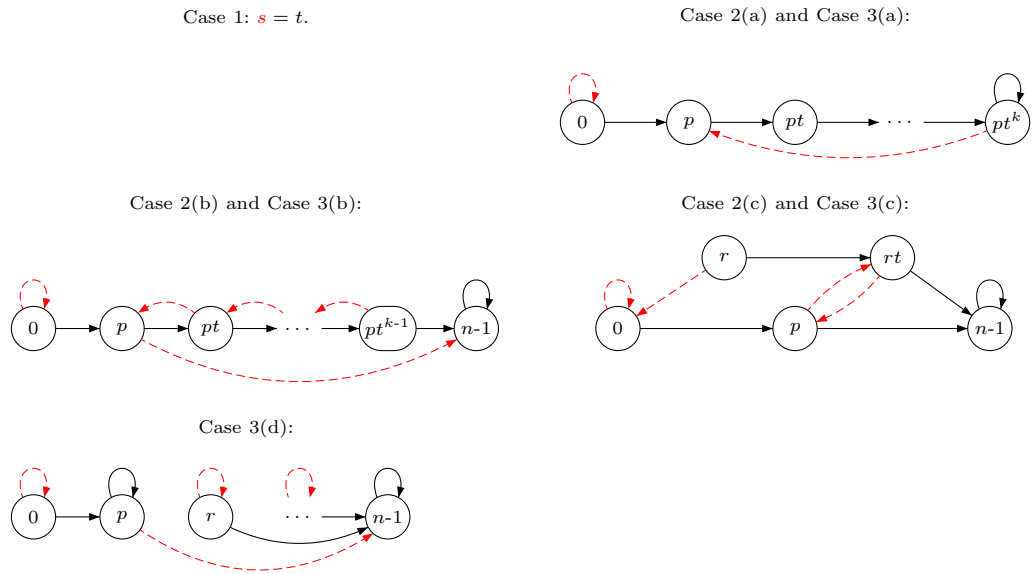
\begin{figure}[htb]
\unitlength 6.7pt\scriptsize
\gasset{Nh=2.5,Nw=2.5,Nmr=1.25,ELdist=0.5,loopdiam=1.5}
\begin{center}\begin{picture}(29,12)(0,-1)
\node[Nframe=n](name)(14,8){Case 1: $\textcolor{red}{s} = t$.}
\end{picture}\begin{picture}(29,12)(-2,-1)
\node[Nframe=n](name)(14,8){Case 2(a) and Case 3(a):}
\node(0)(2,2){0}
\node(p)(8,2){$p$}
\node(pt)(14,2){$pt$}
\node[Nframe=n](pdots)(20,2){$\dots$}
\node(pt^k)(26,2){$pt^k$}
\drawedge(0,p){}
\drawedge(p,pt){}
\drawedge(pt,pdots){}
\drawedge(pdots,pt^k){}
\drawloop[loopangle=90](pt^k){}
\drawloop[linecolor=red,dash={.5 .25}{.25},loopangle=90](0){}
\drawedge[linecolor=red,dash={.5 .25}{.25},curvedepth=2](pt^k,p){}
\end{picture}\end{center}
\begin{center}\begin{picture}(29,10)(0,-1)
\node[Nframe=n](name)(14,9){Case 2(b) and Case 3(b):}
\node(0)(2,2){0}
\node(p)(7,2){$p$}
\node(pt)(12,2){$pt$}
\node[Nframe=n](pdots)(17,2){$\dots$}
\node[Nw=3.2](pt^{k-1})(22,2){\scriptsize$pt^{k\text{-}1}$}
\node(n-1)(27,2){\scriptsize$n\text{-}1$}
\drawedge(0,p){}
\drawedge(p,pt){}
\drawedge(pt,pdots){}
\drawedge(pdots,pt^{k-1}){}
\drawedge(pt^{k-1},n-1){}
\drawloop[loopangle=90](n-1){}
\drawloop[linecolor=red,dash={.5 .25}{.25},loopangle=90](0){}
\drawedge[linecolor=red,dash={.5 .25}{.25},curvedepth=-1.5,ELdist=-1](pt,p){}
\drawedge[linecolor=red,dash={.5 .25}{.25},curvedepth=-1.5,ELdist=-1](pdots,pt){}
\drawedge[linecolor=red,dash={.5 .25}{.25},curvedepth=-1.5,ELdist=-1](pt^{k-1},pdots){}
\drawedge[linecolor=red,dash={.5 .25}{.25},curvedepth=-3,ELdist=-1](p,n-1){}
\end{picture}\begin{picture}(29,10)(-2,-1)
\node[Nframe=n](name)(14,9){Case 2(c) and Case 3(c):}
\node(0)(4,2){0}
\node(p)(14,2){$p$}
\node(n-1)(25,2){\scriptsize$n\text{-}1$}
\node(r)(10,6){$r$}
\node(rt)(19.5,6){$rt$}
\drawedge(0,p){}
\drawedge(p,n-1){}
\drawloop[loopangle=90](n-1){}
\drawedge(r,rt){}
\drawedge(rt,n-1){}
\drawloop[linecolor=red,dash={.5 .25}{.25},loopangle=90](0){}
\drawedge[linecolor=red,dash={.5 .25}{.25},curvedepth=.7](p,rt){}
\drawedge[linecolor=red,dash={.5 .25}{.25},curvedepth=.7](rt,p){}
\drawedge[linecolor=red,dash={.5 .25}{.25}](r,0){}
\end{picture}\end{center}
\begin{center}\begin{picture}(29,9)(0,-1)
\node[Nframe=n](name)(14,7){Case 3(d):}
\node(0)(2,2){0}
\node(p)(7,2){$p$}
\node(r)(12,2){$r$}
\node[Nframe=n](rdots)(17,2){$\dots$}
\node(n-1)(22,2){\scriptsize$n\text{-}1$}
\drawedge(0,p){}
\drawloop[loopangle=90](p){}
\drawloop[loopangle=90](n-1){}
\drawedge[curvedepth=-1.5,ELdist=-1](r,n-1){}
\drawedge(rdots,n-1){}
\drawloop[linecolor=red,dash={.5 .25}{.25},loopangle=90](0){}
\drawloop[linecolor=red,dash={.5 .25}{.25},loopangle=90](r){}
\drawloop[linecolor=red,dash={.5 .25}{.25},loopangle=90](rdots){}
\drawedge[linecolor=red,dash={.5 .25}{.25},curvedepth=-3,ELdist=-1,exo=.5](p,n-1){}
\end{picture}\begin{picture}(29,9)(-2,-1)
\end{picture}\end{center}
\caption{Map of the cases in the proof of Lemma~\ref{lem:2sided-ideals_upper-bound}. The transitions of $t$ are represented by solid lines, and the transitions of $s$ by dashed red lines.}
\end{figure}
\end{document}